\begin{document}
\begin{frontmatter}
  \renewcommand{\thefootnote}{\fnsymbol{footnote}}
  \title{Solving Dependency Quantified Boolean Formulas \\ Using Quantifier Localization\textsuperscript{\footnotemark[1]}}
  \author{Aile Ge-Ernst$^1$}
  \author{Christoph Scholl$^1$}
  \author{Juraj S\'\i\v c$^{2,3}$}
  \author{Ralf Wimmer$^{4,1}$}
  \address{$^1$Department of Computer Science, Albert-Ludwigs-Universit\"at Freiburg \\
    Freiburg im Breisgau, Germany \\
    {\upshape\texttt{\{\href{mailto:geernsta@informatik.uni-freiburg.de}{geernsta},
                       \href{mailto:scholl@informatik.uni-freiburg.de}{scholl},
                       \href{mailto:wimmer@informatik.uni-freiburg.de}{wimmer}\}@informatik.uni-freiburg.de}}\\[\baselineskip]
    $^2$Faculty of Informatics, Masaryk University \\
    Brno, Czech Republic \\[\baselineskip]
    $^3$Faculty of Information Technology, Brno University of Technology \\
    Brno, Czech Republic \\
    {\upshape\texttt{\href{mailto:sicjuraj@fit.vutbr.cz}{sicjuraj@fit.vutbr.cz}}}\\[\baselineskip]
    $^4$Concept Engineering GmbH, Freiburg im Breisgau, Germany \\
    {\upshape\url{https://www.concept.de}}
  }
  \date{\today}

  \begin{abstract}
    Dependency quantified Boolean formulas (DQBFs) are a powerful formalism, which
    subsumes quantified Boolean formulas (QBFs) and allows an explicit specification
    of dependencies of existential variables on universal variables.
    Driven by the needs of various applications which
    can be encoded by DQBFs in a natural, compact, and elegant way,
    research on DQBF solving has emerged in the past few years.
    However, research focused on closed DQBFs in prenex form (where all quantifiers
    are placed in front of a propositional formula),
    while non-prenex DQBFs have almost not been studied in the literature.
    In this paper, we provide a formal definition for syntax and semantics of
    non-closed non-prenex DQBFs and prove useful properties enabling quantifier localization.
    Moreover, we make use of our theory by integrating quantifier localization into
    a state-of-the-art DQBF solver. Experiments with prenex DQBF benchmarks, including all instances from
    the QBFEVAL'18--'20 competitions,
    clearly show that quantifier localization pays off in this context.
\end{abstract}

  \begin{keyword}
    Dependency Quantified Boolean Formulas  \sep Henkin quantifier \sep quantifier localization \sep satisfiability \sep solver technology
  \end{keyword}
\end{frontmatter}

\renewcommand{\thefootnote}{\fnsymbol{footnote}}
\footnotetext[1]{This work is an extended version of \cite{geernst-et-al-fmcad-2019}.
We added detailed proofs for all theorems, new theory on equisatisfiability under substituting subformulas (together
 with the corresponding proofs), new algorithms adjusted to the extended theory, and updated experimental results.}
\renewcommand{\thefootnote}{\arabic{footnote}}

\section{Introduction}
\label{sec:introduction}
\noindent During the last two decades enormous progress in the solution of
quantifier-free Boolean formulas (SAT) has been observed. Nowadays, SAT solving
is successfully used in many applications, \eg in
planning~\cite{RintanenHN06}, automatic test pattern
generation~\cite{EggersglussD12,CzutroPLERB10}, and formal verification of hard-
and software systems~\cite{BiereCCSZ03,ClarkeBRZ01,IYG+:2008}. Motivated by the
success of SAT solvers, efforts have been made,
\eg \cite{LonsingB10,Janota2012,JanotaM15,RabeT15}, to consider the
more general formalism of quantified Boolean formulas (QBFs).

Although QBFs are capable of encoding decision problems in the PSPACE complexity
class, they are not powerful enough to succinctly encode many natural and
practical problems that involve decisions under partial information. For
example, the analysis of games with incomplete information~\cite{PetersonRA01},
topologically constrained synthesis of logic circuits~\cite{BalabanovCJ14},
synthesis of safe controllers~\cite{BloemKS14}, synthesis of fragments of
linear-time temporal logic (LTL)~\cite{CHOP:2013}, and verification of partial
designs~\cite{SchollB01,gitina-et-al-iccd-2013} fall into this category and
require an even more general formalism, which is known as \emph{dependency quantified
Boolean formulas (DQBFs)} \cite{PetersonRA01}.

Unlike QBFs, where an existential variable implicitly depends on all the
universal variables preceding its quantification level, DQBFs admit that arbitrary
dependency sets are explicitly specified. Essentially, these quantifications with
explicit dependency sets correspond to Henkin quantifiers~\cite{Henkin61}.
The semantics of a DQBF can be interpreted from a game-theoretic viewpoint as a
game played by one universal player and multiple non-cooperative existential
players with incomplete information, each partially observing the moves of the
universal player as specified by his/her own dependency set. A DQBF is true if
and only if the existential players have winning strategies. This specificity of
dependencies allows DQBF encodings to be exponentially more compact than their
equivalent QBF counterparts. In contrast to the PSPACE-completeness of QBF, the
decision problem of DQBF is NEXPTIME-complete~\cite{PetersonRA01}.

Driven by the needs of the applications mentioned above, research on DQBF
solving has emerged in the past few years, leading to solvers such as
\textsc{iDQ}~\cite{FrohlichKBV14},
HQS~\cite{gitina-et-al-date-2015,wimmer-et-al-sat-2015,WimmerKBS017},
dCAQE~\cite{TentrupR19}, iProver~\cite{Korovin08}, and DQBDD~\cite{Sic:2020}.

As an example for a DQBF, consider the formula
\begin{equation*}
  \forall x_1 \forall x_2 \exists y_1(x_1) \exists y_2(x_2) \, : \, (x_1 \wedge x_2) \equiv (y_1 \equiv y_2)
\end{equation*}
from \cite{Rabe17}.
Here $\forall x_1 \forall x_2 \exists y_1(x_1) \exists y_2(x_2)$ is called the quantifier prefix
and $(x_1 \wedge x_2) \equiv (y_1 \equiv y_2)$ the matrix of the DQBF.
This DQBF asks whether there are choices for $y_1$ only depending on the
value of $x_1$, denoted $\exists y_1(x_1)$, and for $y_2$ only depending on $x_2$,
denoted $\exists y_2(x_2)$, such that the Boolean formula after the quantifier prefix
evaluates to true for all assignments to $x_1$ and $x_2$.\footnote{We can
interpret this as a \emph{game} played by $y_1$ and $y_2$ against $x_1$ and
$x_2$, where $y_1$ and $y_2$ only have incomplete information on actions of
$x_1$, $x_2$, respectively.} The Boolean formula in turn states that the
existential variables $y_1$ and $y_2$ have to be equal iff $x_1$ and $x_2$ are
true. Since $y_1$ can only `see' $x_1$ and $y_2$ only $x_2$, $y_1$ and $y_2$
`cannot coordinate' to satisfy the constraint. Thus, the formula is false.
Now consider  a straightforward  modification of this DQBF into a QBF with only
implicit dependency sets. Changing the quantifier prefix into a
QBF quantifier prefix $\forall x_1 \exists y_1 \forall x_2 \exists y_2$ means
that $y_1$ may depend on $x_1$, but $y_2$ may depend on $x_1$ and $x_2$. In that
case the formula would be true. Changing the prefix into $\forall x_2 \exists
y_2 \forall x_1 \exists y_1$ has a similar effect.

So far, syntax and semantics of DQBFs have been defined only for closed prenex forms (see for
instance \cite{BalabanovCJ14}),
\ie for DQBFs where all quantifiers are placed in front of the matrix and all variables occurring in the
matrix are either universally or existentially quantified. In this paper, we consider quantifier
localization for DQBF, which transforms prenex DQBFs into non-prenex DQBFs for more efficient
DQBF solving.

Quantifier localization for \emph{QBF} has been used with great success for image and pre-image
computations in the context of sequential equivalence checking and symbolic model checking
where it has been called ``early quantification''.
Here existential quantifiers were moved over \emph{AND} operations \cite{GB:94,HKB:96,CCL+:97,MKR+:2000}.
In \cite{Benedetti05c} the authors consider quantifier localization for
QBFs where the matrix is restricted to conjunctive normal form (CNF).
They move universal and existential quantifiers over \emph{AND} operations
and propose a method to construct a tree-shaped quantifier structure from a
QBF instance with linear quantifier prefix. Moreover, they show how to benefit from this
structure in the QBF solving phase.
This work has been used and generalized in \cite{PigorschS09} for a QBF solver based on symbolic
quantifier elimination.

To the best of our knowledge, quantifier localization has not been considered for DQBF so far,
apart from the seminal theoretical work on DQBF by Balabanov et al. \cite{BalabanovCJ14},
which considers --~as a side remark~-- quantifier localization for DQBF, transforming prenex DQBFs into
non-prenex DQBFs. For quantifier localization they gave two propositions.
However, a formal definition of the semantics of non-prenex DQBFs was missing in that work
and, in addition, the two propositions are not sound, as we will show in our paper.

In this paper, we provide a formal definition of syntax and semantics of non-prenex non-closed DQBFs.
The semantics is based on Skolem functions and is a natural generalization of the semantics
for closed prenex DQBFs known from the literature. We introduce an alternative constructive
definition of the semantics and show that both semantics are equivalent.
Then we define rules for transforming DQBFs into equivalent or equisatisfiable DQBFs, which
enable the translation of prenex DQBFs into non-prenex DQBFs. The rules are similar to their QBF counterparts,
but it turns out that some of them need additional conditions for being sound for DQBF as well.
Moreover, the proof techniques are completely different from those for their corresponding QBF counterparts.
We provide proofs for all the rules.
Finally, we show a method that transforms a prenex DQBF into a non-prenex DQBF based on those rules.
It is inspired by the method constructing a tree-shaped quantifier structure from \cite{Benedetti05c}
and works for DQBFs with an arbitrary formula (circuit) structure for the matrix.
The approach tries to push quantifiers ``as deep into the formula'' as possible.
Whenever a sub-formula fulfills conditions, which we will specify in Section~\ref{sec:ncnp}, it is processed by symbolic quantifier elimination.
When traversing the structure back, quantifiers which could not be eliminated are pulled back into the
direction of the root. At the end, a prenex DQBF solver is used for the simplified formula.
Experimental results demonstrate the benefits of our method when applied to
a set of more than 5000 DQBF benchmarks (including all QBFEVAL'18--'20 competition~\cite{qbfeval18,qbfeval19,qbfeval20}
benchmarks).

The paper is structured as follows:
In Section~\ref{sec:preliminaries} we provide preliminaries needed to understand the paper,
including existing transformation rules for QBFs.
Section~\ref{sec:ncnp} contains the main conceptual results of the paper whereas Section~\ref{sec:algorithm}
shows how to make use of them algorithmically.
Section~\ref{sec:experiments} presents experimental results and Section~\ref{sec:conclusion}
concludes the paper.

\section{Preliminaries}
\label{sec:preliminaries}

\noindent Let $\varphi,\kappa$ be quantifier-free Boolean formulas over the set $V$ of variables
and $v\in V$. We denote by $\varphi[\sfrac{\kappa}{v}]$ the Boolean formula which results
from $\varphi$ by replacing all occurrences of $v$ (simultaneously) by $\kappa$.
For a set $V'\subseteq V$ we denote by $\assign(V')$ the
set of \emph{Boolean assignments} for $V'$, \ie
$\assign(V')=\bigl\{\mu\,\big|\,\mu:V'\to\{0,1\}\bigr\}$.
As usual, for a Boolean assignment $\mu\in\assign(V')$ and $V'' \subseteq V'$
we denote the restriction of $\mu$ to $V''$ by $\mu|_{V''}$.
For each formula $\varphi$ over $V$, a variable assignment
$\mu\in\assign(V)$ induces a truth value $0$ or $1$ of $\varphi$, which we
call $\mu(\varphi)$.
If $\mu(\varphi)=1$ for all $\mu\in\assign(V)$,
then $\varphi$ is a \emph{tautology}.
In this case we write $\vDash\varphi$.

A \emph{Boolean function} with the set of input variables $V$ is a mapping $f:\assign(V)\to\{0,1\}$.
The set of Boolean functions over $V$ is denoted by $\boolf{V}$.
The \emph{support} $\support(f)$ of a function $f\in\boolf{V}$ is defined by
$\support(f) \colonequals \{v \in V \; | \; \exists \mu, \mu' \in \assign(V) \mbox{ with } \mu(w) = \mu'(w) \, \text{ for all }w \in V\setminus \{v\}
\mbox{ and } f(\mu) \neq f(\mu')\}$.
$\support(f)$ is the set of variables from $V$ on which $f$ ``really depends''.
The constant zero and constant one function are $\fzero$ and $\fone$, respectively.
$\mathrm{ITE}$ denotes the if-then-else operator, \ie $\mathrm{ITE}(f,g,h) = (f\land g)\lor (\neg f\land h)$.

A function $f:\assign(V)\to\{0,1\}$ is \emph{monotonically increasing (decreasing)} in $v\in V$,
if $f(\mu) \leq f(\mu')$ ($f(\mu') \leq f(\mu)$) for all assignments $\mu, \mu' \in\assign(V)$
with $\mu(w) = \mu'(w)$ for all $w \in V \setminus \{v\}$ and $\mu(v) \leq \mu'(v)$.

A quantifier-free Boolean formula $\varphi$ over $V$ defines a Boolean function
$f_{\varphi}:\assign(V)\to\{0,1\}$ by $f_{\varphi}(\mu) \colonequals \mu(\varphi)$.
When clear from the context, we do not differentiate between
quantifier-free Boolean formulas and the corresponding Boolean functions,
\eg if $\varphi$ is a Boolean formula representing $f_{\varphi}$, we write
$\varphi[\sfrac{v'}{v}]$ for the Boolean function where the input variable
$v$ is replaced by a (new) input variable $v'$.

Now we consider Boolean formulas with quantifiers.
The usual definition for a \emph{closed prenex} DQBF is given as follows:

\begin{definition}[Closed prenex DQBF]
  \label{def:dqbf_cp}
  Let $V=\{x_1,\ldots,x_n,\allowbreak y_1,\ldots,y_m\}$ be a
  set of Boo\-lean variables.
  A \emph{dependency quantified Boolean formula} (DQBF) $\psi$ over $V$ has
  the form
  \begin{equation*}
    \psi\colonequals
    \forall x_1\forall x_2\ldots\forall x_n
    \exists y_1(D_{y_1})\exists y_2(D_{y_2})\ldots\exists y_m(D_{y_m}):
    \varphi
    \label{eq:dqbf_syntax_cp}
  \end{equation*}
  where $D_{y_i}\subseteq\{x_1,\ldots,x_n\}$ for $i=1,\ldots,m$ is
  the \emph{dependency set} of $y_i$, and $\varphi$ is a quantifier-free
  Boolean formula over $V$, called the \emph{matrix} of $\psi$.
\end{definition}

We denote the set of universal variables of $\psi$ by $\varall[\psi]=\{x_1,\ldots,x_n\}$
and its set of existential variables by $\varex[\psi] = \{y_1,\ldots,y_m\}$.
The former part of $\psi$, $\forall x_1\forall x_2\ldots\forall x_n\allowbreak\exists y_1(D_{y_1})\exists y_2(D_{y_2})\ldots\exists y_m(D_{y_m})$,
is called its \emph{prefix}. Sometimes we abbreviate this prefix as $Q$ such that $\psi = Q:\varphi$.

The semantics of closed prenex DQBFs is given as follows:

\begin{definition}[Semantics of closed prenex DQBF]
  \label{def:dqbf_semantics_cp}
  Let $\psi$ be a DQBF with matrix $\varphi$ as above. $\psi$ is
  \emph{satisfiable} iff
  there are functions $s_{y_i}:\assign(D_{y_i})\to\{0,1\}$
  for $1\leq i\leq m$ such that replacing each
  $y_i$ by (a Boolean formula for) $s_{y_i}$ turns
  $\varphi$ into a tautology. Then the functions $(s_{y_i})_{i=1,\ldots,m}$ are called
  \emph{Skolem functions} for $\psi$.
\end{definition}

A DQBF is a QBF, if its dependency sets satisfy certain conditions:
\begin{definition}[Closed prenex QBF]
  \label{def:qbf}
  Let $V = \{x_1,\ldots,x_n,\allowbreak y_1,\ldots,y_m\}$ be a set of Boolean variables.
  A \emph{quantified Boolean formula (QBF)} (more precisely, a closed
  QBF in prenex normal form) $\psi$ over $V$ is given by
  $
    \psi\colonequals\forall X_1 \exists Y_1 \ldots \forall X_k \exists Y_k :\varphi
  $,
  where
  $k\geq 1$, $X_1,\ldots,X_k$ is a partition of the universal variables $\{x_1,\ldots,x_n\}$,
  $Y_1,\ldots,Y_k$ is a partition of the existential variables $\{y_1,\ldots,y_m\}$,
  $X_i\neq\emptyset$ for $i=2,\ldots,k$, and $Y_j\neq\emptyset$ for $j=1,\ldots,k-1$, and
  $\varphi$ is a quantifier-free Boolean formula over $V$.
\end{definition}
A QBF can be seen as a DQBF where the dependency sets are linearly ordered. A QBF
$\psi\colonequals\forall X_1 \exists Y_1 \ldots \forall X_k \exists Y_k : \varphi$
is equivalent to the DQBF
$\psi'\colonequals\forall x_1\ldots\forall x_n\exists y_1(D_{y_1})\ldots$ $\exists y_m(D_{y_m}):\varphi$
with $D_{y_i} = \bigcup_{j=1}^\ell X_j$ where $Y_\ell$ is the unique set with $y_i \in Y_\ell$, $1 \leq \ell \leq k$, $1 \leq i \leq m$.

Quantifier localization for QBF is based on the following theorem (see, \eg\cite{Benedetti05c})
which can be used to transform prenex QBFs into equivalent or equisatisfiable non-prenex QBFs (where the
quantifiers are not necessarily placed before the matrix).
Two QBFs $\psi_1$ and $\psi_2$ are equisatisfiable ($\psi_1 \approx \psi_2$), when $\psi_1$ is satisfiable iff $\psi_2$ is satisfiable.

\begin{theorem}
\label{th:qbfquantifierlocalization}
  Let $\op\in\{{\land}, {\lor}\}$, let $\mathcal{Q} \in \{\exists, \forall\}$, $\overline{\mathcal{Q}} = \exists$,
  if $\mathcal{Q} = \forall$ and $\overline{\mathcal{Q}} = \forall$ otherwise.
  Let $\varfreesupp[\psi]$ be the set of all variables occurring in $\psi$ which are not bound by a quantifier.
  The following holds for all QBFs:
  \begin{small}
    \begin{subequations}
    \begin{align}
       \neg (\mathcal{Q} x : \psi) &\quad\approx\quad  \overline{\mathcal{Q}} x : (\neg \psi)  \label{qbfapprox:neg} \\
       \mathcal{Q} x : \psi &\quad\approx\quad \psi, \;  \text{ if } x \notin \varfreesupp[\psi] \label{qbfapprox:indep} \\
       \forall x : (\psi_1 \land \psi_2) &\quad\approx\quad (\forall x : \psi_1) \land (\forall x : \psi_2)  \label{qbfapprox:forall_and} \\
       \exists x : (\psi_1 \lor \psi_2) &\quad\approx\quad (\exists x : \psi_1) \lor (\exists x : \psi_2)  \label{qbfapprox:exists_or} \\
       \mathcal{Q} x:(\psi_1 \op \psi_2) & \quad\approx\quad \bigl(\psi_1 \op (\mathcal{Q} x:\psi_2)\bigr), \;
           \text{ if } x \notin \varfreesupp[\psi_1] \label{qbfapprox:q_op} \\
       \mathcal{Q} x_1\,\mathcal{Q} x_2:\psi &\quad\approx \quad \mathcal{Q} x_2\,\mathcal{Q} x_1:\psi  \label{qbfapprox:q_q}
    \end{align}
    \end{subequations}
  \end{small}
\end{theorem}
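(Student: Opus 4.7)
The plan is to prove each of the six equivalences via the Skolem-function semantics for QBFs (as available from \cite{Benedetti05c}, and being a special case of the DQBF semantics that is about to be developed in the next section): a possibly non-prenex QBF $\psi$ is satisfiable iff there exists a family of Skolem functions, one for every existentially bound variable of $\psi$, each depending only on the universal variables in whose scope the existential quantifier occurs, such that replacing the existential variables by the corresponding Skolem functions turns the matrix into a formula that evaluates to $1$ under every assignment to the universal variables and every assignment to $\varfreesupp[\psi]$. Each of the six rules is then established by showing that Skolem families witnessing satisfiability of one side can be transformed into Skolem families witnessing satisfiability of the other side.

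Rules \eqref{qbfapprox:neg}, \eqref{qbfapprox:indep}, and \eqref{qbfapprox:q_q} are essentially immediate. For \eqref{qbfapprox:indep}, if $x$ does not occur free in $\psi$ then the truth value of the matrix does not depend on $x$, so any Skolem family for $\psi$ also witnesses $\mathcal{Q} x:\psi$ (adding a vacuous dependence on $x$ when $\mathcal{Q}=\forall$, or a constant function for $x$ when $\mathcal{Q}=\exists$). Rule \eqref{qbfapprox:q_q} follows because two universal quantifiers commute trivially and two existential quantifiers bind variables with the same dependency set, so their Skolem functions may be chosen in either order. Rule \eqref{qbfapprox:neg} follows from the duality between Skolem witnesses for $\exists x$ and counter-model (Herbrand) witnesses for $\forall x$ in the negated formula, which amounts to pointwise negation of function values.

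The main technical effort is on the distributive rules \eqref{qbfapprox:forall_and} and \eqref{qbfapprox:exists_or}. After an initial $\alpha$-renaming that makes the existential variables bound \emph{inside} $\psi_1$ disjoint from those bound inside $\psi_2$, rule \eqref{qbfapprox:forall_and} follows in both directions by reusing the same Skolem family: the ``$\Rightarrow$'' direction uses the Skolem functions for $\forall x:(\psi_1\land\psi_2)$ on each conjunct separately, and ``$\Leftarrow$'' takes the disjoint union of the two families. For rule \eqref{qbfapprox:exists_or}, the ``$\Rightarrow$'' direction uses the single Skolem function $s$ for $x$ on both disjuncts of the right-hand side. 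The ``$\Leftarrow$'' direction is more delicate: given Skolem functions $s_1, s_2$ for the two disjuncts, I would define a combined Skolem function for $\exists x$ on the left by $s \colonequals \mathrm{ITE}(b,\, s_1,\, s_2)$, where $b$ is a Boolean function of the universal variables in the scope of $\exists x$ indicating which of the two disjuncts is satisfied at the current assignment; such a $b$ exists because otherwise the right-hand side itself would be false. Rule \eqref{qbfapprox:q_op} is then obtained by combining \eqref{qbfapprox:indep} (eliminating $\mathcal{Q} x$ from $\psi_1$) with the appropriate distributive rule; the ``mixed'' cases $\forall x$ over $\lor$ and $\exists x$ over $\land$ follow directly from the independence of $\psi_1$ on $x$, without splitting the formula.

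The main obstacle is the careful treatment of existential variables bound \emph{outside} the subformula being rewritten: such variables must be assigned the \emph{same} Skolem function on both sides for the equisatisfiability translation to be sound. This is transparent for the rules that reuse Skolem families verbatim, but for \eqref{qbfapprox:exists_or} it forces $s_1$ and $s_2$ to be constructed consistently with a single shared outer Skolem assignment. Handling this cleanly requires being explicit about the dependency set of each inner Skolem function and about the projection of the outer universal assignment observed by every existential variable --- bookkeeping that is unproblematic in the prenex setting but demands some care once the rules are applied in a non-prenex context, which is exactly why the upcoming section formalises the non-prenex semantics before transferring these rules to DQBF.
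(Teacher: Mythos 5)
Your proof is essentially correct, but it takes a noticeably different---and heavier---route than the one the paper has in mind. The paper treats Theorem~\ref{th:qbfquantifierlocalization} as a known QBF fact (citing \cite{Benedetti05c}) and, in the discussion preceding Theorem~\ref{th:rules}, points out that it is ``rather easy'' to prove from the Shannon-expansion equisatisfiabilities $\exists x:\varphi \approx \varphi[\sfrac{0}{x}]\lor\varphi[\sfrac{1}{x}]$ and $\forall x:\varphi \approx \varphi[\sfrac{0}{x}]\land\varphi[\sfrac{1}{x}]$: once each quantifier is expanded into its two cofactors, every rule reduces to a propositional identity (commutativity, associativity, distributivity of $\land$ over $\land$ and of $\lor$ over $\lor$, and the fact that cofactoring on a variable outside the support is the identity). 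That route in fact yields logical \emph{equivalence} of the two sides, which is what licenses applying the rules to subformulas, and it disposes of your ``main obstacle'' (keeping the Skolem functions of outer existential variables consistent across the two sides) for free. Your Skolem-function argument is sound, but it is really a preview of the DQBF machinery of Section~\ref{sec:ncnp}: the $\mathrm{ITE}(b,s_1,s_2)$ recombination you use for the ``$\Leftarrow$'' direction of \eqref{qbfapprox:exists_or} is legitimate only because in QBF the recombined function for $x$ may depend on \emph{all} universal variables in whose scope $\exists x$ occurs, so the selector $b$ is an admissible dependency---exactly the step that fails for DQBF and forces the renaming and the disjointness side conditions of rule~\eqref{equiv:exists_or} and Theorem~\ref{th:equisat:exists_or} (cf.\ Example~\ref{ex:counter}). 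What your approach buys is uniformity with the DQBF proofs to come and an explicit explanation of \emph{why} the QBF rules do not lift verbatim; what the expansion approach buys is brevity and the stronger conclusion of equivalence rather than mere equisatisfiability.
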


\section{Non-Closed Non-Prenex DQBFs}
\label{sec:ncnp}

\subsection{Syntax and Semantics}
\noindent In this section, we define syntax and semantics of non-prenex DQBFs.
Since the syntax definition is recursive, we need non-closed DQBFs as well.

\begin{figure*}[tb]
{\footnotesize
  \[
  \renewcommand{\extrarowheight}{1mm}
  \begin{array}{@{}cccccc}
    \toprule
    \textnormal{No.} & \textnormal{Rule} & & \varex[\cdot] & \varall[\cdot] & \varfreesupp[\cdot] \\
    \midrule
    1 &
    \dfrac{v\in V}{v \in \npnc}
        &\quad  & \emptyset & \emptyset & \{v\} \\[3mm]
    2 &
    \dfrac{v\in V}{\neg v \in \npnc}
        && \emptyset & \emptyset & \{v\} \\[2mm]
    3 &
    \dfrac{\varphi_1\in \npnc \qquad\varphi_2\in \npnc \qquad\eqref{eq:disjoint}}{(\varphi_1\land\varphi_2) \in \npnc}
        && \varex[\varphi_1]\dcup\varex[\varphi_2] & \varall[\varphi_1]\dcup\varall[\varphi_2] & \varfreesupp[\varphi_1]\cup\varfreesupp[\varphi_2] \\[4mm]
    4 &
    \dfrac{\varphi_1 \in \npnc \qquad\varphi_2 \in \npnc \qquad\eqref{eq:disjoint}}{(\varphi_1\lor\varphi_2) \in \npnc}
        && \varex[\varphi_1]\dcup\varex[\varphi_2] & \varall[\varphi_1]\dcup\varall[\varphi_2] & \varfreesupp[\varphi_1]\cup\varfreesupp[\varphi_2] \\[5mm]
    5 &
    \dfrac{\varphi \in \npnc \qquad v\in \varfree[\varphi] \qquad D_v\subseteq V\setminus\bigl(\var[\varphi]^Q\dcup\{v\}\bigr)}{\exists v(D_v):\varphi^{-v} \in \npnc}
        && \varex[\varphi]\dcup\{v\} & \varall[\varphi] & \varfreesupp[\varphi]\setminus\{v\} \\[4mm]
    6 &
    \dfrac{\varphi \in \npnc \qquad v\in \varfree[\varphi]}{\forall v:\varphi \in \npnc}
        && \varex[\varphi] & \varall[\varphi]\dcup\{v\} & \varfreesupp[\varphi]\setminus\{v\} \\
    \bottomrule
  \end{array}
  \]
  where \eqref{eq:disjoint} refers to the formula
  \begin{equation}
   \tag{$\ast$}
    \label{eq:disjoint}
      \Big(\var[\varphi_1]^Q \cup \varfreesupp[\varphi_1] \cup \bigcup_{y \in \varex[\varphi_1]} D_y\Big) \cap
      \var[\varphi_2]^Q=\emptyset\quad\land\quad
      \Big(\var[\varphi_2]^Q \cup \varfreesupp[\varphi_2] \cup \bigcup_{y \in \varex[\varphi_2]} D_y\Big) \cap
      \var[\varphi_1]^Q=\emptyset\,.
  \end{equation}
}
  \caption{Rules defining the syntax of non-prenex non-closed DQBFs in negation normal form.\label{fig:rules}}
\end{figure*}

\begin{definition}[Syntax]
  \label{def:syntax}
  Let $V$ be a finite set of Boolean variables. Let $\varphi^{-v}$ result from
  $\varphi$ by removing $v$ from the dependency sets of all existential variables in $\varphi$.

  The set $\npnc$ of \emph{non-closed non-prenex DQBFs in negation normal form} (NNF)
  over $V$, the existential and universal variables as well as the free variables in their
  support are defined by the rules given in Figure~\ref{fig:rules}.
  As usual, $\npnc$ is defined to be the smallest set satisfying those rules.

  $\varex[\psi]$ is the set of existential variables of $\psi$,
  $\varall[\psi]$ the set of universal variables of $\psi$,
  and $\varfreesupp[\psi]$ the set of free variables in the support of $\psi$.
  $\var[\psi] \colonequals \varex[\psi] \dcup \varall[\psi] \dcup \varfreesupp[\psi]$ is
  the set of variables occurring in $\psi$,
  $\var[\psi]^Q \colonequals \varex[\psi]\dcup\varall[\psi]$ is the set of quantified variables of $\psi$,
  and $\varfree[\psi] \colonequals V \setminus \var[\psi]^Q$ is the set of free variables of $\psi$.\footnote{
  In contrast to the variables from $\varfreesupp[\psi]$, the variables from $\varfree[\psi]$ do not necessarily occur in $\psi$.
  }
\end{definition}

\begin{remark}
For the sake of simplicity, we assume in Definition~\ref{def:syntax}
that variables are either free or bound by some quantifier, but not both,
and that no variable is quantified more than once. Every formula that violates this assumption
can easily be brought into the required form by renaming variables.
We restrict ourselves to NNF, since prenex DQBFs are not syntactically closed under negation~\cite{BalabanovCJ14}.
For closed prenex DQBFs the (quantifier-free) matrix can be simply transformed into NNF
by applying De Morgan's rules and omitting double negations (exploiting that $x\equiv \neg\neg x$)
at the cost of a linear blow-up of the formula.
\end{remark}

For two DQBFs $\psi_1,\psi_2$ we write $\psi_1\subformula\psi_2$ if $\psi_1$ is a subformula
of $\psi_2$.

\begin{definition}[Skolem function candidates]
  \label{def:skolem_function_candidates}
  For a DQBF $\psi$ over variables $V$
  in NNF, we define a \emph{Skolem function candidate}
  as a mapping from existential and free variables to functions over universal variables
  $s:\varfree[\psi]\dcup\varex[\psi]\to\boolf{\varall[\psi]}$ with
  \begin{enumerate}
  \item $\support\bigl(s(v)\bigr)=\emptyset$ for all $v\in\varfree[\psi]$, \ie $s(v)\in\{\fzero,\fone\}$, and
  \item $\support\bigl(s(v)\bigr)\subseteq \bigl(D_v\cap\varall[\psi]\bigr)$ for all $v\in\varex[\psi]$.
  \end{enumerate}
  $\sfunc{\psi}$ is the set of all such Skolem function candidates.
\end{definition}

That means, $\sfunc{\psi}$ is the set of all Skolem function candidates satisfying the
constraints imposed by the dependency sets of the existential and free variables.

\begin{notation}
\label{not:skolem}
Given $s\in\sfunc{\psi}$ for a DQBF $\psi\in\npnc$, we write
$s(\psi)$ for the formula that results from $\psi$ by replacing each variable $v$ for which $s$ is defined
by $s(v)$ and omitting all quantifiers from $\psi$, \ie $s(\psi)$ is a quantifier-free Boolean formula,
containing only variables from $\varall[\psi]$.
\end{notation}

\begin{definition}[Semantics of DQBFs in NNF]
  \label{def:sem}
  Let $\psi\in\npnc$ be a DQBF over variables $V$. We define the semantics $\sem{\psi}$ of $\psi$ as follows:
  \[
    \sem{\psi} \colonequals
      \bigl\{s\in\sfunc{\psi}\,\big|\,\vDash s(\psi)\bigr\}
      =
      \bigl\{s\in\sfunc{\psi}\,\big|\,\forall\mu\in\assign(\varall[\psi]): \mu\bigl(s(\psi)\bigr) = 1\bigr\}.
   \]
  $\psi$ is \emph{satisfiable} if $\sem{\psi}\neq\emptyset$; otherwise we call it \emph{unsatisfiable}.
  The elements of $\sem{\psi}$ are called \emph{Skolem functions} for $\psi$.
\end{definition}
The semantics $\sem{\psi}$ of $\psi$ is the subset of $\,\sfunc{\psi}$ such that for all $s\in\sem{\psi}$ we have:
Replacing each free or existential variable $v\in\varfree[\psi]\dcup\varex[\psi]$ with a Boolean expression
for $s(v)$ turns $\psi$ into a tautology.

\begin{example}
\label{ex1}
Consider the DQBF
\[
  \psi\colonequals \forall x_1 \forall x_2  : \Bigl(
    \bigl(x_1\equiv x_2\bigr) \vee \bigl(\exists y_1(x_2) : (x_1\not\equiv y_1)\bigr)
    \Bigr)
\]
over the set of variables $\{x_1, x_2, y_1\}$.
$y_1$ with dependency set $\{x_2\}$ is the only existential variable in $\psi$ and
there are no free variables. Thus
$\sfunc{\psi} = \{y_1 \mapsto \fzero, y_1 \mapsto \fone, y_1 \mapsto x_2, y_1 \mapsto \neg x_2\}$.
It is easy to see that $s \colonequals y_1 \mapsto x_2$ is a Skolem function for $\psi$, since
$\vDash s(\psi) = \bigl((x_1\equiv x_2) \vee (x_1\not\equiv x_2)\bigr)$, and that the other Skolem function
candidates do not define Skolem functions.
\end{example}

\begin{remark}
   For closed prenex DQBFs the semantics defined here obviously coincides with the usual semantics
   as specified in Definition~\ref{def:dqbf_semantics_cp} if we transform the (quantifier-free)
   matrix into NNF first.
\end{remark}

\begin{remark}
  A (non-prenex) DQBF $\psi$ is a (non-prenex) QBF if every existential variable depends on
  all universal variables in whose scope it is (and possibly on free variables as well).
\end{remark}

The following theorem provides a constructive characterization of the semantics of a DQBF.
\begin{restatable}{theorem}{semantics}
  \label{th:semantics}
  The set $\sem{\psi}$ for a DQBF $\psi$ over variables $V$ in NNF can be characterized recursively as follows:
  \begin{subequations}
    {\small\allowdisplaybreaks
    \begin{align}
      \sem{v} &= \bigl\{ s\in\sfunc{v}\,\big|\,s(v)=\fone \bigr\} \text{ for } v\in V_\psi,
          \label{th:semantics:var} \\
      \sem{\neg v} &= \bigl\{ s\in\sfunc{\neg v}\,\big|\,s(v) = \fzero\bigr\} \text{ for } v\in V_\psi,
          \label{th:semantics:notvar} \\
      \sem{(\varphi_1\land\varphi_2)} &= \bigl\{ s\in\sfunc{\varphi_1\land\varphi_2}\,\big|\,  \label{th:semantics:and}
       s_{|\varfree[\varphi_1]\dcup\varex[\varphi_1]}\in\sem{\varphi_1} \land s_{|\varfree[\varphi_2]\dcup\varex[\varphi_2]}\in\sem{\varphi_2}\bigr\},
          \\
      \sem{(\varphi_1\lor\varphi_2)} &= \bigl\{ s\in\sfunc{\varphi_1\lor\varphi_2}\,\big|\, \label{th:semantics:or}
       s_{|\varfree[\varphi_1]\dcup\varex[\varphi_1]}\in\sem{\varphi_1} \lor s_{|\varfree[\varphi_2]\dcup\varex[\varphi_2]}\in\sem{\varphi_2}\bigr\},
          \\
      \sem{\exists v(D_v):\varphi^{-v}} &= \sem{\varphi^{-v}},
          \label{th:semantics:exists} \\
      \sem{\forall v:\varphi} &= \Bigl\{ t\in\sfunc{\forall v:\varphi}\,\Big|\, \label{th:semantics:forall}
            \exists s_0, s_1\in\sem{\varphi}:  s_0(v)=\fzero\land s_1(v)=\fone \; \land  \\*
           & \qquad \forall w\in\varfree[\forall v:\varphi]: t(w) = s_0(w) = s_1(w) \; \land \nonumber \\*
           & \qquad \forall w\in\varex[\forall v:\varphi], v\notin D_w : t(w) = s_0(w) = s_1(w) \; \land  \nonumber \\*
           & \qquad \forall w\in\varex[\forall v:\varphi], v\in D_w : t(w) = \mathrm{ITE}\bigl(v,\ s_1(w),\ s_0(w)\bigr) \Bigr\}\nonumber
  \end{align}}
\end{subequations}
\end{restatable}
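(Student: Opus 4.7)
The plan is to prove the theorem by case analysis on the outermost rule in Figure~\ref{fig:rules} used to construct $\psi$, each time unfolding the definition $\sem{\psi}=\{s\in\sfunc{\psi}\mid{\vDash}\,s(\psi)\}$ and reducing it directly to the stated right-hand side. The two literal cases~\eqref{th:semantics:var} and~\eqref{th:semantics:notvar} are immediate because $s(v)$ is a constant function, so $s(v)$ (resp.\ $\neg s(v)$) is a tautology iff $s(v)=\fone$ (resp.\ $s(v)=\fzero$). For the existential case~\eqref{th:semantics:exists}, the side condition $D_v\subseteq V\setminus(\var[\varphi]^Q\dcup\{v\})$ together with $\varall[\psi]\subseteq\var[\varphi]^Q$ forces $D_v\cap\varall[\psi]=\emptyset$; hence $s(v)$ must be a constant for every $s\in\sfunc{\exists v(D_v):\varphi^{-v}}$, which matches exactly the constraint on $v$ as a free variable of $\sfunc{\varphi^{-v}}$. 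Since removing $v$ from the other dependency sets does not alter their admissible supports (as $v$ is not universal), $\sfunc{\exists v(D_v):\varphi^{-v}}$ and $\sfunc{\varphi^{-v}}$ coincide, and by Notation~\ref{not:skolem} both DQBFs produce the same quantifier-free formula under $s$.

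For the connective cases~\eqref{th:semantics:and} and~\eqref{th:semantics:or}, I would use the disjointness condition~\eqref{eq:disjoint} to show that after substitution $s(\varphi_i)$ contains only universal variables from $\varall[\varphi_i]$: the unquantified variables occurring in $\varphi_i$ are not quantified in $\varphi_{3-i}$, and the support of $s(y)$ for $y\in\varex[\varphi_i]$ lies in $D_y\cap\varall[\psi]$, which is contained in $\varall[\varphi_i]$ because $D_y\cap\var[\varphi_{3-i}]^Q=\emptyset$. Since $\varall[\varphi_1]$ and $\varall[\varphi_2]$ are disjoint, $\vDash s(\varphi_1)\land s(\varphi_2)$ decomposes conjunctively, and $\vDash s(\varphi_1)\lor s(\varphi_2)$ decomposes disjunctively (otherwise two falsifying assignments on the two disjoint variable sets could be combined into a single falsifying assignment of the whole disjunction). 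Together with the elementary identity $s(\varphi_i)=s_{|\varfree[\varphi_i]\dcup\varex[\varphi_i]}(\varphi_i)$, this yields~\eqref{th:semantics:and} and~\eqref{th:semantics:or}.

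The main work is the universal case~\eqref{th:semantics:forall}, which I would handle via Shannon expansion with respect to $v$. The key identity, verified by chasing the definitions of the substitutions, is $t(\varphi)[\sfrac{i}{v}]=s_i(\varphi)$, where $s_i$ is constructed from $t$ by setting $s_i(v)=\fzero$ for $i=0$ and $s_i(v)=\fone$ for $i=1$ and $s_i(w)=t(w)[\sfrac{i}{v}]$ for the remaining variables (which simplifies to $s_i(w)=t(w)$ whenever $v\notin D_w$ or $w$ is free). Because $t(\forall v:\varphi)=t(\varphi)$ retains $v$ as a free universal variable, $\vDash t(\forall v:\varphi)$ is equivalent, via this identity and a case split on $v$, to the conjunction of $\vDash s_0(\varphi)$ and $\vDash s_1(\varphi)$, i.e.\ to $s_0,s_1\in\sem{\varphi}$. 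The four conditions on the right-hand side of~\eqref{th:semantics:forall} then follow directly, the last one $t(w)=\mathrm{ITE}(v,s_1(w),s_0(w))$ being Shannon expansion once more; conversely, given $s_0,s_1$ satisfying those conditions and $t$ defined by the $\mathrm{ITE}$ formula, the same identity yields $\vDash t(\forall v:\varphi)$. The main obstacle I anticipate is precisely the bookkeeping in this case: as $v$ migrates from being free in $\varphi$ to universal in $\forall v:\varphi$, the domains of the Skolem function candidates, their allowed supports, and the dependency sets $D_w$ all shift, and one has to check carefully that the proposed $s_i$ (resp.\ $t$) really lies in $\sfunc{\varphi}$ (resp.\ $\sfunc{\forall v:\varphi}$) before invoking the Shannon identity.
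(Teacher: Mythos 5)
Your proposal is correct and follows essentially the same route as the paper's proof in \ref{app:semproof}: structural induction with the literal and existential cases handled by the constancy of the relevant Skolem values, the disjunction case resting on $\varall[\varphi_1]\cap\varall[\varphi_2]=\emptyset$ (so that a non-tautological disjunction of formulas over disjoint variable sets admits a combined falsifying assignment), and the universal case built on the cofactor identity $t(\varphi)[\sfrac{i}{v}]=s_i(\varphi)$ with the same definition of $s_0,s_1$ and the $\mathrm{ITE}$ reconstruction. The bookkeeping you flag as the main obstacle is exactly the part the paper carries out explicitly, and your outline of it is sound.
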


For the proof as well as for the following example, we denote the semantics defined in Definition~\ref{def:sem}
by $\semdef{\psi}$ (\ie $\semdef{\psi}=\{ s\in\sfunc{\psi}\,|\,\vDash s(\psi)\}$)
and the set that is characterized by Theorem~\ref{th:semantics} by
$\semth{\psi}$.

\begin{proof}
  $\semdef{\psi}=\semth{\psi}$ is shown by induction on the structure of $\psi$,
  for details we refer to \ref{app:semproof}.
\end{proof}

The following example illustrates the recursive characterization of Theorem~\ref{th:semantics} (and again
the recursive Definition~\ref{def:syntax}).

\begin{example}
\label{ex2}
Let us consider the DQBF
\[
  \psi\colonequals \forall x_1 \forall x_2  : \Bigl(
    \bigl(x_1\equiv x_2\bigr) \lor \bigl(\exists y_1(x_2) : (x_1\not\equiv y_1)\bigr)
    \Bigr)
\]
over the set of variables $\{x_1, x_2, y_1\}$ from Example~\ref{ex1} again.
We compute $\semth{\psi}$ recursively.

As an abbreviation for
$\bigl((\neg x_1 \land y_1) \lor (x_1 \land \neg y_1)\bigr)$,
$(x_1\not\equiv y_1)$ is a DQBF
based on rules 1--4 of Definition~\ref{def:syntax} with
$\varex[x_1\not\equiv y_1] = \varall[x_1\not\equiv y_1] = \emptyset$,
$\varfreesupp[x_1\not\equiv y_1] = \{x_1, y_1\}$, $\varfree[x_1\not\equiv y_1] = \{x_1, x_2, y_1\}$.
With Theorem~\ref{th:semantics}, \eqref{th:semantics:var}--\eqref{th:semantics:or}
we get
$\semth{x_1\not\equiv y_1} = \bigl\{s : \{x_1, x_2, y_1\} \to\boolf{\emptyset} \; \big| \; s(y_1) \neq s(x_1)\bigr\}$.

For $\psi' \colonequals \bigl(\exists y_1(x_2) : (x_1\not\equiv y_1)\bigr)$, we obtain by rule 5:
$\varall[\psi'] = \emptyset$,
$\varex[\psi'] = \{y_1\}$,
$\varfreesupp[\psi'] = \{x_1\}$.
$\varfree[\psi'] = \{x_1, x_2\}$.
According to Theorem~\ref{th:semantics}, \eqref{th:semantics:exists} we have
$\semth{\exists y_1(x_2) : (x_1\not\equiv y_1)} = \semth{x_1\not\equiv y_1}$.

Similarly we obtain
$\semth{x_1\equiv x_2} = \bigl\{s : \{x_1, x_2, y_1\} \to\boolf{\emptyset} \; \big| \; s(x_1) = s(x_2)\bigr\}$.

Then, for $\psi'' \colonequals \bigl(\bigl(x_1\equiv x_2\bigl) \vee \bigl(\exists y_1(x_2) : (x_1\not\equiv y_1)\bigr)\bigr)$
we have
$\varall[\psi''] = \emptyset$,
$\varex[\psi''] = \{y_1\}$,
$\varfreesupp[\psi''] = \varfree[\psi''] = \{x_1, x_2\}$, and
by Theorem~\ref{th:semantics}, \eqref{th:semantics:or}
$\semth{\psi''} = \bigl\{s : \{x_1, x_2, y_1\} \to\boolf{\emptyset}  \; \bigr| \; \bigl(s(x_1), s(x_2), s(y_1)\bigr) \in
\{(\fzero, \fzero, \fzero), (\fzero, \fzero, \fone), (\fzero, \fone, \fone),
(\fone, \fzero, \fzero), (\fone, \fone, \fzero), (\fone, \fone, \fone) \}
\bigr\}$.

Now we consider $\forall x_2 : \psi''$.
$\varall[\forall x_2 :  \psi''] = \{x_2\}$.
$\varex[\forall x_2 :  \psi''] = \{y_1\}$,
$\varfreesupp[\forall x_2 :  \psi''] = \varfree[\forall x_2 :  \psi''] = \{x_1\}$.
We use $\eqref{th:semantics:forall}$
to construct $\semth{\forall x_2 : \psi''}$.
In principle, there are three possible choices $s_0 \in \semth{\psi''}$ with $s_0(x_2) = \fzero$
and three possible choices $s_1 \in \semth{\psi''}$ with $s_1(x_2) = \fone$.
Due to the constraint $s_0(x_1) = s_1(x_1)$ in the third line of $\eqref{th:semantics:forall}$,
there remain only four possible combinations $s^{(1)}_0, s^{(1)}_1, \ldots, s^{(4)}_0, s^{(4)}_1$:
\begin{itemize}
\item $\bigl(s^{(1)}_0(x_1), s^{(1)}_0(x_2), s^{(1)}_0(y_1)\bigr) = (\fzero, \fzero, \fzero)$, \\
      $\bigl(s^{(1)}_1(x_1), s^{(1)}_1(x_2), s^{(1)}_1(y_1)\bigr) = (\fzero, \fone, \fone)$, leading to  \\
      $t^{(1)}(x_1) = \fzero$, $t^{(1)}(y_1) = \mathrm{ITE}\bigl(x_2, s^{(1)}_1(y_1), s^{(1)}_0(y_1)\bigr) = \mathrm{ITE}(x_2, \fone, \fzero) = x_2$,
\item $\bigl(s^{(2)}_0(x_1), s^{(2)}_0(x_2), s^{(2)}_0(y_1)\bigr) = (\fzero, \fzero, \fone)$, \\
      $\bigl(s^{(2)}_1(x_1), s^{(2)}_1(x_2), s^{(2)}_1(y_1)\bigr) = (\fzero, \fone, \fone)$, leading to  \\
      $t^{(2)}(x_1) = \fzero$, $t^{(2)}(y_1) = \mathrm{ITE}\bigl(x_2, s^{(2)}_1(y_1), s^{(2)}_0(y_1)\bigr) = \mathrm{ITE}(x_2, \fone, \fone) = \fone$,
\item $\bigl(s^{(3)}_0(x_1), s^{(3)}_0(x_2), s^{(3)}_0(y_1)\bigr) = (\fone, \fzero, \fzero)$, \\
      $\bigl(s^{(3)}_1(x_1), s^{(3)}_1(x_2), s^{(3)}_1(y_1)\bigr) = (\fone, \fone, \fzero)$, leading to  \\
      $t^{(3)}(x_1) = \fone$, $t^{(3)}(y_1) = \mathrm{ITE}\bigl(x_2, s^{(3)}_1(y_1), s^{(3)}_0(y_1)\bigr) = \mathrm{ITE}(x_2, \fzero, \fzero) = \fzero$,
\item $\bigl(s^{(4)}_0(x_1), s^{(4)}_0(x_2), s^{(4)}_0(y_1)\bigr) = (\fone, \fzero, \fzero)$, \\
      $\bigl(s^{(4)}_1(x_1), s^{(4)}_1(x_2), s^{(4)}_1(y_1)\bigr) = (\fone, \fone, \fone)$, leading to  \\
      $t^{(4)}(x_1) = \fone$, $t^{(4)}(y_1) = \mathrm{ITE}\bigl(x_2, s^{(4)}_1(y_1), s^{(4)}_0(y_1)\bigr) = \mathrm{ITE}(x_2, \fone, \fzero) = x_2$.
\end{itemize}
Altogether, $\semth{\forall x_2 : \psi''} = \{t^{(1)}, t^{(2)}, t^{(3)}, t^{(4)}\}$.

Finally, for $\psi = \forall x_1 \forall x_2  : \psi''$ we have
$\varall[\psi] = \{x_1, x_2\}$,
$\varex[\psi] = \{y_1\}$,
$\varfreesupp[\psi] = \varfree[\psi] = \emptyset$.
For the choice of $s_0$ and $s_1$ in \eqref{th:semantics:forall} we need
$s_0(x_1) = \fzero$, $s_1(x_1) = \fone$ and, due to
$x_1 \notin D_{y_1}$, $s_0(y_1) = s_1(y_1)$ (see the third line of \eqref{th:semantics:forall}).
Thus, the only possible choice is $s_0 \colonequals t^{(1)}$ and $s_1 \colonequals t^{(4)}$;
$t(y_1) \colonequals x_2$ is the only possible Skolem function for $\psi$.
This result agrees with the Skolem function computed using
Definition~\ref{def:sem} in Example~\ref{ex1}.
\end{example}

\subsection{Equivalent and Equisatisfiable Non-Closed Non-Prenex DQBFs}

\noindent Now we define rules for replacing DQBFs by equivalent and equisatisfiable ones.
We start with the definition of equivalence and equisatisfiability:
\begin{definition}[Equivalence and equisatisfiability]
  \label{def:equivalence}
  Let $\psi_1,\psi_2\in\npnc$ be DQBFs over $V$. We call them \emph{equivalent} (written $\psi_1\equiv\psi_2$)
  if $\sem{\psi_1} = \sem{\psi_2}$; they are
  \emph{equisatisfiable} (written $\psi_1\approx\psi_1$)
  if $\sem{\psi_1} = \emptyset\Leftrightarrow\sem{\psi_2} = \emptyset$ holds.
\end{definition}

Now we prove Theorem~\ref{th:rules}, which is the DQBF counterpart to Theorem~\ref{th:qbfquantifierlocalization} for QBF.

\begin{restatable}{theorem}{rules}
  \label{th:rules}
  Let $\op\in\{{\land}, {\lor}\}$
  and all formulas occurring on the left- and right-hand sides of the following rules
  be DQBFs in $\npnc$ over the same set $V$ of variables.
  We assume that $x'$ and $y'$ are fresh variables,
  which do not occur in $\varphi$, $\varphi_1$, and $\varphi_2$.
  The following equivalences and equisatifiabilities hold for all DQBFs in NNF.

  \begin{small}
    \begin{subequations}
    {\allowdisplaybreaks\begin{align}
       \exists y(D_y): \varphi &\quad\equiv\quad \varphi
          \label{equiv:indep_exists} \\
       \forall x : \varphi &\quad\approx\quad \varphi^{-x}
          \quad\text{ if } x \notin \var[\varphi]
          \label{equiv:indep} \\
       \forall x : \varphi &\quad\equiv\quad \varphi[\sfrac{0}{x}] \land \varphi[\sfrac{1}{x}]
          \quad\text{ if } \varall[\varphi] = \varex[\varphi] = \emptyset
          \label{th:rules3} \\
       \exists y(D_y) : \varphi &\quad \approx \quad \varphi[\sfrac{0}{y}] \lor \varphi[\sfrac{1}{y}]
          \quad\text{ if } \varall[\varphi] = \varex[\varphi] = \emptyset\footnotemark
          \label{equiv:exists} \\
       \forall x:(\varphi_1\land\varphi_2) &\quad\approx\quad \bigl(\forall x:\varphi_1\bigr)\land\bigl(\forall x':\varphi_2[\sfrac{x'}{x}]\bigr)\footnotemark
          \label{equiv:forall_and} \\
       \forall x:(\varphi_1 \land \varphi_2) & \quad \approx \quad \bigl(\varphi_1^{-x} \land (\forall x:\varphi_2)\bigr)
          \quad\text{ if } x\notin\var[\varphi_1]
          \label{equiv:forall_and2} \\
       \forall x:(\varphi_1 \op \varphi_2) & \quad\equiv\quad \bigl(\varphi_1 \op (\forall x:\varphi_2)\bigr)
          \quad\text{ if } x\notin\var[\varphi_1] \text{ and } x\notin D_y \text{ for all } y\in\varex[\varphi_1]
          \label{equiv:forall_or} \\
       \exists y(D_y):(\varphi_1\lor\varphi_2) &\quad\approx\quad \bigl(\exists y(D_y):\varphi_1\bigr)
          \lor\bigl(\exists y'(D_y):\varphi_2[\sfrac{y'}{y}]\bigr)
          \label{equiv:exists_or} \\
       \exists y(D_y):(\varphi_1\op\varphi_2) &\quad\equiv\quad \bigl(\varphi_1\op(\exists y(D_y):\varphi_2)\bigr)
          \quad\text{ if } y\notin\var[\varphi_1]
          \label{equiv:exists_and} \\
       \exists y_1(D_{y_1})\,\exists y_2(D_{y_2}):\varphi &\quad\equiv\quad \exists y_2(D_{y_2})\,\exists y_1(D_{y_1}):\varphi
          \label{equiv:exists_exists} \\
       \forall x_1\,\forall x_2:\varphi &\quad\equiv\quad \forall x_2\,\forall x_1:\varphi
          \label{equiv:forall_forall} \\
       \forall x\,\exists y(D_y):\varphi &\quad\equiv\quad \exists y(D_y)\,\forall x:\varphi
          \quad\text{ if } x\notin D_y.
          \label{equiv:forall_exists}
    \end{align}}
    \end{subequations}
  \end{small}
\end{restatable}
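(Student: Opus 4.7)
The plan is to prove each of the twelve rules individually, using the recursive characterization of $\sem{\cdot}$ from Theorem \ref{th:semantics} as the main tool. For the $\equiv$ rules, I want to show that both sides yield exactly the same set of Skolem functions; for the $\approx$ rules, it suffices to convert a Skolem function for one side into one for the other. Throughout, I must carefully track how the sets $\varex[\cdot]$, $\varall[\cdot]$, $\varfree[\cdot]$ and the dependency sets $D_y$ change under each syntactic rearrangement, together with the constraints Definition \ref{def:skolem_function_candidates} imposes on candidates (in particular that $\support(s(v)) \subseteq D_v \cap \varall[\cdot]$).

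The easiest group contains rules that are essentially a single application of Theorem \ref{th:semantics}. Rule \eqref{equiv:indep_exists} is exactly clause \eqref{th:semantics:exists}. Rules \eqref{equiv:exists_exists} and \eqref{equiv:forall_forall} follow by applying the relevant clause twice and noting the symmetry between the two bound variables. Rules \eqref{th:rules3} and \eqref{equiv:exists} concern a quantifier-free $\varphi$, so $\sem{\varphi}$ collapses to ordinary propositional validity and the rules become Shannon expansion and its dual witness-extraction for $\exists$.

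A second group consists of rules that push a quantifier across a non-interacting subformula, namely \eqref{equiv:indep}, \eqref{equiv:forall_and2}, \eqref{equiv:forall_or} and \eqref{equiv:exists_and}. For each I unfold the relevant clause of Theorem \ref{th:semantics} on both sides and observe that the side condition (such as $x \notin \var[\varphi_1]$, possibly strengthened by $x \notin D_y$ for $y \in \varex[\varphi_1]$) is exactly what allows the two candidates $s_0, s_1$ required in \eqref{th:semantics:forall} to be chosen so that they agree on the variables of $\varphi_1$; the $\mathrm{ITE}$-combination then collapses and the $\forall x$ can be pulled out of $\varphi_1$. Whether the claim is $\equiv$ or only $\approx$ is governed by whether the Skolem function candidate spaces on both sides coincide: for \eqref{equiv:forall_or} and \eqref{equiv:exists_and} the extra dependency-set hypothesis enforces this equality and we obtain $\equiv$, whereas for \eqref{equiv:indep} and \eqref{equiv:forall_and2} the LHS allows $s(y)$ with $y \in \varex[\varphi_1]$ and $x \in D_y$ to genuinely depend on $x$, while the RHS ($\varphi^{-x}$ or $\varphi_1^{-x}$) does not, so only $\approx$ can be claimed.

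The substantive cases are the two renaming rules \eqref{equiv:forall_and}, \eqref{equiv:exists_or} and the swap rule \eqref{equiv:forall_exists}. For \eqref{equiv:forall_and}, the $(\Rightarrow)$ direction combines \eqref{th:semantics:forall} and \eqref{th:semantics:and} to extract $s_0, s_1 \in \sem{\varphi_1 \land \varphi_2}$ whose $\varphi_1$-parts witness $\forall x:\varphi_1$ and whose $\varphi_2$-parts, after renaming $x$ to $x'$, witness $\forall x':\varphi_2[\sfrac{x'}{x}]$. For $(\Leftarrow)$ I recombine independently chosen Skolem functions from the two RHS conjuncts into candidates $s_0, s_1 \in \sem{\varphi_1 \land \varphi_2}$ that witness the LHS via \eqref{th:semantics:forall}; because the RHS admits independent choices for the two conjuncts, only satisfiability is preserved, which is why the claim is $\approx$ and not $\equiv$. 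Rule \eqref{equiv:exists_or} is dual. For the swap \eqref{equiv:forall_exists}, the hypothesis $x \notin D_y$ serves two purposes simultaneously: it is precisely the well-formedness condition of rule 5 of Figure \ref{fig:rules} applied to $\exists y(D_y)\,\forall x:\varphi$, and it forces the clause \eqref{th:semantics:forall} to require $s_0(y) = s_1(y)$ on both sides, so the Skolem functions produced on the two sides are literally identical and the sets $\sem{\cdot}$ coincide. The main technical obstacle I anticipate is the bookkeeping in \eqref{equiv:forall_and} and \eqref{equiv:exists_or}: I must verify that the recombined candidate lies in $\sfunc{\cdot}$ (respecting every support constraint of Definition \ref{def:skolem_function_candidates}), that the renaming preserves the disjointness requirement \eqref{eq:disjoint} of Figure \ref{fig:rules}, and that the nested $\mathrm{ITE}$-combinations arising from double applications of \eqref{th:semantics:forall} glue together consistently.
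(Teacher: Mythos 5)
Your proposal is correct, and for most of the twelve rules it coincides with what the paper does: \eqref{equiv:indep_exists}, \eqref{th:rules3}, \eqref{equiv:exists_exists}, \eqref{equiv:forall_forall} and \eqref{equiv:forall_exists} are read off from Theorem~\ref{th:semantics}; \eqref{equiv:exists} and \eqref{equiv:exists_or} are proved by witness conversion exactly as you describe (with the disjointness of the universal variable sets justifying ``a disjunction is a tautology iff one disjunct is''); and for \eqref{equiv:forall_or} and \eqref{equiv:exists_and} the paper simply observes that the two sides have identical candidate spaces and that applying a candidate $s$ erases the quantifiers, so $s(\psi_1)=s(\psi_2)$ literally --- a one-line shortcut compared to your unfolding of clause \eqref{th:semantics:forall}, but the same substance. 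The genuine divergence is in \eqref{equiv:indep}, \eqref{equiv:forall_and} and \eqref{equiv:forall_and2}: the paper does not prove these at the top level at all, but derives them as corollaries of the stronger subformula-replacement statements (Theorems~\ref{th:equiv:indep}, \ref{th:forall_and} and \ref{th:equisat:forall_and}), whose proofs work directly with Definition~\ref{def:sem} and use the monotonicity Lemma~\ref{lemma:monotonic} to localize a falsifying assignment to the replaced subformula. You instead propose a direct decomposition--recombination argument via clauses \eqref{th:semantics:forall} and \eqref{th:semantics:and}; this works, because the only coordination that \eqref{th:semantics:forall} demands between the two conjuncts concerns shared free variables (which are pinned to $t$'s values on both sides) and existential variables (which are disjoint by condition~\eqref{eq:disjoint}), so two independently chosen witness pairs on the right-hand side can always be glued into one pair for the left-hand side. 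What the paper's route buys is the more general subformula versions, which are needed later in Section~\ref{sec:algorithm} and which your top-level argument would not immediately yield; what your route buys is uniformity, since every rule is handled by the same recursive characterization. One small imprecision: for \eqref{equiv:exists_and} there is no ``extra dependency-set hypothesis'' --- the condition is only $y\notin\var[\varphi_1]$ --- but your conclusion that the candidate spaces coincide and hence $\equiv$ holds is still right.
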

\addtocounter{footnote}{-1}
\footnotetext{A generalization of this rule (which is not needed in this paper, however) holds also for the case that $\varall[\varphi] = \varex[\varphi] = \emptyset$ does not hold. In this case the quantified variables in
$\varphi[\sfrac{0}{y}]$ or in $\varphi[\sfrac{1}{y}]$ have to be renamed in order to satisfy the conditions
of Definition~\ref{def:syntax}.}
\addtocounter{footnote}{1}
\footnotetext{By $\varphi_2[\sfrac{x'}{x}]$ we mean that all occurrences of $x$ are replaced by $x'$, including the occurrences in dependency sets.}

Note that the duality of $\exists$ and $\forall$ under negation as in QBF
($\exists \varphi \equiv \neg\forall\neg\varphi$) does not
hold for DQBF as DQBFs are not syntactically closed under negation~\cite{BalabanovCJ14}.

\begin{example}
\label{ex3}
We give an example that shows that
--~in contrast to \eqref{qbfapprox:q_op} of Theorem~\ref{th:qbfquantifierlocalization} for QBF~--
the condition
$x\notin D_y$ for all $y\in\varex[\varphi_1]$
is really needed in \eqref{equiv:forall_or} if $\op = \lor$.
We consider the satisfiable DQBF
$\psi \colonequals \forall x_1 \forall x_2 : \bigl((x_1\equiv x_2) \vee (\exists y_1(x_2) : (x_1\not\equiv y_1))\bigr)$
from Example~\ref{ex1} again.
First of all, neglecting the above condition, we could transform $\psi$ into
$\psi' \colonequals \forall x_1 : \bigl((\forall x_2 : (x_1\equiv x_2)) \vee (\exists y_1(x_2) : (x_1\not\equiv y_1))\bigr)$,
which is not well-formed according to Definition~\ref{def:syntax}.
However, by renaming $x_2$ into $x'_2$ in the dependency set of $y_1$ we would arrive
at a well-formed DQBF
$\psi'' \colonequals \forall x_1  : \bigl((\forall x_2 : (x_1\equiv x_2)) \vee (\exists y_1(x'_2) : (x_1\not\equiv y_1))\bigr)$.
According to Definition~\ref{def:skolem_function_candidates} the only possible Skolem function
candidates for $y_1$ in $\psi''$ are $\fzero$ and $\fone$.
It is easy to see that neither inserting $\fzero$ nor $\fone$ for $y_1$ turns
$\psi''$ into a tautology, thus $\psi''$ is unsatisfiable and therefore
neither equivalent to nor equisatisfiable with $\psi$.
\end{example}

Whereas the proof of Theorem~\ref{th:qbfquantifierlocalization} for QBF is rather easy using
the equisatisfiabilities $\exists y : \varphi \approx \varphi[\sfrac{0}{y}] \lor \varphi[\sfrac{1}{y}]$
and $\forall x : \varphi \approx \varphi[\sfrac{0}{x}] \land \varphi[\sfrac{1}{x}]$, the proof of Theorem~\ref{th:rules}
is more involved. We provide a detailed proof in \ref{app:ruleproof}.

It is also important to note that some of the rules in Theorem~\ref{th:rules}
establish equivalences and some establish equisatisfiabilities only.
Whereas this might seem to be negligible if we are only interested in the
question whether a formula is satisfiable or not, it turns out to be essential
in the context of Section~\ref{sec:algorithm} where we replace \emph{subformulas} by
equivalent or equisatisfiable formulas. Replacing a subformula of a formula $\psi$
by an equisatisfiable subformula does not necessarily preserve satisfiability / unsatisfiability.
This observation is trivially true already for pure propositional logic
(\eg $x_1$ is equisatisfiable with $x_1 \lor x_2$, but
$x_1 \land \neg x_1$ is \emph{not} equisatisfiable with $(x_1 \lor x_2) \land \neg x_1$).
Here we show a more complex example for DQBFs:
\begin{example}[label=ex:counter]
Let us consider the DQBF
$\psi \colonequals \exists y(\emptyset): \bigl((x \land y) \lor (\neg x \land \neg y)\bigr)$,
which is, according to \eqref{equiv:exists_or}, equisatisfiable with
$\psi' \colonequals \bigl(\exists y(\emptyset) : (x \land y)\bigr)
\lor \bigl(\exists y'(\emptyset): (\neg x \land \neg y')\bigr)$.
$\forall x : \psi$
is unsatisfiable, since for the choice $s(y) = \fzero$ we have $s(\forall x : \psi) \equiv \neg x$
and for the choice $s(y) = \fone$ we have $s(\forall x : \psi) \equiv x$, \ie for both possible choices for the
Skolem function candidates we do not obtain a tautology.
However,
$\forall x : \psi'$ is satisfiable by $s(y) = \fone$ and $s(y') = \fzero$.
\end{example}

It is easy to see that situations like in Example~\ref{ex:counter}
do not occur when we replace \emph{equivalent} subformulas:

\begin{theorem}
\label{th:equiv_subformulas}
Let $\psi$ be a DQBF and $\psi_1 \subformula \psi$ be a subformula of $\psi$.
Let $\psi_2$ be a DQBF that is equivalent to $\psi_1$,
$\varex[\psi_1] = \varex[\psi_2]$, and
each existential variable $y$ has the same dependency set in $\psi_2$ as in $\psi_1$.
Then the DQBF $\psi' \colonequals \psi[\sfrac{\psi_2}{\psi_1}]$, which results from replacing $\psi_1$ by $\psi_2$,
is equivalent to $\psi$.
\end{theorem}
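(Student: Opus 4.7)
The plan is to prove the claim by structural induction on $\psi$, leveraging the recursive semantic characterization furnished by Theorem~\ref{th:semantics}. The central observation is that each clause of that theorem expresses $\sem{\psi}$ entirely in terms of (i)~the semantics of the immediate subformulas of $\psi$ and (ii)~purely syntactic data, namely the sets $\varex[\cdot]$, $\varall[\cdot]$, $\varfreesupp[\cdot]$, and the dependency sets $D_w$ for $w\in\varex[\cdot]$. The hypotheses on $\psi_1$ and $\psi_2$ are precisely chosen so that all of this syntactic data is preserved under the replacement; combined with the hypothesis $\sem{\psi_1} = \sem{\psi_2}$, this forces $\sem{\psi} = \sem{\psi'}$.

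Before starting the induction I would record a short auxiliary lemma: if $\psi_2$ agrees with $\psi_1$ on $\varex$, $\varall$, $\varfreesupp$ and on every dependency set, then $\psi' = \psi[\sfrac{\psi_2}{\psi_1}]$ is again a well-formed element of $\npnc$, and $\varex[\psi'] = \varex[\psi]$, $\varall[\psi'] = \varall[\psi]$, $\varfreesupp[\psi'] = \varfreesupp[\psi]$, with identical dependency sets. In particular, $\sfunc{\psi'} = \sfunc{\psi}$. The verification amounts to checking, by induction, that the disjointness condition \eqref{eq:disjoint} in rules~3 and~4, and the dependency-set constraint $D_v \subseteq V \setminus (\var[\varphi]^Q \dcup \{v\})$ in rule~5, continue to hold after substitution; this is routine given that the two subformulas share all relevant variable sets.

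The induction itself then proceeds by cases on the outermost rule forming $\psi$. The base case is $\psi_1 = \psi$, where $\psi' = \psi_2$ and $\psi \equiv \psi'$ holds by assumption; this also covers rules~1 and~2 of Figure~\ref{fig:rules}, where $\psi$ has no proper subformulas. If $\psi = \varphi_1 \op \varphi_2$ for $\op\in\{\land,\lor\}$ and $\psi_1$ is a strict subformula, then $\psi_1 \subformula \varphi_i$ for exactly one $i\in\{1,2\}$; the induction hypothesis gives $\varphi_i[\sfrac{\psi_2}{\psi_1}] \equiv \varphi_i$, and clauses \eqref{th:semantics:and}/\eqref{th:semantics:or} of Theorem~\ref{th:semantics}, combined with the auxiliary lemma, immediately yield $\sem{\psi} = \sem{\psi'}$. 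For $\psi = \exists v(D_v):\varphi^{-v}$ I first observe that, since $\psi_2$ has the same dependency sets as $\psi_1$, the $-v$ operator commutes with substitution, so $\psi' = \exists v(D_v):\varphi[\sfrac{\psi_2}{\psi_1}]^{-v}$; the induction hypothesis and clause \eqref{th:semantics:exists} then close the case. For $\psi = \forall v:\varphi$, the induction hypothesis gives $\varphi[\sfrac{\psi_2}{\psi_1}] \equiv \varphi$, and clause \eqref{th:semantics:forall} preserves the semantics because its right-hand side depends only on $\sem{\varphi}$ and on the $D_w$'s, all of which are untouched.

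The main obstacle I anticipate is not any single inductive step but the bookkeeping in the auxiliary lemma: I must verify that the substitution respects the well-formedness side-conditions of Definition~\ref{def:syntax} (notably \eqref{eq:disjoint} and the constraint in rule~5), and that the $-v$ operator commutes with substitution. Both facts hinge on the hypothesis that $\psi_2$ mirrors $\psi_1$ on all quantified variables and dependency sets; once this groundwork is laid, each inductive case is essentially a one-line appeal to the corresponding clause of Theorem~\ref{th:semantics}.
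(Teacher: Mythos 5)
Your proposal is correct and follows essentially the same route as the paper's own (sketched) argument: structural induction on $\psi$ using the recursive characterization of Theorem~\ref{th:semantics}, with the hypotheses on $\varex$ and the dependency sets doing the real work in the quantifier cases (most critically clause~\eqref{th:semantics:forall}, where free and existential variables and their dependency sets are treated differently). Your explicit auxiliary lemma on preservation of well-formedness and of the variable sets just makes precise the bookkeeping that the paper leaves implicit.
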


\begin{proofsketch}
The proof easily follows from the fact that the set of Skolem functions is identical
for equivalent subformulas and from the recursive characterization of the semantics of DQBFs in
Theorem~\ref{th:semantics}.
If the existential variables in $\psi_1$ and $\psi_2$ as well as their dependency sets
are identical, then the same holds
for all $\psi'_1 \sqsubseteq \psi$ and $\psi'_2 \sqsubseteq \psi'$ with
$\psi_1' \not\sqsubseteq \psi_1$ and $\psi_2' \not\sqsubseteq \psi_2$.
We make use of this condition in part \eqref{th:semantics:forall} of
the inductive proof that shows $\psi \equiv \psi'$.
Assume $\psi_1\ \subformula\ \forall v:\varphi$.
In part \eqref{th:semantics:forall}, free and existential variables
of $\forall v:\varphi$
(resp.~of $\forall v:\varphi[\sfrac{\psi_2}{\psi_1}]$)
are handled
differently and therefore it is not enough to inductively assume that the Skolem functions
for $\varphi$ and $\varphi[\sfrac{\psi_2}{\psi_1}]$
are identical, but existential / free variables should also not have ``changed their type''.
Moreover, existential variables with different dependency sets are handled differently (see last two lines
of \eqref{th:semantics:forall}), so we also have to assume that the dependency set of each existential variable $y$
in $\psi_1$ is the same as its dependency set in $\psi_2$.
\qed
\end{proofsketch}

\begin{remark}
Theorem~\ref{th:equiv_subformulas} says that it is safe to
apply rules \eqref{th:rules3}, \eqref{equiv:forall_or}, \eqref{equiv:exists_and},
\eqref{equiv:exists_exists}, \eqref{equiv:forall_forall}, and \eqref{equiv:forall_exists}
to subformulas as well.
Theorem~\ref{th:equiv_subformulas} does not imply this for rule~\eqref{equiv:indep_exists},
since the sets of existential variables of the left-hand side formula and the
right-hand side formula are different.
\end{remark}

Since we are still interested in obtaining equisatisfiable formulas by
replacing equisatisfiable subformulas, we need to have a closer look at the
rules \eqref{equiv:indep_exists}, \eqref{equiv:indep}, \eqref{equiv:exists}, \eqref{equiv:forall_and}, \eqref{equiv:forall_and2}
and \eqref{equiv:exists_or}. Example~\ref{ex:counter} already shows that
we will not be able to achieve our goal in all cases without considering additional
conditions.

We start with rule \eqref{equiv:indep_exists}.
With the restriction $y\notin\var[\varphi]$ rule~\eqref{equiv:indep_exists}
can simply be generalized to subformulas:
\begin{theorem}
\label{th:equiv:indep_exists}
  Let $\psi \in \npnc$ be a DQBF over $V$
  and let $\exists y(D_y): \varphi\ \subformula\ \psi$ be a subformula of $\psi$
  with $y\notin\var[\varphi]$.
  Then $\psi \approx \psi'$ where $\psi'$ results from $\psi$ by replacing the subformula
  $\exists y(D_y): \varphi$ by $\varphi$.
\end{theorem}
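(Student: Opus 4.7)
The plan is to show equisatisfiability directly, by arguing that the value of any Skolem function candidate at $y$ has no bearing on whether the candidate satisfies $\psi$. First I would verify that the domains of Skolem function candidates for $\psi$ and $\psi'$ coincide: removing the $\exists y(D_y)$ quantifier reclassifies $y$ from existential to free but leaves $\varall$ unchanged, so $\varall[\psi'] = \varall[\psi]$ and $\varfree[\psi'] \cup \varex[\psi'] = \varfree[\psi] \cup \varex[\psi]$. Moreover $\sfunc{\psi'} \subseteq \sfunc{\psi}$, since the $\psi'$-constraint that $s(y)$ is constant is a special case of $\support(s(y)) \subseteq D_y \cap \varall[\psi]$.

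The key observation to establish next is that $y$ does not occur anywhere in $\psi$ except as the variable bound by the quantifier $\exists y(D_y)$, apart from possible appearances inside dependency sets of existential variables nested in $\varphi$. The hypothesis $y \notin \var[\varphi]$ gives the claim for $\varphi$ itself. To extend it to the ancestors of $\exists y(D_y): \varphi$ within $\psi$, I do structural induction on the derivation of $\psi$ above this subformula. Rules~5 and~6 of Definition~\ref{def:syntax} are immediate, and for rules~3 and~4 (forming $\varphi_1 \land \varphi_2$ or $\varphi_1 \lor \varphi_2$) the disjointness condition~\eqref{eq:disjoint} yields $\varfreesupp[\varphi_2] \cap \var[\varphi_1]^Q = \emptyset$ and $\bigcup_{z \in \varex[\varphi_2]} D_z \cap \var[\varphi_1]^Q = \emptyset$ (and symmetrically). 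Hence, if the binding lies inside $\varphi_1$, then $y \in \varex[\varphi_1] \subseteq \var[\varphi_1]^Q$ cannot appear in the support or in any dependency set of the sibling $\varphi_2$.

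With this in hand, equisatisfiability drops out. For every $s \in \sfunc{\psi}$, the quantifier-free Boolean formula $s(\psi)$ obtained by applying $s$ to $\psi$ and dropping all quantifiers is independent of $s(y)$: the substitution for $y$ is vacuous since $y$ does not occur in the matrix, and the possible presence of $y$ in some $D_z$ of an inner existential $z$ is harmless because $\support(s(z)) \subseteq D_z \cap \varall[\psi]$ while $y \notin \varall[\psi]$. Therefore, given $s \in \sem{\psi}$, the candidate $s'$ defined by $s'(y) \colonequals \fzero$ and $s'(v) \colonequals s(v)$ for $v \neq y$ lies in $\sfunc{\psi'}$ and satisfies $s'(\psi') = s(\psi)$, so $s' \in \sem{\psi'}$. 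The converse is immediate, since for $s \in \sfunc{\psi'}$ we already have $s \in \sfunc{\psi}$ and $s(\psi) = s(\psi')$. This yields $\sem{\psi} = \emptyset \Leftrightarrow \sem{\psi'} = \emptyset$, i.e.\ $\psi \approx \psi'$.

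The main obstacle is the structural argument in the second paragraph: one has to walk carefully through each rule of Definition~\ref{def:syntax} and invoke~\eqref{eq:disjoint} to confirm that an existentially bound $y$ cannot leak into any sibling subformula either as a support variable or inside a dependency set. Once this is granted, the rest is routine bookkeeping directly from Definition~\ref{def:sem}.
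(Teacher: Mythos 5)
Your proposal is correct and follows essentially the same route as the paper's proof: one direction is immediate from $\sfunc{\psi'}\subseteq\sfunc{\psi}$ and $s(\psi)=s(\psi')$, and the other direction sets $s'(y)$ to a constant and uses that $y$ occurs nowhere in $\psi$ outside the quantifier $\exists y(D_y)$. The only difference is that you spell out the structural induction (via condition~\eqref{eq:disjoint} and rule~5) that the paper compresses into the remark ``due to the rules in Definition~\ref{def:syntax}''; this added detail is sound and, if anything, makes the argument more self-contained.
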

\begin{proof}
For each $s \in \sem{\psi'}$ we have $s \in \sem{\psi}$
and $\vDash s(\psi')$ implies $\vDash s(\psi)$.

Now assume $\sem{\psi} \neq \emptyset$ and $s \in \sem{\psi}$ with $\vDash s(\psi)$.
Define $s' \in \sem{\psi'}$ by $s'(y) \colonequals c$ for some $c \in \{\fzero, \fone\}$ and
$s'(v) \colonequals s(v)$ for $v\neq y$. Since $y\notin\var[\varphi]$,
the only occurrence of $y$ in $\psi$ is in $\exists y(D_y)$
due to the rules in Definition~\ref{def:syntax}
and there is no occurrence of $y$ in $\psi'$, \ie $s(\psi) = s'(\psi')$.
Thus $\vDash s(\psi)$ implies $\vDash s'(\psi')$.
\end{proof}

Next we consider rule \eqref{equiv:indep}. Although this rule establishes equisatisfiability only,
it may be generalized to the replacement of subformulas:

\begin{theorem}
\label{th:equiv:indep}
  Let $\psi \in \npnc$ be a DQBF
  and let $\forall x : \varphi\ \subformula\ \psi$ be a subformula of $\psi$
  with $x\notin\var[\varphi]$.
  Then $\psi \approx \psi'$ where $\psi'$ results from $\psi$ by replacing the subformula
  $\forall x : \varphi$ by $\varphi^{-x}$.
\end{theorem}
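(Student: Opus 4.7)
My plan is to prove equisatisfiability by exhibiting explicit maps between $\sem{\psi}$ and $\sem{\psi'}$, leveraging that $x$ never appears structurally in either formula. I would first record the bookkeeping: because $x$ is bound by the inner $\forall$, the convention in the remark after Definition~\ref{def:syntax} forces $x$ to occur nowhere else in $\psi$ as a bound or free variable, and the side condition on dependency sets in rule~5 together with the disjointness condition~\eqref{eq:disjoint} of Figure~\ref{fig:rules} force $x\notin D_y$ for every $y\in\varex[\psi]\setminus\varex[\varphi]$. Hence the only occurrences of $x$ in $\psi$ are the binder $\forall x$ itself and possibly some dependency sets $D_y$ with $y\in\varex[\varphi]$. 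Consequently $\varex[\psi]=\varex[\psi']$, $\varall[\psi]=\varall[\psi']\dcup\{x\}$, $\varfree[\psi']=\varfree[\psi]\dcup\{x\}$, and each $D_y$ in $\psi'$ equals $D_y$ in $\psi$ with $x$ removed.

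For the direction $(\Leftarrow)$, given $s'\in\sem{\psi'}$ I would define $s\in\sfunc{\psi}$ by dropping the constant $s'(x)$ and viewing each $s'(y)\in\boolf{\varall[\psi']}$ as an element of $\boolf{\varall[\psi]}$ that does not depend on $x$; the support condition in $\sfunc{\psi}$ is met since $D_y\cap\varall[\psi']\subseteq D_y\cap\varall[\psi]$. Because $x$ in $\psi$ only appears in quantifiers, which Notation~\ref{not:skolem} strips, and because no $s(y)$ depends on $x$, the quantifier-free formulas $s(\psi)$ and $s'(\psi')$ are textually identical, so $\vDash s'(\psi')$ gives $\vDash s(\psi)$ and $s\in\sem{\psi}$. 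For $(\Rightarrow)$, given $s\in\sem{\psi}$, the function $s(y)$ for $y\in\varex[\varphi]$ with $x\in D_y$ may genuinely depend on $x$, which I would remove by cofactoring: set $s'(y)\colonequals s(y)[\sfrac{0}{x}]$ for such $y$, $s'(v)\colonequals s(v)$ for every other $v\in\varfree[\psi]\dcup\varex[\psi]$, and $s'(x)\colonequals\fzero$. The supports now avoid $x$, so $s'\in\sfunc{\psi'}$, and because the only occurrences of $x$ in $s(\psi)$ come through those $s(y)$, substitution commutes with the cofactor at $x=0$ and we have $s'(\psi')=s(\psi)[\sfrac{0}{x}]$. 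Since $s(\psi)$ is a tautology over $\varall[\psi]$, its cofactor at $x=0$ is a tautology over $\varall[\psi']$, yielding $s'\in\sem{\psi'}$.

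The main obstacle is less conceptual than bookkeeping: I must carefully rule out stray occurrences of $x$ anywhere in $\psi$ other than its binder and possibly dependency sets of existentials inside $\varphi$, so that $s(\psi)$ really carries $x$ only through the $s(y)$'s. Once this is confirmed via the rules in Figure~\ref{fig:rules} and the one-quantifier-per-variable convention, the cofactor identity $s'(\psi')=s(\psi)[\sfrac{0}{x}]$ becomes transparent and the rest is routine.
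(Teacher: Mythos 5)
Your proposal is correct and follows essentially the same route as the paper's proof: the backward direction embeds a Skolem function for $\psi'$ directly as one for $\psi$, and the forward direction cofactors the Skolem functions of the existentials in $\varphi$ at a constant value of $x$ (the paper allows an arbitrary $c\in\{0,1\}$ where you fix $c=0$) and observes that a tautology remains a tautology after restricting $x$. Your explicit bookkeeping that $x\notin D_y$ for all $y\in\varex[\psi]\setminus\varex[\varphi]$ is a welcome justification of a step the paper leaves implicit, but it does not change the argument.
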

\begin{proof}
  If $\sem{\psi'} \neq \emptyset$, then for each $s \in \sem{\psi'}$
  with $\vDash s(\psi')$ we also have $\vDash s(\psi)$ and $s \in \sem{\psi}$.
  $s$ is a Skolem function candidate for $\psi$ as well, since the supports
  of the Skolem function candidates for $\psi$ cannot be more restricted than
  those for $\psi'$.

  Now assume $\sem{\psi} \neq \emptyset$ and $s \in \sem{\psi}$.
  Choose $s'$ with $s'(v) \colonequals s(v)[\sfrac{c}{x}]$ for all existential
  variables $v \in \varex[\varphi]$ for an arbitrary constant $c \in \{0, 1\}$
  and $s'(v) \colonequals s(v)$ otherwise.
  It is clear that $s'$ is a Skolem function candidate for $\psi'$.
  Now consider an arbitrary assignment $\mu' \in \assign(\varall[\psi'])$.
  Choose $\mu \in \assign(\varall[\psi])$ by $\mu(v) \colonequals \mu'(v)$ for all
  $v \in \varall[\psi']$ and $\mu(x) \colonequals c$.
  Because of $x\notin\var[\varphi]$, $x$ occurs in $s(\psi)$ only in Skolem functions.
  Therefore $\mu'\bigl(s'(\psi')\bigr) = \mu\bigl(s(\psi)\bigr)$ and $\mu'\bigl(s'(\psi')\bigr) = 1$,
  since $s(\psi)$ is a tautology.
  This proves that $s'(\psi')$ is a tautology as well and $\sem{\psi'} \neq \emptyset$.
\end{proof}

Next, we look into rule \eqref{equiv:exists}. Theorem~\ref{th:rules2}
is a variant of this rule which is suitable for replacements of subformulas.
Here we have the first situation that we need additional conditions
for the proof to go through in the more general context of replacing subformulas.
Theorem~\ref{th:rules2} is strongly needed for our
algorithm taking advantage of quantifier localization.
It shows that, under certain conditions, we can do symbolic quantifier
elimination for non-prenex DQBFs as it is known from QBFs:
\begin{restatable}{theorem}{rulesTwo}
  \label{th:rules2}
  Let $\psi \in \npnc$ be a DQBF and let $\exists y(D_y): \varphi\ \subformula\ \psi$ be a subformula of $\psi$
  such that $\varall[\varphi] = \varex[\varphi] = \emptyset$
  and
  $\var[\varphi] \subseteq D_y \cup \varfree[\psi] \cup \{v \in \varex[\psi] \; | \; D_v \subseteq D_y\}$.
  Then $\psi \approx \psi'$ where $\psi'$ results from $\psi$ by replacing the subformula
  $\exists y(D_y): \varphi$ by $\varphi[\sfrac{0}{y}] \lor \varphi[\sfrac{1}{y}]$.
\end{restatable}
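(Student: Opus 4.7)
The plan is to prove the two directions of the equisatisfiability separately, with the main tool being the \emph{NNF monotonicity} of $\psi$: since $\psi$ is in negation normal form, the propositional formula $s(\psi)$ is built from literals using only $\land$ and $\lor$, so it is monotone nondecreasing in the truth value of every subformula occurrence; hence replacing a subformula by a pointwise-larger (smaller) one only enlarges (shrinks) $s(\psi)$ pointwise. Write $\alpha \colonequals \exists y(D_y):\varphi$ and $\beta \colonequals \varphi[\sfrac{0}{y}] \lor \varphi[\sfrac{1}{y}]$. I would first verify that $\psi' \in \npnc$: because $\varall[\beta] = \varex[\beta] = \emptyset$, condition~\eqref{eq:disjoint} for rule~4 is trivially satisfied by $\beta$, and $\var[\beta]^Q \subseteq \var[\alpha]^Q$ keeps all ancestor applications of rules~3--6 valid after the substitution.

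For direction~1 ($\sem{\psi} \neq \emptyset \Rightarrow \sem{\psi'} \neq \emptyset$), given $s \in \sem{\psi}$, define $s' \in \sfunc{\psi'}$ by copying $s$ on all variables except $y$ and setting $s'(y) \colonequals \fzero$ arbitrarily (in $\psi'$ the variable $y$ becomes a vacuously free variable with empty support). For any $\mu \in \assign(\varall[\psi])$, the value $c \colonequals s(y)(\mu) \in \{0,1\}$ is a Boolean constant, so $\mu(s(\alpha)) = \mu(s(\varphi[\sfrac{c}{y}])) \leq \mu(s'(\beta))$. Applying NNF monotonicity at the position of $\alpha$ in $\psi$ lifts this pointwise inequality to $\mu(s(\psi)) \leq \mu(s'(\psi'))$, and $\vDash s(\psi)$ yields $\vDash s'(\psi')$.

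For direction~2 ($\sem{\psi'} \neq \emptyset \Rightarrow \sem{\psi} \neq \emptyset$), given $s' \in \sem{\psi'}$, set $s(v) \colonequals s'(v)$ for $v \neq y$ and define $s(y) \in \boolf{\varall[\psi]}$ by $s(y)(\mu) \colonequals \mu(s'(\varphi[\sfrac{1}{y}]))$. The main obstacle is to check that $s(y)$ is a legitimate Skolem function candidate, that is, $\support(s(y)) \subseteq D_y \cap \varall[\psi]$. This is exactly where the hypothesis $\var[\varphi] \subseteq D_y \cup \varfree[\psi] \cup \{v \in \varex[\psi] \mid D_v \subseteq D_y\}$ is indispensable: every variable appearing in $s'(\varphi[\sfrac{1}{y}])$ after all $s'$-substitutions is either a universal variable forced into $D_y$, a free variable of $\psi$ replaced by a Boolean constant, or an existential variable of $\psi$ whose Skolem function has support in $D_v \cap \varall[\psi] \subseteq D_y \cap \varall[\psi]$. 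Consequently $s'(\varphi[\sfrac{1}{y}])$, viewed as a function of $\varall[\psi]$, depends only on $D_y \cap \varall[\psi]$. A case split on $\mu(s'(\varphi[\sfrac{1}{y}]))$ then shows $\mu(s'(\beta)) = 1 \Rightarrow \mu(s(\alpha)) = 1$: if the $\varphi[\sfrac{1}{y}]$-disjunct is true at $\mu$ then by construction $s(y)(\mu) = 1$ and $\mu(s(\alpha)) = \mu(s(\varphi[\sfrac{1}{y}])) = 1$; otherwise $s(y)(\mu) = 0$ and the $\varphi[\sfrac{0}{y}]$-disjunct must be true, giving $\mu(s(\alpha)) = \mu(s(\varphi[\sfrac{0}{y}])) = 1$. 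NNF monotonicity again lifts this pointwise inequality to $\vDash s(\psi)$.

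In summary, the crux of the proof is the support analysis for the constructed $s(y)$ in direction~2; everything else reduces to the elementary propositional equivalence $\exists y\,\varphi \equiv \varphi[\sfrac{0}{y}] \lor \varphi[\sfrac{1}{y}]$ combined with NNF monotonicity. The hypothesis on $\var[\varphi]$ is precisely what upgrades rule~\eqref{equiv:exists}, valid a priori only at the top level of a DQBF, into a substitutable subformula rewrite.
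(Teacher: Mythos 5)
Your proof is correct, and the two constructions at its core are exactly the ones the paper uses: in the direction from $\sem{\psi'}$ to $\sem{\psi}$ you set $s(y)\colonequals s'(\varphi[\sfrac{1}{y}])$ and justify $s\in\sfunc{\psi}$ by the same support analysis driven by the hypothesis on $\var[\varphi]$, and in both directions you lift a comparison of the two subformulas to the whole formula via the monotonicity of the NNF context (the paper's Lemma~\ref{lemma:monotonic}). Where you genuinely diverge is in how the subformulas themselves are compared. The paper invokes the composition theorem of Jiang et al.\ twice --- the equivalence of $\varphi[\sfrac{\varphi[\sfrac{1}{y}]}{y}]$ with $\varphi[\sfrac{0}{y}]\lor\varphi[\sfrac{1}{y}]$ --- to obtain the \emph{equality} $s(\psi)=s'(\psi')$ in one direction, and in the other direction routes through an intermediate candidate $s''$ together with a ``largest function satisfying (1) and (2)'' characterization of existential quantification before applying monotonicity. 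You replace all of this with two elementary pointwise case analyses: for $\sem{\psi}\Rightarrow\sem{\psi'}$ you observe that $s(y)(\mu)$ is some constant $c$, so $\mu\bigl(s(\alpha)\bigr)=\mu\bigl(s'(\varphi[\sfrac{c}{y}])\bigr)\leq\mu\bigl(s'(\beta)\bigr)$, which makes the intermediate $s''$ and the maximality argument unnecessary; for the converse you split on the value of $\mu\bigl(s'(\varphi[\sfrac{1}{y}])\bigr)$ to get $\mu\bigl(s'(\beta)\bigr)\leq\mu\bigl(s(\alpha)\bigr)$. This buys a self-contained proof with no external citation and, arguably, makes it more transparent why only an inequality in each direction (rather than the equality the paper establishes) is needed; the paper's route buys a one-line justification at the key step and the stronger intermediate fact $s(\psi)=s'(\psi')$. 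Your added remark that $\psi'$ remains well-formed in $\npnc$ (since $\var[\beta]^Q=\emptyset$ and $\varfreesupp[\beta]\subseteq\varfreesupp[\alpha]$ preserve condition~\eqref{eq:disjoint} and the side conditions of rules 5 and 6 at all ancestors) is a detail the paper leaves implicit, and is worth keeping.
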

The proof of Theorem~\ref{th:rules2} is somewhat involved and uses results from \cite{Jiang09}.
\begin{proofsketch}
We show equisatisfiability by proving that $\sem{\psi'} \neq \emptyset$ implies
$\sem{\psi} \neq \emptyset$ and vice versa.

First assume that there is a Skolem function $s' \in  \sem{\psi'}$ with $\vDash s'(\psi')$.
We define $s \in \sfunc{\psi}$ by $s(v) \colonequals s'(v)$ for all $v \in \varex[\psi'] \cup \varfree[\psi'] \setminus \{y\}$
and $s(y) \colonequals s'(\varphi[\sfrac{1}{y}])$.
The fact that $s \in \sfunc{\psi}$ follows from the restriction that
$\varphi$ contains only variables from $D_y \cup \varfree[\psi] \cup \{v \in \varex[\psi] \; | \; D_v \subseteq D_y\}$, \ie
$\support\bigl(s(y)\bigr) = \support\bigl(s'(\varphi[\sfrac{1}{y}])\bigr) \subseteq D_y$.
$\vDash s(\psi)$ follows by some rewriting from a result in \cite{Jiang09} proving that
quantifier elimination can be done by composition, \ie
$\varphi[\sfrac{\varphi[\sfrac{1}{y}]}{y}]$ is equivalent to $\varphi[\sfrac{0}{y}] \lor \varphi[\sfrac{1}{y}]$.

Now assume $s \in  \sem{\psi}$ with $\vDash s(\psi)$ and define
$s'$ just by removing $y$ from the domain of $s$.
In a first step we change $s$ into $s''$ by replacing $s(y)$ with
$s''(y) \colonequals s'(\varphi)[\sfrac{1}{y}]$.
We conclude
$\vDash s''(\psi)$ from \cite{Jiang09} and monotonicity properties
of $\psi$ in negation normal form. In a second step we use
\cite{Jiang09} again to show that $s''(\psi)$ is equivalent to
$s'(\psi')$. Thus finally $\vDash s'(\psi')$.
The detailed proof can be found in \ref{app:rules2}.\qed
\end{proofsketch}

The generalization of rule \eqref{equiv:forall_and} to replacements of subformulas is formulated
in Theorem~\ref{th:forall_and}. Here we do not need any additional conditions:

\begin{restatable}{theorem}{forallAnd}
  \label{th:forall_and}
  Let $\psi \in \npnc$ be a DQBF and let $\psi_1\colonequals \forall x:(\varphi_1\land\varphi_2)\ \subformula\ \psi$ be a subformula of $\psi$.
  Then $\psi \approx \psi'$ where $\psi'$ results from $\psi$ by replacing the subformula
  $\psi_1$ by $\psi_2\colonequals \bigl(\forall x:\varphi_1\bigr)\land\bigl(\forall x':\varphi_2[\sfrac{x'}{x}]\bigr)$.
\end{restatable}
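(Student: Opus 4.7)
The plan is to prove equisatisfiability in both directions by explicit Skolem-function constructions and to lift them from the replaced subformula to the surrounding context by structural induction. The key observation is that the substitution $[\sfrac{x'}{x}]$ is a bijection between Skolem-function candidates for $\varphi_2$ in $\psi_1$ (which may depend on $x$) and those for $\varphi_2[\sfrac{x'}{x}]$ in $\psi_2$ (which depend on $x'$ instead). By the disjointness condition~\eqref{eq:disjoint} of Definition~\ref{def:syntax} and the freshness of $x'$, neither $x$ nor $x'$ can occur anywhere in $\psi$ or $\psi'$ outside of $\psi_1$ resp.\ $\psi_2$, so the local modification of Skolem functions has no global side effects.

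First I would handle the base case $\psi = \psi_1$ (and so $\psi' = \psi_2$). Given $s \in \sem{\psi_1}$ I set $s'(y) \colonequals s(y)[\sfrac{x'}{x}]$ for every $y \in \varex[\varphi_2]$ and $s'(v) \colonequals s(v)$ otherwise; the support condition of Definition~\ref{def:skolem_function_candidates} is met by construction. To verify $\vDash s'(\psi_2)$, I fix any $\mu' \in \assign(\varall[\psi_2])$ and introduce two assignments $\mu_1, \mu_2 \in \assign(\varall[\psi_1])$ that agree with $\mu'$ on the shared universals but set $\mu_1(x) \colonequals \mu'(x)$ and $\mu_2(x) \colonequals \mu'(x')$. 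From $\vDash s(\psi_1)$ I obtain $\mu_1(s(\varphi_1)) = 1$ and $\mu_2(s(\varphi_2)) = 1$, which---using that $[\sfrac{x'}{x}]$ commutes with evaluation---equal $\mu'(s'(\forall x:\varphi_1))$ and $\mu'(s'(\forall x':\varphi_2[\sfrac{x'}{x}]))$, respectively. The converse direction is symmetric: I define $s(y) \colonequals s'(y)[\sfrac{x}{x'}]$ for $y \in \varex[\varphi_2]$ and extend any $\mu \in \assign(\varall[\psi_1])$ to $\mu'$ by $\mu'(x') \colonequals \mu(x)$.

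To lift this to an arbitrary context $\psi = C[\psi_1]$, I would proceed by structural induction on $C$, guided by the recursive characterization of $\sem{\cdot}$ from Theorem~\ref{th:semantics}. The conjunction and disjunction cases (rules~3 and~4) reduce to the induction hypothesis on the side containing $\psi_1$: by~\eqref{eq:disjoint} and freshness of $x'$, the sibling side is unaffected and well-formedness is preserved. The existential case is immediate from clause~\eqref{th:semantics:exists}. The universal case $\psi = \forall v:\varphi$ is the only delicate one: by~\eqref{th:semantics:forall} a Skolem function of $\forall v:\varphi$ is built from two Skolem functions $s_0, s_1$ of $\varphi$ via $\mathrm{ITE}(v,\cdot,\cdot)$, and I need to observe that this combination commutes with $[\sfrac{x'}{x}]$---which holds because $v$ is neither $x$ (already quantified inside $\psi_1$) nor $x'$ (fresh). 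The main obstacle will be the bookkeeping in this universal step: verifying that the three side conditions of~\eqref{th:semantics:forall} on free variables and on existentials $w$ with $v \notin D_w$ or $v \in D_w$ remain consistent after the renaming, noting in particular that $v \in D_w$ holds for the same $w$ in $\psi$ and in $\psi'$ since the dependency sets differ only by $x \leftrightarrow x'$.
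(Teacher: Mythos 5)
Your proposal is correct in substance but takes a genuinely different route from the paper. The paper avoids any induction over the context: it constructs $t'$ from $t$ by the same renaming $t'(v)\colonequals t(v)[\sfrac{x'}{x}]$ for $v\in\varex[\varphi_2]$, assumes for contradiction an assignment falsifying $t'(\psi')$, and then invokes Lemma~\ref{lemma:monotonic} --- exploiting that in NNF the replaced subformula sits only under conjunctions and disjunctions --- to localize the discrepancy to the pair $t(\psi_1)$, $t'(\psi_2)$, from which a falsifying assignment for $t(\psi)$ is manufactured by transferring the value of $x'$ back to $x$. Your structural induction over the context, guided by Theorem~\ref{th:semantics}, can be made to work, but only if you strengthen the inductive invariant: what you must carry through the induction is not mere equisatisfiability of $C'[\psi_1]$ and $C'[\psi_2]$ (which is not a congruence --- the paper's Example~\ref{ex:counter} exhibits exactly this failure) but the statement that the renaming of the Skolem functions of the existentials of $\varphi_2$ induces a bijection between $\sem{C'[\psi_1]}$ and $\sem{C'[\psi_2]}$. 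Your remarks about the bijection on candidates, about the local modification having ``no global side effects'', and about the $\mathrm{ITE}$ combination in \eqref{th:semantics:forall} commuting with $[\sfrac{x'}{x}]$ show you have this invariant in mind, but it must be stated explicitly as the induction hypothesis; with that made precise, your case analysis (including the observations that $v\neq x$ and $v\neq x'$ in the universal step and that $v\in D_w$ is invariant under the renaming) goes through. The trade-off: your route yields the stronger conclusion that $\sem{\psi}$ and $\sem{\psi'}$ are in bijection via the renaming, at the price of a longer case analysis, whereas the paper's monotonicity argument obtains equisatisfiability in a few lines but gives no structural information about the Skolem-function sets.
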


Before proving Theorem~\ref{th:forall_and}, we consider a lemma which will
be helpful for the proofs of Theorems~\ref{th:forall_and}, \ref{th:equisat:forall_and},
and \ref{th:equisat:exists_or}.

\begin{lemma}\label{lemma:monotonic}
  Let $\varphi_1 \subformula \varphi$ be quantifier-free Boolean formulas
  such that $\varphi_1$ is not in the scope of any negation from $\varphi$.
  Let $\varphi' \colonequals \varphi[\sfrac{\varphi_2}{\varphi_1}]$  be
  the Boolean formula resulting from the replacement of $\varphi_1$ by the
  quantifier-free Boolean formula $\varphi_2$ and let
  $\mu \in \assign(\var[\varphi] \cup \var[\varphi'])$.
  If $\mu(\varphi) = 1$ and $\mu(\varphi') = 0$, then $\mu(\varphi_1) = 1$ and
  $\mu(\varphi_2) = 0$.
\end{lemma}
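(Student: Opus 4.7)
The plan is to prove the lemma by structural induction on $\varphi$. The intuition is that the condition ``$\varphi_1$ is not in the scope of any negation'' means that $\varphi$ is monotonically increasing in $\varphi_1$ (viewed as a propositional variable); so if the value of $\varphi$ drops when $\varphi_1$ is replaced by $\varphi_2$, the only way this can happen is that $\varphi_1$ took value $1$ while $\varphi_2$ takes value $0$.

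I would organize the induction on how $\varphi$ is built up from variables, literals, $\land$, $\lor$, and $\neg$. The base case is $\varphi = \varphi_1$: here $\varphi' = \varphi_2$, so the hypotheses $\mu(\varphi)=1$ and $\mu(\varphi')=0$ directly give $\mu(\varphi_1)=1$ and $\mu(\varphi_2)=0$. If $\varphi$ is a literal and $\varphi \neq \varphi_1$, then $\varphi_1$ is not a proper subformula of $\varphi$ so nothing needs to be shown (or the case is vacuous). For $\varphi = \varphi_a \land \varphi_b$, the subformula $\varphi_1$ lies in $\varphi_a$ or $\varphi_b$; by symmetry assume $\varphi_1 \sqsubseteq \varphi_a$, so $\varphi' = \varphi_a[\sfrac{\varphi_2}{\varphi_1}]\land \varphi_b$. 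From $\mu(\varphi)=1$ we get $\mu(\varphi_b)=1$, and from $\mu(\varphi')=0$ we then deduce $\mu(\varphi_a[\sfrac{\varphi_2}{\varphi_1}])=0$; moreover, $\mu(\varphi_a)=1$. Since $\varphi_1$ is still not in the scope of any negation within $\varphi_a$, the induction hypothesis applies to $\varphi_a$ and yields $\mu(\varphi_1)=1$, $\mu(\varphi_2)=0$. The $\lor$ case is entirely symmetric, starting from $\mu(\varphi')=0$ to force $\mu(\varphi_b)=0$ and then $\mu(\varphi_a)=1$ from $\mu(\varphi)=1$.

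The remaining case is $\varphi = \neg \varphi_a$. If $\varphi_1 = \varphi$, we are in the base case again; otherwise $\varphi_1$ is a proper subformula of $\varphi_a$ and therefore lies in the scope of the leading negation of $\varphi$, contradicting the hypothesis of the lemma. Hence this case cannot arise.

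The main subtlety (and the only place where the monotonicity hypothesis really bites) is confirming that the ``not in the scope of negation'' property is preserved when passing to the relevant subformula in the inductive step. This is straightforward, because no new negation is introduced by descending into a conjunct or disjunct, so every path that was free of negations above $\varphi_1$ in $\varphi$ remains free of negations in $\varphi_a$ (resp.~$\varphi_b$). Once this is observed the induction goes through mechanically; no further case analysis is required.
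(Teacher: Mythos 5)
Your proof is correct, but it takes a different route from the paper's. You prove the lemma by direct structural induction on $\varphi$, peeling off one connective at a time and pushing the hypotheses $\mu(\varphi)=1$, $\mu(\varphi')=0$ down to the occurrence of $\varphi_1$; the case analysis for $\land$, $\lor$, and $\neg$ is handled cleanly, and your observation that descending into a conjunct or disjunct preserves the ``not in the scope of a negation'' property is exactly the point that makes the induction go through. The paper instead argues globally: it abstracts $\varphi_1$ as if it were a fresh variable, invokes monotonicity of $\land$ and $\lor$ to get $\mu(\varphi[\sfrac{0}{\varphi_1}])\leq\mu(\varphi[\sfrac{1}{\varphi_1}])$, identifies $\varphi'[\sfrac{c}{\varphi_2}]$ with $\varphi[\sfrac{c}{\varphi_1}]$, and then rules out three of the four possible value pairs $(\mu(\varphi_1),\mu(\varphi_2))$ by a short contradiction argument. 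The two proofs rest on the same underlying fact (you even state the monotonicity intuition up front), but the paper packages the induction once and for all inside the monotonicity claim and then reasons purely by case analysis on truth values, whereas you carry the induction through the lemma statement itself. Your version is more self-contained and mechanical; the paper's is shorter on the page but leaves the monotonicity step as an appeal to a standard fact. One small point worth making explicit in your write-up: the replacement is of a single designated occurrence of $\varphi_1$ (a node of the formula graph), which is what licenses your ``by symmetry assume $\varphi_1$ lies in $\varphi_a$'' step; if $\varphi_1$ could denote several occurrences, that step would need the minor adjustment of picking a conjunct/disjunct whose value actually changes.
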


\begin{proof}
  By assumptions, $\varphi_1$ is only in the scope of conjunctions and disjunctions.
  Due to monotonicity of conjunctions and disjunctions we have
  $\mu(\varphi[\sfrac{0}{\varphi_1}])\leq\mu(\varphi[\sfrac{1}{\varphi_1}])$.

  Moreover, by construction, we have
  $\varphi'[\sfrac{0}{\varphi_2}] = \varphi[\sfrac{0}{\varphi_1}]$ and
  $\varphi'[\sfrac{1}{\varphi_2}]) = \varphi[\sfrac{1}{\varphi_1}]$.
  Further, $\mu(\varphi) = \mu(\varphi[\sfrac{c}{\varphi_1}])$ if
  $\mu(\varphi_1) = c \in \{0, 1\}$, and
  $\mu(\varphi') = \mu(\varphi'[\sfrac{c}{\varphi_2}])$ if
  $\mu(\varphi_2) = c \in \{0, 1\}$.

  $\mu(\varphi_1) = \mu(\varphi_2)$ is not possible, since we assumed
  $\mu(\varphi) = 1$ and $\mu(\varphi') = 0$, \ie
  $\mu(\varphi) \neq \mu(\varphi')$.
  From $\mu(\varphi_1) = 0$ and $\mu(\varphi_2) = 1$ we could conclude
  $\mu(\varphi) = \mu(\varphi[\sfrac{0}{\varphi_1}]) \leq
  \mu(\varphi[\sfrac{1}{\varphi_1}]) =
  \mu(\varphi'[\sfrac{1}{\varphi_2}]) =
  \mu(\varphi')$, which also contradicts $\mu(\varphi) = 1$ and $\mu(\varphi') = 0$.
\end{proof}

\begin{proofsketch}[Theorem~\ref{th:forall_and}]
The proof of Theorem~\ref{th:forall_and} uses Lemma~\ref{lemma:monotonic} to lift
rule~\eqref{equiv:forall_and} to the more general case of replacements of subformulas.
It is easy, but rather technical, and can be found in \ref{app:forall_and}.\qed
\end{proofsketch}

The next theorem shows that rule \eqref{equiv:forall_and2} can also be generalized to replacements of
subformulas without needing additional conditions.
\begin{theorem}
  \label{th:equisat:forall_and}
  Let $\psi \in \npnc$ be a DQBF and let $\psi_1 \colonequals
  \forall x:(\varphi_1 \land \varphi_2)\ \subformula\ \psi$ be a subformula of $\psi$
  with $x\notin\var[\varphi_1]$.
  Then $\psi \approx \psi'$ where $\psi'$ results from $\psi$ by replacing the subformula
  $\psi_1$ by $\psi_2 \colonequals \bigl(\varphi_1^{-x} \land (\forall x:\varphi_2)\bigr)$.
\end{theorem}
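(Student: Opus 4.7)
The plan is to derive this theorem as a short chain of two previously established subformula-replacement results together with an $\alpha$-renaming step, rather than redoing a Lemma~\ref{lemma:monotonic}-style argument from scratch.

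First, I apply Theorem~\ref{th:forall_and} to the subformula $\psi_1 = \forall x:(\varphi_1 \land \varphi_2)$ of $\psi$. This yields $\psi \approx \psi''$, where $\psi''$ results from $\psi$ by replacing $\psi_1$ with $(\forall x:\varphi_1) \land (\forall x':\varphi_2[\sfrac{x'}{x}])$ for a fresh variable $x'$. Second, since by hypothesis $x \notin \var[\varphi_1]$, Theorem~\ref{th:equiv:indep} applies to the subformula $\forall x:\varphi_1$ of $\psi''$ and replaces it by $\varphi_1^{-x}$, giving $\psi'' \approx \psi'''$; at this point $\psi'''$ contains $\varphi_1^{-x} \land (\forall x':\varphi_2[\sfrac{x'}{x}])$ in place of $\psi_1$. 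Third, I $\alpha$-rename the bound variable $x'$ in the second conjunct back to $x$. This renaming is sound because $x$ no longer occurs anywhere in $\psi'''$ outside that conjunct: the original outer $\forall x$ has been consumed by the first transformation, $x$ does not appear in $\varphi_1^{-x}$ by hypothesis, and the Remark following Definition~\ref{def:syntax} explicitly permits such relabeling. After the rename the second conjunct becomes $\forall x:\varphi_2$, so the resulting formula is exactly $\psi'$, and chaining the three (equi)satisfiabilities yields $\psi \approx \psi'$.

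The main potential obstacle is bookkeeping rather than deep mathematics: one has to verify at each step that the rewrite really is a legitimate application of the quoted theorem at the subformula level, and that the disjointness and freshness requirements of Definition~\ref{def:syntax} are maintained after each rewrite (in particular, that the fresh $x'$ introduced by Theorem~\ref{th:forall_and} does not clash with any variable appearing elsewhere in $\psi$, and that the dependency sets of existential variables in $\varphi_1^{-x}$ remain well-formed). A self-contained alternative, should the reduction be considered unsatisfactory, would mirror the direct proof of Theorem~\ref{th:forall_and}: cofactor $s(y)$ at an arbitrary constant $c \in \{0,1\}$ for every $y \in \varex[\varphi_1]$ to obtain a Skolem function candidate respecting the stricter dependency sets of $\psi'$, then use Lemma~\ref{lemma:monotonic} together with the observation that $x$ occurs in the Skolemized image of $\psi$ only inside the image of $\psi_1$ (because $\psi$ is in NNF and $x$ is bound inside $\psi_1$) to conclude the tautology check by monotonicity of the positive occurrence of $\psi_1$ in $\psi$.
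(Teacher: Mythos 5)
Your reduction is correct and uses the same two ingredients as the paper's own proof: Theorem~\ref{th:forall_and} followed by Theorem~\ref{th:equiv:indep}. The only difference is bookkeeping: the paper applies Theorem~\ref{th:forall_and} with the conjuncts swapped, so that the fresh variable lands on the $\varphi_1$-side, \ie $\psi_1$ is replaced by $\bigl(\forall x':\varphi_1[\sfrac{x'}{x}]\bigr)\land\bigl(\forall x:\varphi_2\bigr)$; then Theorem~\ref{th:equiv:indep} strips $\forall x'$ (using $x'\notin\var[{\varphi_1[\sfrac{x'}{x}]}]$ and $\varphi_1[\sfrac{x'}{x}]^{-x'}=\varphi_1^{-x}$), and the second conjunct is already literally $\forall x:\varphi_2$. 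This ordering avoids your third step entirely: your $\alpha$-renaming of $x'$ back to $x$ is sound for the reasons you give (by well-formedness $x$ occurs nowhere in $\psi$ outside $\psi_1$, not even in dependency sets, so no capture is possible), but it appeals to a bound-variable renaming principle that the paper never states as a formal rule, whereas the paper's variant stays entirely within its proven toolkit (modulo the equally implicit commutativity of $\land$).
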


\begin{proof}
The proof easily follows from Theorem~\ref{th:forall_and} and Theorem~\ref{th:equiv:indep}.
By Theorem~\ref{th:forall_and}, $\psi \approx \psi''$ with
$\psi'' \colonequals \psi[\sfrac{\psi_3}{\psi_1}]$ and
$\psi_3 \colonequals \bigl(\forall x':\varphi_1[\sfrac{x'}{x}]\bigr) \land \bigl(\forall x:\varphi_2\bigr)$.
Since $x\notin\var[\varphi_1]$, we have $x'\notin\var[{\varphi_{1}[\sfrac{x'}{x}]}]$, and due to
Theorem~\ref{th:equiv:indep}
$\psi'' \approx \psi'''$ with
$\psi''' \colonequals \psi[\sfrac{\psi_4}{\psi_1}]$ and
$\psi_4 \colonequals \varphi_1[\sfrac{x'}{x}]^{-x'} \land \bigl(\forall x:\varphi_2\bigr)$.
Because of $\varphi_1[\sfrac{x'}{x}]^{-x'} = \varphi_1^{-x}$ we have $\psi_4 = \psi_2$
and thus $\psi \approx \psi'$.
\end{proof}

Finally, in case of rule~\eqref{equiv:exists_or} we need non-trivial additional restrictions
to preserve satisfiability / unsatisfiability of DQBFs where a subformula
$\exists y(D_y):(\varphi_1\lor\varphi_2)$ is replaced by
$\bigl(\exists y(D_y):\varphi_1\bigr) \lor \bigl(\exists y'(D_y):\varphi_2[y'/y]\bigr)$
(or vice versa).

\begin{theorem}
  \label{th:equisat:exists_or}
  Let $\psi \in \npnc$ be a DQBF and let
  $\psi_1 \colonequals \exists y(D_y):(\varphi_1\vee\varphi_2) \sqsubseteq \psi$ be a subformula of $\psi$. Further, let
  \begin{align*}
    \varocc[\varphi_1] &\colonequals \Big[(\varall[\psi] \cap \var[\varphi_1]) \cup \bigcup_{v \in \varex[\psi] \cap \var[\varphi_1]} (\varall[\psi] \cap D_v) \Big] \setminus D_y,\\
    \varocc[\varphi_2] &\colonequals \Big[(\varall[\psi] \cap \var[\varphi_2]) \cup \bigcup_{v \in \varex[\psi] \cap \var[\varphi_2]} (\varall[\psi] \cap D_v)\Big] \setminus D_y,\\
    \varocc[\psi \setminus \psi_1] &\colonequals \Big[\varall[{\psi[\sfrac{0}{\psi_1}]}] \cup \bigcup_{v \in \varex[{\psi[\sfrac{0}{\psi_1}]}]} (\varall[\psi] \cap D_v)\Big]\,.\\
  \end{align*}

  If $\varocc[\varphi_1] \cap \varocc[\varphi_2] = \emptyset$ and
  $\varocc[\varphi_1] \cap \varocc[\psi \setminus \psi_1] = \emptyset$
  (or $\varocc[\varphi_2] \cap \varocc[\psi \setminus \psi_1] = \emptyset$),
  then $\psi\approx\psi'$ where $\psi' \colonequals \psi[\sfrac{\psi_2}{\psi_1}]$ results from $\psi$ by replacing
  $\psi_1$ by
  $\psi_2 \colonequals (\exists y(D_y):\varphi_1)\vee(\exists y'(D_y):\varphi_2[\sfrac{y'}{y}])$
  with $y'$ being a fresh variable.
\end{theorem}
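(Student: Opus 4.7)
The plan is to prove both implications of $\approx$ separately; only the direction $\psi'$ sat $\Rightarrow$ $\psi$ sat will use the disjointness assumptions. For the easy direction, given $s\in\sem{\psi}$, I would set $s'(y'):=s(y)$ and $s'(v):=s(v)$ for every other variable. Unfolding the substitution shows that $s'(\psi')$ and $s(\psi)$ are literally the same quantifier-free formula (since replacing $y$ by $s(y)$ in $\varphi_1$ and $y'$ by $s(y)$ in $\varphi_2[\sfrac{y'}{y}]$ yields the same expression as replacing $y$ by $s(y)$ in $\varphi_1\vee\varphi_2$), so $\vDash s(\psi)$ immediately gives $\vDash s'(\psi')$.

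For the hard direction I assume WLOG that $\varocc[\varphi_1]\cap\varocc[\psi\setminus\psi_1]=\emptyset$ (the other case is symmetric). Because $\psi$ is in NNF, the hole of $\psi_1$ sits only below $\wedge$, $\vee$, and $\forall$, so after fixing some $s'\in\sem{\psi'}$, substituting Skolem values for all existentials outside $\psi_1$, and dropping the universal quantifiers, the surrounding propositional context is monotone in its argument (via the argument behind Lemma~\ref{lemma:monotonic}). Pointwise in $\mu\in\assign(\varall[\psi])$, this lets me write
\[
  s'(\psi')(\mu)\;=\;C_\bot(\mu)\vee\bigl(s'(\psi_2)(\mu)\wedge C_\top(\mu)\bigr),
\]
where $C_\bot,C_\top$ depend only on $\mu|_{\varocc[\psi\setminus\psi_1]}$. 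The tautology $s'(\psi')\equiv 1$ forces $C_\top\equiv 1$ and, letting $B(\mu):=\neg C_\bot(\mu)$, also $s'(\psi_2)(\mu)=1$ whenever $B(\mu)=1$. Moreover, with $D:=D_y\cap\varall[\psi]$, the value $\varphi_i(\mu,b)$ (Skolem functions for the other existentials already plugged in) depends only on $\mu|_{D\cup\varocc[\varphi_i]}$.

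I then construct $s$ by copying $s'$ on every variable but $y$ and, for each $\mu_1\in\assign(D)$, choosing $s(y)(\mu_1)\in\{0,1\}$ to be some $b$ satisfying $\varphi_1(\mu,b)\vee\varphi_2(\mu,b)=1$ for every extension $\mu$ of $\mu_1$ with $B(\mu)=1$. Granted that such a $b$ always exists, a case split on $B(\mu)$ in the analogous decomposition of $s(\psi)$ gives $\vDash s(\psi)$. The central claim is this existence; I would prove it by contradiction. If both $b=0$ and $b=1$ fail at some $\mu_1$, there are witnesses $\mu^{(0)},\mu^{(1)}$ extending $\mu_1$ with $B(\mu^{(i)})=1$ and $\varphi_1(\mu^{(i)},i)=\varphi_2(\mu^{(i)},i)=0$. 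Combined with $s'(\psi_2)(\mu^{(i)})=1$, these failures force $s'(y)(\mu_1)\neq s'(y')(\mu_1)$, so WLOG $s'(y)(\mu_1)=0$ and $s'(y')(\mu_1)=1$, which yields $\varphi_2(\mu^{(0)},1)=1$ and $\varphi_1(\mu^{(1)},0)=1$.

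The contradiction is produced by stitching an assignment $\mu^*$ extending $\mu_1$ whose $\varocc[\varphi_1]$-part comes from $\mu^{(0)}$ and whose $\bigl(\varocc[\varphi_2]\cup\varocc[\psi\setminus\psi_1]\bigr)$-part comes from $\mu^{(1)}$. The two disjointness assumptions make these patches consistent; at $\mu^*$ one then obtains $\varphi_1(\mu^*,0)=0$ and $\varphi_2(\mu^*,1)=0$, so $s'(\psi_2)(\mu^*)=0$, while $B(\mu^*)=B(\mu^{(1)})=1$ contradicts the decomposition of $s'(\psi')\equiv 1$. The main obstacle is precisely this stitching: the condition $\varocc[\varphi_2]\cap\varocc[\psi\setminus\psi_1]=\emptyset$ is \emph{not} part of the hypothesis, so I must align the $\varphi_2$-relevant and the context-relevant parts of $\mu^*$ with the same witness ($\mu^{(1)}$) and similarly handle the symmetric sub-case $s'(y)(\mu_1)=1,\, s'(y')(\mu_1)=0$; the overlaps with $D$ are benign because $\mu^{(0)}$ and $\mu^{(1)}$ both extend the common $\mu_1$.
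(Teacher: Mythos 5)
Your proof is correct, and while it rests on the same underlying mechanism as the paper's, it is organized quite differently. Both proofs handle the easy direction identically (set $s'(y')\colonequals s(y)$ and observe $s'(\psi')=s(\psi)$), and in the hard direction both build $s(y)$ pointwise on $\assign(D_y\cap\varall[\psi])$ out of $s'(y)$ and $s'(y')$ and then refute a hypothetical failure by splicing assignments together --- which is precisely what the two disjointness hypotheses license, together with the observation that after substituting Skolem functions $s'(\varphi_i)$ depends only on $\varocc[\varphi_i]\cup D_y$ and the surrounding context only on $\varocc[\psi\setminus\psi_1]$. The differences: the paper defines $s(y)(\mu)$ explicitly (take $s'(y)(\mu)$ when $s'(\varphi_1)$ evaluates to $1$ under every extension of $\mu|_{D_y}$, else $s'(y')(\mu)$), argues via a four-way case split on $\bigl(\mu(s'(\varphi_1)),\mu(s'(\varphi_2[\sfrac{y'}{y}]))\bigr)$ at a single failing assignment, invokes Lemma~\ref{lemma:monotonic} for the monotonicity of the context, and needs only one patch (grafting the $\varocc[\varphi_1]$-coordinates of a witness onto $\mu$). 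You instead make the monotone-context decomposition $C_\bot(\mu)\vee\bigl(s'(\psi_2)(\mu)\wedge C_\top(\mu)\bigr)$ explicit, choose $s(y)(\mu_1)$ non-constructively as any bit that works on all relevant extensions, and concentrate the whole argument into one existence claim refuted by stitching two witnesses $\mu^{(0)},\mu^{(1)}$. Your version isolates the role of each hypothesis in a single splicing step, at the cost of a slightly more delicate stitch; you correctly identify the one point requiring care --- since only $\varocc[\varphi_1]\cap\varocc[\psi\setminus\psi_1]=\emptyset$ is assumed, the $\varocc[\varphi_2]$- and context-relevant parts of $\mu^*$ must come from the same witness, with the roles of $\mu^{(0)}$ and $\mu^{(1)}$ swapping between the two sub-cases $s'(y)(\mu_1)=0,s'(y')(\mu_1)=1$ and its mirror --- and your handling of it is sound, since $\varocc[\varphi_1]$ is disjoint from $D_y$, from $\varocc[\varphi_2]$, and from $\varocc[\psi\setminus\psi_1]$, while any overlap of $\varocc[\psi\setminus\psi_1]$ with $D_y$ is harmless because both witnesses extend the common $\mu_1$.
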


$\varocc[\varphi_1]$ ($\varocc[\varphi_2]$) is the set of all universal variables occurring in $\varphi_1$
($\varphi_2$)
or in dependency sets of existential variables occurring in $\varphi_1$ ($\varphi_2$), reduced by
the dependency set $D_y$ of $y$.
$\varocc[\psi \setminus \psi_1]$ is the set of universal variables occurring in $\psi$ outside the
subformula $\psi_1$ or in dependency sets of existential variables occurring in $\psi$ outside the
subformula $\psi_1$. In the proof we use that $\varocc[\varphi_1] \cap \varocc[\varphi_2] = \emptyset$
implies that --~after replacing existential variables by Skolem functions~--
$\varphi_1$ and $\varphi_2$ do not share universal variables other than those
from $D_y$.

Before we present the proof of Theorem~\ref{th:equisat:exists_or}, we consider two examples
to motivate that it is necessary to add the conditions in the theorem.

\begin{example}[continues=ex:counter]
The first example is again the formula
$\forall x \exists y(\emptyset): \bigl((x \land y) \lor (\neg x \land \neg y)\bigr)$,
which showed that rule~\eqref{equiv:exists_or} cannot be always applied for replacing
subformulas without changing the satisfiability of the formula. We now demonstrate
that one of the conditions from Theorem~\ref{th:equisat:exists_or} does indeed not
hold for this formula.
Using the same notation as in Theorem~\ref{th:equisat:exists_or} we have
$\psi_1 \colonequals \exists y(\emptyset) :\bigl((x \land y) \lor (\neg x \land \neg y)\bigr)$
with $\varphi_1 \colonequals (x \land y)$ and $\varphi_2 \colonequals (\neg x \land \neg y)$.
Then $\varocc[\varphi_1] = \varocc[\varphi_2] = \{x\}$, which means that the condition
$\varocc[\varphi_1] \cap \varocc[\varphi_2] = \emptyset$ is not fulfilled and
Theorem~\ref{th:equisat:exists_or} cannot be applied.
\end{example}

To show that for the correctness of the theorem it is also needed to add condition
$\varocc[\varphi_1] \cap \varocc[\psi \setminus \psi_1] = \emptyset$
(or $\varocc[\varphi_2] \cap \varocc[\psi \setminus \psi_1] = \emptyset$),
we give another example:

\begin{example}[label=ex:counter2]
Let $\psi_1 \colonequals \exists y(\emptyset):\bigl((x_1 \land y) \lor (x_2 \land \neg y)\bigr)$
be a DQBF with $\varphi_1 \colonequals (x_1 \land y)$ and $\varphi_2 \colonequals (x_2 \land \neg y)$.
Let $\psi_2 \colonequals \bigl(\exists y (\emptyset) :(x_1 \land y) \lor \exists y' (\emptyset): (x_2 \land \neg y')\bigr)$
be the DQBF that results from $\psi_1$ by applying rule~\eqref{equiv:exists_or}.
Formula $\psi \colonequals \forall x_1 \forall x_2 : \psi_1 \lor (\neg x_1 \land \neg x_2)$
is then unsatisfiable, because for $s_0(y) = \fzero$ we have $s_0(\psi) \equiv x_2 \lor (\neg x_1 \land \neg x_2)$ and for $s_1(y) = \fone$ we have $s_1(\psi) \equiv x_1 \lor (\neg x_1 \land \neg x_2)$, \ie for both possible choices for the Skolem function candidates we do not obtain a tautology. However, formula $\psi' \colonequals \forall x_1 \forall x_2 : \psi_2 \lor (\neg x_1 \land \neg x_2)$ is satisfiable by $s(y) = \fone$ and $s(y') = \fzero$,
since $s(\psi') \equiv (x_1 \lor x_2) \lor (\neg x_1 \land \neg x_2) \equiv \fone$.

Checking the conditions of Theorem~\ref{th:equisat:exists_or},
we can see that $\varocc[\varphi_1] = \{x_1\}$ and $\varocc[\varphi_2] = \{x_2\}$,
thus $\varocc[\varphi_1] \cap \varocc[\varphi_2] = \emptyset$; however $\varocc[\psi \setminus \psi_1] = \{x_1, x_2\}$,
which means that $\varocc[\varphi_1] \cap \varocc[\psi \setminus \psi_1] \neq \emptyset$
as well as $\varocc[\varphi_2] \cap \varocc[\psi \setminus \psi_1] \neq \emptyset$.
Thus Theorem~\ref{th:equisat:exists_or} cannot be applied.
\end{example}

Now we come to the proof which shows that the conditions in the theorem are sufficient.
After the proof, we will give an illustration of the construction
by considering Example~\ref{ex:counter2} again.

\begin{proof}
  In this proof we assume that the condition $\varocc[\varphi_1] \cap \varocc[\psi \setminus \psi_1] = \emptyset$
  holds. The proof with condition $\varocc[\varphi_2] \cap \varocc[\psi \setminus \psi_1] = \emptyset$ can be
  done with exactly the same arguments.

  To prove the correctness of Theorem~\ref{th:equisat:exists_or} we show that
  $\sem{\psi}\neq\emptyset$ iff $\sem{\psi'}\neq\emptyset$.

  First, assume $\sem{\psi}\neq\emptyset$ and $s\in\sem{\psi}$.
  The function $s'$ with $s'(y')\colonequals s(y)$ and $s'(v) \colonequals s(v)$ otherwise
  is a valid Skolem function for $\psi'$ as well,
  since $s'(\psi') = s(\psi)$.

  \bigskip
  \noindent Now assume that $\sem{\psi'}\neq\emptyset$ and let $s'\in\sem{\psi'}$.
  We construct a Skolem function candidate $s$ of $\psi$ as follows:

  $s(v) \colonequals s'(v)$ for all $v \in \bigl(\varex[\psi] \setminus \{y\}\bigr) \cup \varfree[\psi]$.

  The definition of $s(y)$
  for each $\mu \in \assign(\varall[\psi])$
  is derived from $s'(y)$ and $s'(y')$:
  \[
  s(y)(\mu) \colonequals \begin{cases}
    s'(y)(\mu)  & \mbox{if } \mu'\bigl(s'(\varphi_1)\bigr) = 1 \
                  \forall \mu' \in \assign(\varall[\psi']) \mbox{ with } \mu'|_{D_y} = \mu|_{D_y}; \\
    s'(y')(\mu) & \mbox{otherwise.}
  \end{cases}
  \]
  Note that by this definition $s(y)$ only depends on universal variables from $D_y$, \ie we have defined
  a valid Skolem function candidate according to Definition~\ref{def:skolem_function_candidates}.

  We prove that $s$ is a Skolem function for $\psi$ by contradiction:
  Assume that there exists $\mu \in \assign(\varall[\psi]) = \assign(\varall[\psi'])$ with
  $\mu\bigl(s(\psi)\bigr) = 0$. $\mu\bigl(s'(\psi')\bigr) = 1$, since $s'(\psi')$ is a tautology.
  According to Lemma~\ref{lemma:monotonic}, $\mu\bigl(s'(\psi')\bigr) = 1$ and $\mu\bigl(s(\psi)\bigr) = 0$ implies
  $\mu\bigl(s'(\psi_2)\bigr) = 1$ and $\mu\bigl(s(\psi_1)\bigr) = 0$.
  Remember that we have
  $\psi_1 = \exists y(D_y):(\varphi_1\vee\varphi_2)$ and
  $\psi_2 = (\exists y(D_y):\varphi_1)\vee(\exists y'(D_y):\varphi_2[\sfrac{y'}{y}])$.
  Now we consider $\mu\bigl(s'(\varphi_1)\bigr)$ and $\mu\bigl(s'(\varphi_2[\sfrac{y'}{y}])\bigr)$
  and differentiate between four cases:
\begin{description}
  \item[Case 1:] $\mu\bigl(s'(\varphi_1)\bigr) = 0$, $\mu\bigl(s'(\varphi_2[\sfrac{y'}{y}])\bigr) = 0$. \\
  This would contradict $\mu\bigl(s'(\psi_2)\bigr) = \mu\bigl(s'(\varphi_1)\bigr) \lor \mu\bigl(s'(\varphi_2[\sfrac{y'}{y}])\bigr) = 1$.

  \item[Case 2:] $\mu\bigl(s'(\varphi_1)\bigr) = 1$, $\mu\bigl(s'(\varphi_2[\sfrac{y'}{y}])\bigr) = 1$. \\
  Since we either have $s(y)(\mu) = s'(y)(\mu)$ or $s(y)(\mu) = s'(y')(\mu)$, this
  implies $\mu\bigl(s(\varphi_1)\bigr) = 1$ or $\mu\bigl(s(\varphi_2)\bigr) = 1$. In both cases we obtain
  $\mu\bigl(s(\psi_1)\bigr) = \mu\bigl(s(\varphi_1)\bigr) \lor \mu\bigl(s(\varphi_2)\bigr) = 1$, which contradicts
  $\mu\bigl(s(\psi_1)\bigr) = 0$.

  \item[Case 3:] $\mu\bigl(s'(\varphi_1)\bigr) = 0$, $\mu\bigl(s'(\varphi_2[\sfrac{y'}{y}])\bigr) = 1$. \\
  From the definition of $s(y)$ we obtain $s(y)(\mu) = s'(y')(\mu)$
  and therefore
  $\mu\bigl(s(\varphi_2)\bigr) = \mu\bigl(s'(\varphi_2[\sfrac{y'}{y}])\bigr) = 1$
  and thus $\mu\bigl(s(\psi_1)\bigr) = \mu\bigl(s(\varphi_1)\bigr) \lor \mu\bigl(s(\varphi_2)\bigr) = 1$.
  Again, this contradicts $\mu\bigl(s(\psi_1)\bigr) = 0$.

  \item[Case 4:] $\mu\bigl(s'(\varphi_1)\bigr) = 1$, $\mu\bigl(s'(\varphi_2[\sfrac{y'}{y}])\bigr) = 0$. \\
  Our proof strategy is to show

  \textbf{Fact 1:}  $s(y)(\mu) = s'(y)(\mu)$.

  \noindent Then we would have $\mu\bigl(s(\varphi_1)\bigr) = \mu\bigl(s'(\varphi_1)\bigr) = 1$ and thus
  $\mu\bigl(s(\psi_1)\bigr) = \mu\bigl(s(\varphi_1)\bigr) \lor \mu\bigl(s(\varphi_2)\bigr) = 1$, which would again
  contradict $\mu\bigl(s(\psi_1)\bigr) = 0$.
\end{description}

  So it remains to show Fact 1.
  According to the definition of $s(y)$, $s(y)(\mu) = s'(y)(\mu)$
  iff for all $\mu' \in \assign(\varall[\psi'])$ with $\mu'_{|{D_y}} = \mu_{|{D_y}}$
  we have $\mu'\bigl(s'(\varphi_1)\bigr) = 1$.
  Assume that this is not the case, \ie there is a $\mu' \in \assign(\varall[\psi'])$ with $\mu'|_{D_y} = \mu|_{D_y}$
  and $\mu'\bigl(s'(\varphi_1)\bigr) = 0$.
  Now we show that if this were the case, then we could construct another $\mu'' \in \assign(\varall[\psi'])$
  with $\mu''\bigl(s'(\psi')\bigr) = 0$ which would contradict the fact that $s'(\psi')$ is a tautology.
  Define $\mu'' \in \assign(\varall[\psi'])$ by
  $\mu''(x) \colonequals \mu'(x)$ for all $x \in \varocc[\varphi_1]$, $\mu''(x) \colonequals \mu(x)$ otherwise.
  Since $\mu'|_{D_y} = \mu|_{D_y}$, we have $\mu''|_{D_y} = \mu'|_{D_y} = \mu|_{D_y}$.
  Since $s'(\varphi_1)$ only contains variables from $\varocc[\varphi_1]\cup D_y$ we have
  $\mu''\bigl(s'(\varphi_1)\bigr) = \mu'\bigl(s'(\varphi_1)\bigr) = 0$.
  Moreover, because $\varocc[\varphi_1] \cap \varocc[\varphi_2] = \emptyset$ (from the precondition of the theorem)
  and $s'(\varphi_2[\sfrac{y'}{y}])$ contains only variables from $\varocc[\varphi_2] \cup D_y$, we have $\mu''\bigl(s'(\varphi_2[\sfrac{y'}{y}])\bigr) = \mu\bigl(s'(\varphi_2[\sfrac{y'}{y}])\bigr) = 0$.
  Altogether we have
  $\mu''\bigl(s'(\psi_2)\bigr) = \mu''\bigl(s'(\varphi_1)\bigr) \lor \mu''\bigl(s'(\varphi_2[\sfrac{y'}{y}])\bigr) = 0$
  and therefore $\mu''\bigl(s'(\psi')\bigr) = \mu''\bigl(s'(\psi'[\sfrac{0}{\psi_2}])\bigr)$.
  Since $\mu''$ can differ from $\mu$ only for variables in $\varocc[\varphi_1]$
  and those universal variables do not occur outside $s'(\psi_2)$ due to
  the precondition $\varocc[\varphi_1] \cap \varocc[\psi \setminus \psi_1] = \emptyset$, we
  further have $\mu''\bigl(s'(\psi'[\sfrac{0}{\psi_2}])\bigr) = \mu\bigl(s'(\psi'[\sfrac{0}{\psi_2}])\bigr)$.
  By definition of $s$ as well as $\psi$ and $\psi'$, we have $\mu\bigl(s'(\psi'[\sfrac{0}{\psi_2}])\bigr)
  = \mu\bigl(s(\psi[\sfrac{0}{\psi_1}])\bigr)$ which is $0$ by our initial assumption.
  Taking the last equations together, we have $\mu''\bigl(s'(\psi')\bigr) = 0$ which is our announced contradiction to the fact
  that $s'(\psi')$ is a tautology. This completes the proof of Fact 1.

  \medskip

  In all four cases we were able to derive a contradiction and thus our assumption
  that there exists $\mu \in \assign(\varall[\psi])$ with
  $\mu\bigl(s(\psi)\bigr) = 0$ has to be wrong. $s(\psi)$ is a tautology and $\sem{\psi}\neq\emptyset$.
\end{proof}

Now we illustrate the construction of the proof by demonstrating where the construction fails when the conditions
of Theorem~\ref{th:equisat:exists_or} are not satisfied. To do so, we look into Example~\ref{ex:counter2}
again.

\begin{example}[continues=ex:counter2]
We look again at the DQBF $\psi \colonequals \forall x_1 \forall x_2 : \psi_1 \lor (\neg x_1 \land \neg x_2)$
with $\psi_1 \colonequals \exists y(\emptyset):(\varphi_1  \lor \varphi_2)$,
$\varphi_1 \colonequals (x_1 \land y)$, and $\varphi_2 =(x_2 \land \neg y)$.
$\psi'$ results from $\psi$ by replacing
$\psi_1$ by $\psi_2 \colonequals \bigl(\exists y (\emptyset): (x_1 \land y)\bigr) \lor \bigl(\exists y' (\emptyset): (x_2 \land \neg y')\bigr)$.

We already observed that the conditions of Theorem~\ref{th:equisat:exists_or} are not
satisfied for this DQBF, $\psi$ is not satisfiable, and $\psi'$ is satisfiable.
The Skolem function candidate $s'$ with $s'(y) = \fone$ and $s'(y') = \fzero$ is the only one
that satisfies $\psi'$.
Now we consider where and why the construction of a Skolem function $s$ for $\psi$ (as shown
in the proof) fails.

Since $s'(\varphi_1) \equiv x_1 \neq \fone$, the definition of $s(y)$ in the proof
leads to $s(y) = s'(y') = \fzero$.
In order to prove by contradiction that $s(\psi)$ is a tautology, the proof assumes
an assignment $\mu \in \assign\bigl(\{x_1, x_2\}\bigr)$ with
$\mu\bigl(s(\psi)\bigr) = 0$ and $\mu\bigl(s'(\psi')\bigr) = 1$.
$\bigl(\mu(x_1), \mu(x_2)\bigr) = (0, 0)$ is not possible, since in this case
$\mu\bigl(s(\psi)\bigr) = 1$ as well due to $\mu(\neg x_1 \land \neg x_2) = 1$.
Also for the cases $\bigl(\mu(x_1), \mu(x_2)\bigr) = (1, 1)$ and $\bigl(\mu(x_1), \mu(x_2)\bigr) = (0, 1)$
we obtain contradictions as given in the proof.
The interesting case (where the proof fails) is $\bigl(\mu(x_1), \mu(x_2)\bigr) = (1, 0)$.
This implies
$\mu\bigl(s'(\varphi_1)\bigr) = \mu(x_1) = 1$
and
$\mu\bigl(s'(\varphi_2[\sfrac{y'}{y}])\bigr) = \mu(x_2) = 0$, \ie we are in Case 4,
where we try to prove
Fact 1 (\ie $s(y) = s'(y)$ for the constant Skolem functions in the example)
by contradiction -- which does not work here.
Reduced to our example, Fact 1 does not hold
if there is an assignment
$\mu' \in \assign\bigl(\{x_1, x_2\}\bigr)$ with $\mu'\bigl(s'(\varphi_1)\bigr) = 0$, which just means
$\mu'(x_1) = 0$.
The proof idea is to construct from
$\mu'$
another assignment $\mu'' \in \assign\bigl(\{x_1, x_2\}\bigr)$
with $\mu''\bigl(s'(\psi')\bigr) = 0$ (contradicting the fact that $s'(\psi')$ is a tautology).
$\mu''(x) = \mu'(x)$ for all $x \in \varocc[\varphi_1] = \{x_1\}$,
$\mu''(x) = \mu(x)$ otherwise, \ie
$\mu''(x_1) = 0$ and $\mu''(x_2) = 0$,
leading to
$\mu''\bigl(s'(\psi_2)\bigr) = \mu''(x_1 \lor x_2) = 0$.

The contradiction derived in the proof relies on
the fact that $\mu''$ can differ from $\mu$ only for variables in $\varocc[\varphi_1]$,
which implies by the precondition
$\varocc[\varphi_1] \cap \varocc[\psi \setminus \psi_1] = \emptyset$
that the assignments to universal variables outside $s'(\psi_2)$ are identical
both for $\mu$ and $\mu''$. This is not true in our example where
$\varocc[\varphi_1] \cap \varocc[\psi \setminus \psi_1] = \{x_1\}$ and
$\mu(\neg x_1 \land \neg x_2) = 0$, but
$\mu''(\neg x_1 \land \neg x_2) = 1$.
Thus $\mu''\bigl(s'(\psi')\bigr) = \mu''\bigl((x_1 \lor x_2) \lor (\neg x_1 \land \neg x_2)\bigr)=1$, \ie we do not obtain the contradiction $\mu''\bigl(s'(\psi')\bigr) = 0$.
\end{example}

\subsection{Refuting Propositions~\ref{prop4} and \ref{prop5} from \cite{BalabanovCJ14}}

\noindent A first paper looking into quantifier localization for DQBF was \cite{BalabanovCJ14}.
To this end, they proposed Propositions~\ref{prop4} and \ref{prop5}, which are unfortunately unsound.
We literally repeat Proposition~\ref{prop4} from \cite{BalabanovCJ14}:
\setcounter{proposition}{3}
\begin{proposition}[\cite{BalabanovCJ14}]
  \label{prop4}
  The DQBF
  \begin{equation*}
    \label{eq1}
    \forall \vec{x}\, \exists y_1(S_1) \ldots \exists y_m(S_m) : (\phi_A \vee \phi_B)
  \end{equation*}
  where $\forall \vec{x}$ denotes $\forall x_1 \ldots \forall x_n$, sub-formula $\phi_A$
  (respectively $\phi_B$) refers to variables $X_A \subseteq X$ and $Y_A \subseteq Y$
  (respectively $X_B \subseteq X$ and $Y_B \subseteq Y$), is logically equivalent to
  \begin{align*}
    \label{eq2}
    \forall \vec{x}_c\bigl((\forall \vec{x}_a \exists y_{a_1}(S_{a_1} \cap X_A) \ldots \exists y_{a_p}(S_{a_p} \cap X_A): \phi_A) \vee {} \\
    (\forall \vec{x}_b \exists y_{b_1}(S_{b_1} \cap X_B) \ldots \exists y_{b_q}(S_{b_q} \cap X_B): \phi_B)\bigr),
  \end{align*}
  where variables $\vec{x}_c$ are in $X_A \cap X_B$,
  variables $\vec{x}_a$ are in $X_A \setminus X_B$,
  variables $\vec{x}_b$ are in $X_B \setminus X_A$,
  $y_{a_i} \in Y_A$, and $y_{b_j} \in Y_B$.
\end{proposition}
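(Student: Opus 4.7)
The plan is to \emph{refute} Proposition~\ref{prop4} by exhibiting a concrete DQBF that is satisfiable on the left-hand side of the claimed equivalence but becomes unsatisfiable after applying the transformation on the right-hand side. The source of the defect is clear from inspecting the transformation: each existential variable in $Y_A$ (resp.\ $Y_B$) is re-quantified with dependency set $S_{a_i} \cap X_A$ (resp.\ $S_{b_j} \cap X_B$), so any dependence on a universal variable that happens to occur only in the \emph{other} subformula is silently discarded. Whenever that discarded dependency is essential for building a Skolem function, the transformed formula becomes unsatisfiable even though the original one is satisfiable. So the counterexample I would construct must put an existential $y \in Y_B$ together with some universal $x \in X_A \setminus X_B$ in $S_y$, and must use that dependency crucially in the original formula.

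Concretely, I would propose the DQBF
\[
  \psi \;\colonequals\; \forall x_1 \forall x_2 \, \exists y(x_2) :
   \bigl((x_1 \equiv x_2) \vee (x_1 \not\equiv y)\bigr),
\]
which, in the notation of Proposition~\ref{prop4}, has $\phi_A = (x_1 \equiv x_2)$ with $X_A = \{x_1,x_2\}$ and $Y_A = \emptyset$, while $\phi_B = (x_1 \not\equiv y)$ with $X_B = \{x_1\}$ and $Y_B = \{y\}$. The first step is to verify that $\psi$ is satisfiable: the Skolem function $s(y) = x_2$ works, since $s(\psi) \equiv (x_1 \equiv x_2) \vee (x_1 \not\equiv x_2)$ is a tautology. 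This matches the observation already made in Example~\ref{ex1}.

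Next I would compute the right-hand side of Proposition~\ref{prop4}. From $\vec{x}_c = \{x_1\}$, $\vec{x}_a = \{x_2\}$, $\vec{x}_b = \emptyset$, and $S_y \cap X_B = \{x_2\} \cap \{x_1\} = \emptyset$, it reads
\[
  \psi^\ast \;\colonequals\; \forall x_1 \Bigl(\bigl(\forall x_2 : (x_1 \equiv x_2)\bigr) \;\vee\;
     \bigl(\exists y(\emptyset) : (x_1 \not\equiv y)\bigr)\Bigr).
\]
The left disjunct $\forall x_2 : (x_1 \equiv x_2)$ is false for every value of $x_1$, so any Skolem function for $\psi^\ast$ must satisfy the right disjunct for all $x_1$. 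But $y$ is now forced to be a constant in $\{\fzero,\fone\}$, and for either choice $x_1 \not\equiv y$ fails for one of the two values of $x_1$. Hence $\sem{\psi^\ast} = \emptyset$, so $\psi$ and $\psi^\ast$ are neither equivalent nor equisatisfiable, refuting Proposition~\ref{prop4}.

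The main obstacle is not computational but interpretive: one must make sure that $S_{b_j} \cap X_B$ in the statement really means ``drop from $S_{y}$ every universal that does not occur in $\phi_B$'', so that in the example the dependency $x_2 \in S_y$ truly is lost. Once this reading is fixed, the counterexample is minimal --- three variables and a single level of nesting --- and exposes exactly the same dependency-stripping phenomenon that motivated the extra side condition $x \notin D_y$ for all $y \in \varex[\varphi_1]$ in rule~\eqref{equiv:forall_or} and was illustrated in Example~\ref{ex3}.
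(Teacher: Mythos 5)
Your refutation is correct and is essentially identical to the paper's own proof of Lemma~\ref{lemma:prop4unsound}: the same counterexample $\forall x_1 \forall x_2 \exists y_1(x_2) : \bigl((x_1\equiv x_2) \vee (x_1\not\equiv y_1)\bigr)$, the same identification of $X_A$, $X_B$, $Y_A$, $Y_B$, and the same argument that the transformed formula forces $y$ to a constant and is therefore unsatisfiable. Nothing to add.
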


\begin{lemma}
  \label{lemma:prop4unsound}
  Proposition~\ref{prop4} is unsound.
\end{lemma}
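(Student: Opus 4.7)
The plan is to refute Proposition~\ref{prop4} by exhibiting a concrete prenex DQBF $\psi$ that is satisfiable, but whose proposed Proposition~\ref{prop4}-equivalent $\psi'$ is unsatisfiable; this already contradicts logical equivalence (and even equisatisfiability). I would reuse the running example from Example~\ref{ex1}, converted into the prenex form required on the left-hand side of Proposition~\ref{prop4}, namely $\psi \colonequals \forall x_1 \forall x_2 \exists y_1(\{x_2\}) : \bigl((x_1\equiv x_2) \vee (x_1\not\equiv y_1)\bigr)$.

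The first step is to instantiate Proposition~\ref{prop4} on $\psi$. With $\phi_A \colonequals (x_1\equiv x_2)$ and $\phi_B \colonequals (x_1\not\equiv y_1)$ one reads off $X_A = \{x_1,x_2\}$, $Y_A = \emptyset$, $X_B = \{x_1\}$, $Y_B = \{y_1\}$, and therefore $\vec x_c = \{x_1\}$, $\vec x_a = \{x_2\}$, $\vec x_b = \emptyset$, and $S_{y_1}\cap X_B = \{x_2\}\cap\{x_1\}=\emptyset$. The proposition thus asserts logical equivalence of $\psi$ with $\psi' \colonequals \forall x_1 : \Bigl(\bigl(\forall x_2 : (x_1\equiv x_2)\bigr)\ \vee\ \bigl(\exists y_1(\emptyset) : (x_1\not\equiv y_1)\bigr)\Bigr)$. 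Before proceeding I would check that $\psi'$ is a well-formed element of $\npnc$ by walking through the rules of Definition~\ref{def:syntax}; this is routine since $x_1$ is free in each of the two disjuncts, the disjointness side condition \eqref{eq:disjoint} holds after the renaming, and $y_1$'s dependency set $\emptyset$ trivially satisfies $D_{y_1}\subseteq V\setminus(\var[\cdot]^Q\dcup\{y_1\})$.

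The second step is to verify that $\psi$ is satisfiable while $\psi'$ is not. For $\psi$, the Skolem function $s(y_1) \colonequals x_2$ works: on $x_1 = x_2$ the first disjunct holds, and on $x_1\neq x_2$ we have $y_1 = x_2 \neq x_1$, so the second disjunct holds (this is exactly the computation done in Example~\ref{ex1}). For $\psi'$, observe that $\forall x_2 : (x_1\equiv x_2)$ is false under either truth value of $x_1$, so any Skolem function for $\psi'$ must ensure the second disjunct; but $D_{y_1}=\emptyset$ forces $s(y_1)\in\{\fzero,\fone\}$ by Definition~\ref{def:skolem_function_candidates}, and no constant satisfies $x_1\not\equiv y_1$ for both $x_1=0$ and $x_1=1$. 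Hence $\sem{\psi'}=\emptyset$, and $\psi \not\approx \psi'$, which refutes Proposition~\ref{prop4}. If desired, the evaluation of $\sem{\psi'}$ can be made rigorous by a bottom-up application of Theorem~\ref{th:semantics}, strictly analogous to the computation carried out in Example~\ref{ex2}.

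The only genuinely delicate point is conceptual rather than technical: to make sure that the refutation is not an artefact of our semantics disagreeing with the one implicit in \cite{BalabanovCJ14}. Since $\psi$ is closed and prenex, Definition~\ref{def:sem} coincides with Definition~\ref{def:dqbf_semantics_cp}, so the satisfiability claim for $\psi$ is unambiguous. For $\psi'$, I would note that regardless of any reasonable semantics for the non-prenex right-hand side, the dependency set of $y_1$ has been weakened from $\{x_2\}$ to $\emptyset$ and this loss is unrecoverable, matching in spirit the failure pattern already documented in Example~\ref{ex3}. That informal observation is precisely what the formal computation above confirms.
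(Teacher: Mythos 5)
Your proposal is correct and follows essentially the same route as the paper: it uses the identical counterexample $\forall x_1 \forall x_2 \exists y_1(x_2) : \bigl((x_1\equiv x_2) \vee (x_1\not\equiv y_1)\bigr)$, the same instantiation of the sets $X_A, X_B, Y_A, Y_B$, and the same argument that the rewritten formula is unsatisfiable because $y_1$'s dependency set collapses to $\emptyset$, leaving only the constant Skolem function candidates. The additional remarks on well-formedness and on the semantics coinciding with Definition~\ref{def:dqbf_semantics_cp} for the closed prenex side match observations the paper makes in the surrounding remarks.
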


\begin{proof}
Consider the following DQBF
\begin{equation*}
    \label{eq:prop4_cex}
    \psi^1 \colonequals \forall x_1 \forall x_2 \exists y_1(x_2) :
    \bigl(\underbrace{(x_1\equiv x_2)}_{\phi_A} \vee \underbrace{(x_1\not\equiv y_1)}_{\phi_B}\bigr)
\end{equation*}

By \eqref{equiv:exists_and}, $\psi^1$ is equisatisfiable with $\psi$ from Example~\ref{ex1}
and thus satisfiable.
According to Proposition \ref{prop4} we can identify the sets
$X_A  = \{x_1,x_2\}$,  $X_B = \{x_1\}$, $Y_A = \emptyset$, and $Y_B = \{y_1\}$.
and rewrite the formula to
\begin{equation}
  \label{eq:prop4_cex2}
  \psi^2 \colonequals \forall x_1 : \big( ( \forall x_2:(x_1\equiv x_2) ) \vee ( \exists y_1(\emptyset):(x_1\not\equiv y_1) ) \big).
\end{equation}
This formula, in contrast to $\psi^1$, is unsatisfiable because the only Skolem functions candidates
for $y_1$ are $\fzero$ and $\fone$. Both Skolem function candidates do not turn $\psi^2$  into a tautology.
\end{proof}
In the example from the proof, the ``main mistake'' was to replace $D_{y_1} = \{x_2\}$ by $\emptyset$.
If this \emph{were} correct, then the remainder would follow from \eqref{equiv:exists_and} and \eqref{equiv:forall_or}.
\begin{remark}
   Proposition~\ref{prop4} of \cite{BalabanovCJ14} is already unsound when we consider the
   commonly accepted semantics of closed prenex DQBFs as stated in Definition~\ref{def:dqbf_semantics_cp}.
   The proposition claims that $\psi^1$  is equisatisfiable with $\psi^2$ . Additionally, it claims
   that
   \begin{equation*}
     \label{eq:prop4_cex3}
     \psi^3 \colonequals \forall x_1 \forall x_2 \exists y_1(\emptyset) :
     \bigl((x_1\equiv x_2) \vee (x_1\not\equiv y_1)\bigr)
   \end{equation*}
   is equisatisfiable with $\psi^2$. Due to transitivity of equisatisfiability,
   Proposition~\ref{prop5} claims that $\psi^1$ is equisatisfiable with $\psi^3$.
   However, according to the
   semantics in Definition~\ref{def:dqbf_semantics_cp}, $\psi^1$ is satisfiable and $\psi^3$
   unsatisfiable. Also note that $\psi^1$ and $\psi^3$ are actually QBFs; so Proposition~\ref{prop4}
   is also unsound when restricted to QBFs.
\end{remark}

Next we literally repeat Proposition~\ref{prop5} from \cite{BalabanovCJ14}:
\setcounter{proposition}{4}
\begin{proposition}[\cite{BalabanovCJ14}]
  \label{prop5}
  The DQBF
  \begin{equation*}
    \label{eq3}
    \forall \vec{x}\,\exists y_1(S_1) \ldots \exists y_k(S_k): (\phi_A \land \phi_B)
  \end{equation*}
  where $\forall \vec{x}$ denotes $\forall x_1 \ldots \forall x_n$, sub-formula $\phi_A$ (respectively $\phi_B$)
  refers to variables $X_A \subseteq X$ and $Y_A \subseteq Y$ (respectively $X_B \subseteq X$ and $Y_B \subseteq Y$),
  is logically equivalent to
  \begin{equation*}
    \label{eq4}
    \forall \vec{x}\,\exists y_2(S_2) \ldots \exists y_k(S_k): \bigl((\exists y_1(S_1 \cap X_A): \phi_A) \land \phi_B\bigr)
  \end{equation*}
  for $y_1 \notin Y_B$.
\end{proposition}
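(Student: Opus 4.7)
Given the pattern just established for Proposition~\ref{prop4}, my expectation is that Proposition~\ref{prop5} is likewise unsound, and the plan is to state a companion lemma and refute the proposition by exhibiting a concrete DQBF where the claimed equivalence fails. The conceptual error to exploit is analogous to the one in the refutation of Proposition~\ref{prop4}: intersecting $S_1$ with $X_A$ strips $y_1$ of access to universal variables that do not syntactically occur in $\phi_A$, yet $y_1$'s required Skolem value can still be \emph{indirectly} constrained, through an existential variable in $Y_A\cap Y_B$ whose own Skolem function is forced by $\phi_B$ to depend on universal variables in $X_B\setminus X_A$.

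The concrete construction I would use pushes $X_A$ all the way down to $\emptyset$, so that the restricted dependency set $S_1\cap X_A$ collapses to $\emptyset$ and forces $y_1$ to be a Boolean constant. Consider
\[
  \psi^1 \colonequals \forall x_1 \forall x_2\,\exists y_1(\{x_1,x_2\})\,\exists y_2(\{x_1\}) :
    \underbrace{(y_1\equiv y_2)}_{\phi_A} \;\land\; \underbrace{(y_2\equiv x_1)}_{\phi_B},
\]
with $Y_A=\{y_1,y_2\}$, $Y_B=\{y_2\}$, $X_A=\emptyset$, $X_B=\{x_1\}$, so that $y_1\notin Y_B$ as required. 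The formula is satisfiable via the Skolem functions $y_2\mapsto x_1$ and $y_1\mapsto x_1$. Applying Proposition~\ref{prop5} rewrites it to
\[
  \psi^2 \colonequals \forall x_1 \forall x_2\,\exists y_2(\{x_1\}):
    \bigl(\exists y_1(\emptyset) : (y_1\equiv y_2)\bigr) \,\land\, (y_2\equiv x_1).
\]
Since $\phi_B$ forces $y_2$ to be the non-constant function $x_1$, no Boolean constant in $\{\fzero,\fone\}$ can serve as $y_1$ and simultaneously make $y_1\equiv y_2$ a tautology; hence $\psi^2$ is unsatisfiable, and $\psi^1\not\approx\psi^2$.

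The only routine work remaining is to confirm that $\psi^1$ and $\psi^2$ are well-formed under Definition~\ref{def:syntax} (in particular that the inner $\exists y_1(\emptyset)$ in $\psi^2$ satisfies rule~5 and that the disjointness condition~\eqref{eq:disjoint} for the conjunction is met) and that the instance matches the schema in Proposition~\ref{prop5}; no real obstacle is anticipated here. I would additionally note that $\psi^1$ linearizes to the QBF prefix $\forall x_1\,\exists y_2\,\forall x_2\,\exists y_1$, so the proposition already fails on the QBF fragment, which strengthens the refutation and parallels the remark made after the refutation of Proposition~\ref{prop4}.
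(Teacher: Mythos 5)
Your proposal correctly recognizes that Proposition~\ref{prop5} must be refuted rather than proved, and your counterexample has exactly the same structure as the paper's (its $\psi^4$/$\psi^5$): take $X_A=\emptyset$ so that $S_1\cap X_A$ forces $y_1$ to be a constant, let $\phi_A$ tie $y_1$ to $y_2$, and let $\phi_B$ force $y_2$ to be a non-constant function of the universals, yielding a satisfiable original and an unsatisfiable rewrite. Your closing observation that the instance is in fact a QBF likewise parallels the paper's remark, so the argument is correct and essentially identical in approach.
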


\begin{lemma}
  \label{lemma:prop5unsound}
  Proposition~\ref{prop5} is unsound.
\end{lemma}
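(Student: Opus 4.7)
The approach is to exhibit a concrete DQBF satisfying the hypotheses of Proposition~\ref{prop5} that is satisfiable, but whose alleged transformation is unsatisfiable, thereby refuting even equisatisfiability (and \emph{a fortiori} logical equivalence). As in the case of Proposition~\ref{prop4}, the ``main mistake'' is the replacement of $S_1$ by the strict subset $S_1 \cap X_A$: this can sever a coordination between $y_1$ and another existential variable $y_j \in Y_A$ whose dependency set contains universal variables from $S_1\setminus X_A$, a coordination that becomes essential once $\phi_B$ (in which $y_1$ does not occur) pins down $y_j$ to a non-constant value.

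Concretely, I would take two universal variables $x_1,x_2$ and two existential variables with $S_1=\{x_1,x_2\}$ and $S_2=\{x_2\}$, and set
\[
  \phi_A \colonequals (y_1 \equiv y_2), \qquad \phi_B \colonequals (y_2 \equiv x_2),
\]
so that $X_A=\emptyset$, $Y_A=\{y_1,y_2\}$, $X_B=\{x_2\}$, $Y_B=\{y_2\}$, and indeed $y_1 \notin Y_B$ as required. Let
\[
  \psi^1 \colonequals \forall x_1\,\forall x_2\,\exists y_1(S_1)\,\exists y_2(S_2): \phi_A \land \phi_B,
\]
and let $\psi^2$ be the Proposition-\ref{prop5}-transform of $\psi^1$,
\[
  \psi^2 \colonequals \forall x_1\,\forall x_2\,\exists y_2(S_2): \bigl((\exists y_1(\emptyset):\phi_A) \land \phi_B\bigr).
\]
The verification is then a two-step argument. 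First, $\psi^1$ is satisfiable via the Skolem functions $s(y_1)(x_1,x_2) \colonequals x_2$ and $s(y_2)(x_2) \colonequals x_2$, under which both conjuncts reduce to $x_2 \equiv x_2$ and hence to $\fone$ for every universal assignment. Second, $\psi^2$ is unsatisfiable: $\phi_B$ alone forces $s(y_2)(x_2)=x_2$, and the inner quantifier $\exists y_1(\emptyset)$ restricts $s(y_1)$ to a constant $c \in \{\fzero,\fone\}$, whereupon $\phi_A$ reduces to $c \equiv x_2$, which fails to be a tautology for either value of $c$.

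The main obstacle is identifying the right shape of the counter-example rather than verifying it. Because $\phi_A$ syntactically mentions only variables from $X_A \cup Y_A$, the extra dependencies of $y_1$ on $S_1 \setminus X_A$ cannot be exploited by $\phi_A$ in isolation, so one has to engineer the unsoundness through the interaction with $\phi_B$: a second existential $y_j \in Y_A$ whose dependency set also contains variables of $S_1\setminus X_A$, together with a constraint in $\phi_B$ that forces $y_j$ to genuinely vary along those variables. Once such an example is in place, one still has to check the syntactic well-formedness conditions of Definition~\ref{def:syntax} for $\psi^2$ (disjointness of the two arguments of $\land$ and the side-condition $D_{y_1} \subseteq V \setminus (\var[\phi_A]^Q \cup \{y_1\})$), but these are routine in the example above, and the remaining verification reduces to a short case analysis over the two possible constant Skolem functions for $y_1$ in $\psi^2$.
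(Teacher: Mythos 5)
Your proof is correct and follows essentially the same strategy as the paper's: a counterexample with $X_A=\emptyset$ so that $S_1\cap X_A=\emptyset$ forces $y_1$ to a constant, while $\phi_B$ pins $y_2$ to a non-constant function of the universals and $\phi_A$ couples $y_1$ to $y_2$. Your concrete instance ($\phi_A = (y_1\equiv y_2)$, $\phi_B = (y_2\equiv x_2)$) differs only cosmetically from the paper's ($\phi_A = (y_1\equiv\neg y_2)$, $\phi_B = (y_2\equiv(x_1\land x_2))$), and both verifications go through.
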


\begin{proof}
For a counterexample, consider the formula
\begin{equation*}
  \label{eq:prop5_cex}
  \psi^4 \colonequals
  \forall x_1 \forall x_2 \exists y_1(x_1,x_2)\exists y_2(x_1,x_2) :
   \underbrace{\bigl(y_1\equiv\neg y_2\bigr)}_{\phi_A}
  \land
  \underbrace{\bigl(y_2\equiv (x_1\land x_2)\bigr)}_{\phi_B}.
\end{equation*}
with the corresponding variable sets
$X_A = \emptyset$, $X_B = \{x_1,x_2\}$, $Y_A  = \{y_1,y_2\}$, and $Y_B = \{y_2\}$.
We have $y_1\notin Y_B$ and $\{x_1,x_2\}\cap X_A = \emptyset$.
Proposition~\ref{prop5} says that $\psi^4$ is equisatisfiable with:
\begin{equation*}
  \psi^5 \colonequals
  \forall x_1\forall x_2\exists y_2(x_1,x_2):
    \bigl( \exists y_1(\emptyset): (y_1\equiv\neg y_2)\bigr)\land \bigl(y_2\equiv (x_1\land x_2)\bigr)\,.
\end{equation*}

The formula $\psi^4$ is satisfiable; the Skolem function $s$ with
$s(y_1)=\neg (x_1\land x_2)$ and $s(y_2)=(x_1\land x_2)$ is in $\sem{\psi}$.

The formula $\psi^5$, however, is unsatisfiable: Since $D_{y_1}^{\psi^5} = \emptyset$, there are only two
Skolem function candidates for $y_1$, either $s(y_1)=\fzero$ or $s(y_1)=\fone$. In the first case,
we need to find a function for $y_2$ such that $\bigl(\fzero \equiv\neg y_2\bigr)\land \bigl(y_2\equiv (x_1\land x_2)\bigr)$
becomes a tautology. In order to satisfy the first part, $\fzero \equiv\neg y_2$, we need to set $s(y_2)=\fone$.
Then the formula can be simplified to $(x_1\land x_2)$, which is not a tautology.
In the second case, $s(y_1)=\fone$, we get the expression $\bigl(\fone\equiv\neg y_2\bigr)\land \bigl(y_2\equiv (x_1\land x_2)\bigr)$.
This requires to set $s(y_2)=\fzero$ in order to satisfy the first part, turning the formula into
$\fzero\equiv (x_1\land x_2)$, or more concisely, $\neg(x_1\land x_2)$, which is neither a tautology.
Therefore we can conclude that $\psi^5$ is unsatisfiable and, accordingly, Proposition~\ref{prop5}
of \cite{BalabanovCJ14} is unsound.
\end{proof}

For Proposition~\ref{prop5} we make a similar observation as for Proposition~\ref{prop4}:
\begin{remark}
   Also Proposition~\ref{prop5} of \cite{BalabanovCJ14} is already unsound when we consider the
   commonly accepted semantics of closed prenex DQBFs as stated in Definition~\ref{def:dqbf_semantics_cp}.
   The proposition claims that $\psi^4$ is equisatisfiable with $\psi^5$. Additionally, it claims
   that
   \begin{equation*}
     \psi^6 \colonequals
     \forall x_1\forall x_2\exists y_1(\emptyset)\exists y_2(x_1,x_2):
     \bigl(y_1\equiv\neg y_2\bigr)\land \bigl(y_2\equiv(x_1\land x_2)\bigr)
   \end{equation*}
   is equisatisfiable with $\psi^5$. Due to transitivity of equisatisfiability,
   Proposition~\ref{prop5} claims that $\psi^4 \approx \psi^6$ holds. However, according to the
   semantics in Definition~\ref{def:dqbf_semantics_cp}, $\psi^4$ is satisfiable and $\psi^6$
   unsatisfiable. Again, $\psi^4$ and $\psi^6$ are actually QBFs; so Proposition~\ref{prop5}
   is also unsound when restricted to QBFs.
\end{remark}

\section{Taking Advantage of Quantifier Localization}
\label{sec:algorithm}

\noindent In this section, we explain the implementation of the algorithm that exploits the
properties of non-prenex DQBFs to simplify a given formula.
First, we define necessary concepts and give a coarse sketch of the algorithm.
Then, step by step, we dive into the details.

Benedetti introduced in \cite{Benedetti05c} \textit{quantifier trees} for pushing quantifiers into a CNF.
In a similar way we construct a \textit{quantifier graph}, which is a structure similar to an
And-Inverter Graph (AIG)~\cite{Kuehlmann2002}.
It allows to perform quantifier localization according to Theorems~\ref{th:rules}, \ref{th:forall_and},
\ref{th:equisat:forall_and}, and \ref{th:equisat:exists_or}.

\begin{definition} [Quantifier graph]
  \label{def:quantGraph}
  For a non-prenex DQBF $\psinp$, a \emph{quantifier graph} is a directed acyclic graph $G_{\psinp} = (N_{\psinp},E_{\psinp})$.
  $N_{\psinp}$ denotes the set of nodes of $G_{\psinp}$. Each node $n\in N_{\psinp}$ is labeled with an operation
  ${\op}\in \{\wedge,\vee\}$ from $\psinp$ if $n$ is an inner node, or with a variable $v\in\var[\psinp]$ if it is a terminal node.
  $E_{\psinp}$ is a set of edges. Each edge is possibly augmented with quantified variables and\,/\,or negations.
\end{definition}

The input to the basic algorithm for quantifier localization (\textit{DQBFQuantLoc}),
which is shown in Algorithm~\ref{algo:DQBFQuLo}, is a closed prenex DQBF $\psi$.
The matrix $\varphi$ of $\psi$ is represented as an AIG, and the prefix $Q$ is a set of quantifiers
as stated in Definition~\ref{def:dqbf_cp}.
(If the matrix is initially given in CNF, we preprocess it by circuit extraction
(see for instance \cite{PigorschS09,wimmer-et-al-sat-2015,wimmer-et-al-jsat-2019}) and
the resulting circuit is then represented by an AIG.)
The output of \textit{DQBFQuantLoc} is a DQBF in closed prenex form again.
In intermediate steps, we convert $\psi$ into a non-prenex DQBF $\psinp$,
represented as a quantifier graph, by pushing quantifiers of the prefix into the matrix.
After pushing the quantified variables as deep as possible into the formula, we eliminate quantifiers wherever it is possible.
If a quantifier cannot be eliminated, it is pulled out of the formula again. In this manner we finally obtain a modified and possibly simplified prenex DQBF $\psi'$.

\begin{algorithm}[tb]
\SetKwInOut{Input}{Input}
\SetKwInOut{Output}{Output}
\SetKwFunction{NormalizeToNNF}{NormalizeToNNF}
\SetKwFunction{BuildMacroGates}{BuildMacroGates}
\SetKwFunction{EliminatePures}{EliminatePures}
\SetKwFunction{Localize}{Localize}
\SetKwFunction{Eliminate}{Eliminate}

\Input{Prenex DQBF $\psi\colonequals Q : \varphi$, where\\
	-- $Q = \forall x_1\ldots\forall x_n\exists y_1(D_{y_1})\ldots\exists y_m(D_{y_m})$\\
	-- $\varphi$ has an arbitrary structure given as an AIG}
\Output{prenex DQBF $\psi'$}
\BlankLine

$G_{\psinp} \colonequals$ \NormalizeToNNF{$\psi$}\;\label{algo:qloc:norm}
$G_{\psinp} \colonequals$ \BuildMacroGates{$G_{\psinp}$}\;\label{algo:qloc:macro}
$G_{\psinp} \colonequals$ \Localize{$G_{\psinp}$}\;\label{algo:qloc:loc}
$G_{\psi'}  \colonequals$ \Eliminate{$G_{\psinp}$}\;\label{algo:qloc:elim}
\Return {$\psi'$}
\caption{DQBFQuantLoc} \label{algo:DQBFQuLo}
\end{algorithm}

\subsection{Building NNF and Macrogates}
\label{ssec:create_nnf}

\begin{figure}[tb]
\centering
\scalebox{.9}{\begin{tikzpicture}[>=stealth]

\node[draw, shape=circle](1) at (2.75,4){$\vee$};
\node[draw, shape=circle](2) at (1.25,3){$\vee$};
\node[draw, shape=circle](3) at (4.25,3){$\vee$};
\node[draw, shape=circle](4) at (.5,2){$\wedge$};
\node[draw, shape=circle](5) at (2,2){$\wedge$};
\node[draw, shape=circle](6) at (3.5,2){$\wedge$};
\node[draw, shape=circle](7) at (5,2){$\wedge$};
\node[draw, shape=circle](8) at (4.5,1){$\vee$};
\node[draw, shape=circle](9) at (5.5,1){$\vee$};

\node[draw, shape=circle,white,scale=.5](10) at (0.25,1){\textcolor{black}{\Huge $y_1$}};
\node[draw, shape=circle,white,scale=.5](11) at (.75,1){\textcolor{black}{\Huge $x_1$}};
\node[draw, shape=circle,white,scale=.5](12) at (1.75,1){\textcolor{black}{\Huge $x_1$}};
\node[draw, shape=circle,white,scale=.5](13) at (2.25,1){\textcolor{black}{\Huge $x_2$}};
\node[draw, shape=circle,white,scale=.5](14) at (3.25,1){\textcolor{black}{\Huge $y_1$}};
\node[draw, shape=circle,white,scale=.5](15) at (3.75,1){\textcolor{black}{\Huge $x_1$}};
\node[draw, shape=circle,white,scale=.5](16) at (4.25,0){\textcolor{black}{\Huge $x_2$}};
\node[draw, shape=circle,white,scale=.5](17) at (4.75,0){\textcolor{black}{\Huge $y_2$}};
\node[draw, shape=circle,white,scale=.5](18) at (5.25,0){\textcolor{black}{\Huge $x_2$}};
\node[draw, shape=circle,white,scale=.5](19) at (5.75,0){\textcolor{black}{\Huge $y_2$}};

\draw[->](2.75,5) -- (1)node[pos=0.5,right]{ $\forall x_1 \forall x_2 \exists y_1 (x_1) \exists y_2 (x_2)$};
\draw[->](1) -- (2);
\draw[->](1) -- (3);
\draw[->](2) -- (4);
\draw[->](2) -- (5);
\draw[->](3) -- (6);
\draw[->](3) -- (7);
\draw[->](4) -- (10);
\draw[->](4) -- (11);
\draw[->](5) -- (12);
\draw[->](5) -- (13)node[circle,fill=black,pos=.5,scale=.4]{};
\draw[->](6) -- (14)node[circle,fill=black,pos=.5,scale=.4]{};
\draw[->](6) -- (15)node[circle,fill=black,pos=.5,scale=.4]{};
\draw[->](7) -- (8);
\draw[->](7) -- (9);
\draw[->](8) -- (16);
\draw[->](8) -- (17)node[circle,fill=black,pos=.5,scale=.4]{};
\draw[->](9) -- (18)node[circle,fill=black,pos=.5,scale=.4]{};
\draw[->](9) -- (19);

\end{tikzpicture}}
\caption{Quantifier graph in NNF.} \label{ex:quantgraph}
\end{figure}
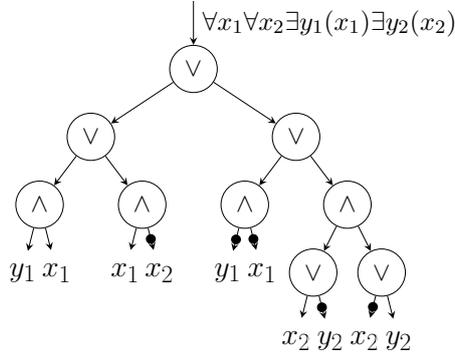

\noindent In Line~\ref{algo:qloc:norm} of Algorithm~\ref{algo:DQBFQuLo},
we first translate the matrix $\varphi$ of the DQBF $\psi$ into negation normal
form (NNF) by pushing the negations of the circuit to the primary inputs
(using De Morgan's law).
The resulting formula containing the matrix in NNF is represented as a
quantifier graph as in Definition~\ref{def:quantGraph}, where we only
have negations at those edges which point to terminals.
Figure~\ref{ex:quantgraph} shows a quantifier graph as returned by
\textit{NormalizeToNNF}. We will use it as a running example to illustrate
our algorithm.

Then, in Line~\ref{algo:qloc:macro} of Algorithm~\ref{algo:DQBFQuLo},
we combine subcircuits into \emph{AND}\,/\,\emph{OR} \emph{macrogates}.
The combination into macrogates is essential to increase the degrees of freedom given
by different decompositions of \emph{AND}s\,/\,\emph{OR}s that enable different applications
of the transformation rules according to rules~\eqref{equiv:forall_or}, \eqref{equiv:exists_and}, and
Theorems~\ref{th:forall_and}, \ref{th:equisat:forall_and}, and \ref{th:equisat:exists_or}.
A \emph{macrogate} is a multi-input gate, which we construct by collecting consecutive nodes representing
the same logic operation ($\land$, $\lor$).
Except for the topmost node within a macrogate no other node may have more than one incoming edge,
\ie macrogates are subtrees of fanout-free cones.
During the collection of nodes, we stop the search along a path when we visit a node with multiple parents.
From this node we later start a new search. The nodes which are the target of an edge leaving a
macrogate are the \emph{macrochildren} of the macrogate and the parents of its root are called the \emph{macroparents}.
It is clear that a macrogate consisting of only one node has exactly two children like a standard node.
For such nodes we use the terms macrogate and node interchangeably.
In Figure~\ref{ex:macrogate} we show a macrogate found in the running example.

\subsection{Quantifier Localization}
\label{ssec:localize}

\noindent After calling \textit{NormalizeToNNF} and \textit{BuildMacroGates}
the only edge that carries quantified variables is the root edge.
By shifting quantified variables to edges below the root node we push them into the formula. Sometimes we say that we push a quantified variable to a child by which we mean that we write the variable to the edge pointing to this child.

On the quantifier graph for formula $\psinp$ we perform the localization of quantifiers
according to Theorems~\ref{th:rules},
\ref{th:forall_and}, \ref{th:equisat:forall_and}, and \ref{th:equisat:exists_or}
with the function \textit{Localize} in Line~\ref{algo:qloc:loc} of Algorithm~\ref{algo:DQBFQuLo}.
Algorithm~\ref{algo:local} presents the details.

\begin{algorithm}[tb]
	\SetKwInOut{Input}{Input}
	\SetKwInOut{Output}{Output}
	\SetKwFunction{PushVariables}{PushVariables}
	\SetKwFunction{SeparateIncomings}{SeparateIncomings}
	
	\Input{Quantifier graph $G_{\psinp}$}
	\Output{Modified quantifier graph $G_{\psinp}$}
	\BlankLine
	nodeList $\colonequals$ double linked list of all nodes of $G_{\psinp}$ in
	   topological order, \ie starting with the root gate\;
	$g\colonequals$ nodeList.first \;
	\While{$g \neq \mathrm{nodeList.end}$}{\label{algo:local:while}
		\PushVariables{g}\;\label{algo:local:push}
		\uIf{there are variables left which do not occur on all incoming edges of $g$}{
			$v\colonequals$ variable that occurs on most incoming edges of $g$\;\label{algo:local:pickvar}
			$g'\colonequals$ \SeparateIncomings{g,v}\;\label{algo:local:separate}
			insert $g'$ into nodeList after $g$\;
		}\Else{
            $g\colonequals$ nodeList.next\;
		}
	}
	\Return modified quantifier graph $G_{\psinp}$\;\label{algo:local:end}
	\caption{Localize} \label{algo:local}
\end{algorithm}

The graph is traversed in topological order (given by nodeList),
starting with the macrogate $g_{root}$,
which is the root of the graph representing $\psinp$.
Note that despite the transformations made so far, the graph passed to \textit{Localize} still represents a prenex DQBF.
For each macrogate $g$ from the graph we first call the function \textit{PushVariables} in Algorithm~\ref{algo:local}, Line~\ref{algo:local:push} (the details of \textit{PushVariables} are listed in Algorithm~\ref{algo:push}).
This function pushes as many variables as possible over $g$.

Only variables that are common to \emph{all} incoming edges of $g$ can be pushed over $g$.
Thus, after \textit{PushVariables} there might be remaining variables, which only occur on some but not all edges.
In Line~\ref{algo:local:pickvar} of Algorithm~\ref{algo:local} we pick such a variable $v$ that occurs on at least one, but not all incoming edges.
To allow $v$ to be pushed over $g$, we apply \textit{SeparateIncomings} to $g$ \wrt $v$ (Algorithm~\ref{algo:local}, Line~\ref{algo:local:separate}).
This function creates a copy $g'$ of $g$ and removes from $g'$ all incoming edges containing $v$. From $g$ it removes all incoming edges without $v$, \ie $v$ occurs on all incoming edges to $g$ when returning from this function. Then, the procedure that pushes variables and possibly copies gates is repeated for $g$ and $g'$.
If there is no more quantified variable for which the incoming edges need to be separated, we continue the procedure 
for the next macrogates in topological order given by nodeList.

Now we take a closer look at the function \textit{PushVariables} from Line~\ref{algo:local:push} of Algorithm~\ref{algo:local}, which pushes quantified variables over a single macrogate $g$ according to Theorems~\ref{th:rules}, \ref{th:forall_and}, \ref{th:equisat:forall_and}, and \ref{th:equisat:exists_or}.

\begin{algorithm}[tb] 
\SetKwInOut{Input}{Input}
\SetKwInOut{Output}{Output}
\SetKwFunction{BuildMacroGates}{BuildMacroGates}
\SetKwFunction{CollectCommonVariables}{CollectCommonVariables}
\SetKwFunction{FindBestVariableConj}{FindBestVariableConj}
\SetKwFunction{FindBestVariableDisj}{FindBestVariableDisj}
\SetKwFunction{IsVarPushable}{IsVarPushable}

\Input{macrogate $g$}
\BlankLine

$V_{\text{com}}\colonequals$ \CollectCommonVariables{g}\;

\uIf{g is a conjunction}{\label{algo:push:conj}
	\While{$V_{\text{com}}\cap\varex\neq\emptyset$}{
		$v\colonequals$ \FindBestVariableConj{$V_{\text{com}}$}\;\label{algo:push:bestconj}
		try to push $v$ to macrochildren\;
		delete $v$ from $V_{\text{com}}$\;
	}
	\For{each universal variable $x$ in $V_{\text{com}}$}{
		try to push $x$ to macrochildren\;
		delete $x$ from $V_{\text{com}}$\;
	}
}
\Else{\label{algo:push:disj}
	\For{each existential variable $y$ in $V_{\text{com}}$}{ \label{algo:push:foreachex}
		\If{\IsVarPushable{y,g}}{\label{algo:push:ispushable}
			push $y$ to macrochildren\;
			delete $y$ from $V_{\text{com}}$\;
		}
	}\label{algo:push:foreachexend}

	\While{$V_{\text{com}}\neq\emptyset$}{\label{algo:push:remaining}
		$v\colonequals$ \FindBestVariableDisj{$V_{\text{com}}$}\;\label{algo:push:bestdisj}
		try to push $v$ to macrochildren\;
		delete $v$ from $V_{\text{com}}$\;
	}\label{algo:push:remainingend}
}
\caption{PushVariables} \label{algo:push}
\end{algorithm}

For macrogate $g$ we first determine the set of quantified variables $V_{\text{com}}$ that occur on \emph{all} incoming edges of $g$ by \textit{CollectCommonVariables}.
These are the only ones which we can push further into the graph.

We annotate the edges by sets of quantifiers, not by sequences of quantifiers, although in
the DQBF formulas we have sequences of quantifiers. The reason for this approach lies
in rules \eqref{equiv:exists_exists}, \eqref{equiv:forall_forall}, and \eqref{equiv:forall_exists}.
The order of quantifiers of the same type can be changed arbitrarily due to rules \eqref{equiv:exists_exists}
and \eqref{equiv:forall_forall}. For quantifiers $\forall x$ and $\exists y(D_y)$ there are only
two possible cases: If $x\in D_y$, then $\forall x$ has to be to the left of $\exists y(D_y)$,
since $\exists y(D_y)\,\forall x:\varphi$ with $x\in D_y$ cannot occur in a valid DQBF by
construction. If $x \notin D_y$, then both orders of $\forall x$ and $\exists y(D_y)$
are possible due to rule~\eqref{equiv:forall_exists}. Thus, by rule~\eqref{equiv:forall_exists}
we can easily derive from a set of quantifiers all orders that are allowed in a valid DQBF.
Since in both cases ($x\in D_y$ and $x \notin D_y$), $\exists y(D_y)$ can be on the right,
it is always possible to push existential quantifiers first.
Universal variables $x$ can only be pushed, if all existential variables $y$ on the same edge do not
contain $x$ in their dependency set (see \eqref{equiv:forall_exists}).
So pushing existential variables first may have the advantage that this enables pushing
of universal variables.

In Lines~\ref{algo:push:conj} and \ref{algo:push:disj} of Algorithm~\ref{algo:push} we distinguish between a conjunction and a disjunction.

\subsubsection{Pushing over Conjunctions}
\label{ssec:push_conj}

\noindent As already said, we push existential variables first.
If $g$ is a conjunction, we can push existential variables using rule \eqref{equiv:exists_and}
only.
Before pushing an existential variable $y$, we collect the set $C_y$ of all children containing $y$.
(Note that we do not differentiate here between a child $c_i$ and the subformula represented by $c_i$,
which would be more precise.)

If $C_y = \emptyset$, then we can just remove the existential quantification of $y$ from the edges towards
$g$ by Theorem~\ref{th:equiv:indep_exists}.
If $C_y = \{c_i\}$, then we simply push the existential variable to the edge leading from $g$ to $c_i$
($c_i$ can then be regarded as $\varphi_2$ from \eqref{equiv:exists_and}).
If $C_y$ contains all children of $g$, then $y$ cannot be pushed.
In all other cases a decomposition of the macrogate $g$ takes place.
All children from $C_y$ are merged and treated as $\varphi_2$ from \eqref{equiv:exists_and},
\ie the \emph{AND} macrogate is decomposed into
one \emph{AND} macrogate $g'$ combining the children in $C_y$, and another
macrogate $g''$ (replacing $g$) whose children are the
remaining children of $g$ as well as the new $g'$.
Pushing $y$, we write $y$ on the incoming edge of $g'$.
(Of course, $g'$ has to be inserted after $g$ into the topological order nodeList used in Algorithm~\ref{algo:local}.)
According to rule \eqref{equiv:exists_exists} we can push existential
variables in an arbitrary order. Here we apply \textit{FindBestVariableConj} (Line~\ref{algo:push:bestconj}, Algorithm~\ref{algo:local})
to heuristically determine a good order of pushing.
We choose the existential variable $y$ first that occurs in the fewest children of $g$,
\ie we choose the variable $y$ where $|C_y|$ is minimal. Remember that the children in $C_y$ are combined into an
\emph{AND} macrogate $g'$ and $y$ (as well as all universal variables $y$ depends on)
cannot be pushed over $g'$. So our goal is to find a variable
$y$ which is the least obstructive for pushing other variables.

Subsequently, only universal variables are left for pushing. This is done by
Theorems~\ref{th:forall_and} and \ref{th:equisat:forall_and}.
As mentioned above, a universal variable $x$ cannot be pushed if there is some existential
variable $y$ with $x \in D_y$ left on the incoming edges of $g$, because $y$ could not be pushed before.
For all other universal variables $x$ we collect the set $C_x$ of all children containing $x$.
If $C_x = \emptyset$, then we just remove the universal quantification of $x$ by Theorem~\ref{th:equiv:indep}.
Otherwise, for all children $c \notin C_x$ we remove $x$ from the dependency sets of all existential
variables $y$ on the edge from $g$ to $c$ according to Theorem~\ref{th:equisat:forall_and}.
For all children $c \in C_x$ we push $x$ to the edge from $g$ to $c$
together with renaming $x$ into a fresh variable $x'$ according to Theorem~\ref{th:forall_and}.

\subsubsection{Pushing over Disjunctions}
\label{ssec:push_disj}

\noindent In case $g$ is a disjunction, at first we check for each existential variable $y$
(Lines~\ref{algo:push:foreachex}--\ref{algo:push:foreachexend}, Algorithm~\ref{algo:push})
whether it can be distributed to its children. This is not always the case, since
the preconditions of Theorem~\ref{th:equisat:exists_or} may possibly not be fulfilled.
Function \textit{IsVarPushable} from Line~\ref{algo:push:ispushable} in Algorithm~\ref{algo:push}
performs this check based on Theorem~\ref{th:equisat:exists_or} and / or rule~\eqref{equiv:exists_and}.
If the function returns \textit{true}, $y$ can be distributed to certain children of $g$.
For the check of function \textit{IsVarPushable}, whose details are listed in Algorithm~\ref{algo:isvarpushable}, we collect
for each existential variable $y$ the set $C_y$ of children containing $y$.
For the check in \textit{IsVarPushable} we do not have to consider the children which are not in $C_y$,
since we do not need to push $y$ to those children according to rule~\eqref{equiv:exists_and}.
If $C_y = \emptyset$, we can just remove the existential quantification of $y$ from the edges towards
$g$ by Theorem~\ref{th:equiv:indep_exists}.
If $C_y = \{c_i\}$, then we can push $y$ to the edge leading to $c_i$ according to rule~\eqref{equiv:exists_and}.
In both cases \textit{IsVarPushable} returns true, see Lines~\ref{algo:isvarpushable:trivial1}--\ref{algo:isvarpushable:trivial2}
of Algorithm~\ref{algo:isvarpushable}.
The remaining cases are handled by Theorem~\ref{th:equisat:exists_or}.
Let $\psinp$ be the DQBF represented by the root node of the quantifier graph.
For each $c_i \in C_y$ we compute as in Theorem~\ref{th:equisat:exists_or}
$\varocc[c_i] = \Big[(\varall[\psinp] \cap \var[c_i]) \cup \bigcup_{v \in \varex[\psinp] \cap \var[c_i]} (\varall[\psinp]\cap D_v)\Big] \setminus D_y$,
the set of all universal variables occurring in $c_i$
or in dependency sets of existential variables occurring in $c_i$, reduced by
the dependency set $D_y$ of $y$ (Lines~\ref{algo:isvarpushable:vocc}--\ref{algo:isvarpushable:voccend}, Algorithm~\ref{algo:isvarpushable}).
Moreover we compute in Line~\ref{algo:isvarpushable:collectoutside} of Algorithm~\ref{algo:isvarpushable}
$\varocc[\psinp \setminus g] = \Big[
\varall[{\psinp[\sfrac{0}{g}]}] \cup
\bigcup_{v \in \varex[{\psinp[\sfrac{0}{g}]}]} (\varall[\psinp]\cap D_v)\Big]$,
the set of universal variables outside the
subformula (represented by the macrogate) $g$ or in dependency sets of existential variables outside the
subformula $g$.
If all $\varocc[c_i]$ are pairwise disjoint and $\varocc[c_i] \cap \varocc[\psinp \setminus g]=\emptyset$ for all
$c_i \in C_y$ except at most one (which then ``plays the role of $\varphi_2$ in Theorem~\ref{th:equisat:exists_or}''),
then \textit{IsVarPushable} returns \textit{true} and
$y$ can be pushed to all $c_i \in C_y$. Again, according to Theorem~\ref{th:equisat:exists_or},
we have to replace $y$ by fresh variables after pushing to the children $c_i \in C_y$.
As already mentioned, we do not need to push $y$ to the children $c_i \notin C_y$ due to
rule~\eqref{equiv:exists_and}.

After checking Theorem~\ref{th:equisat:exists_or} for all existential variables, \textit{PushVariables} continues with all variables left in $V_{com}$ in Lines~\ref{algo:push:remaining}--\ref{algo:push:remainingend} of Algorithm~\ref{algo:push}.
They can be universal variables and those existential variables which have been determined to be stuck due to \textit{IsVarPushable}.
An existential variable $y$ can still be pushed according to \eqref{equiv:exists_and} by
decomposing $g$ into a macrogate $g'$ combining all children in $C_y$ and a macrogate $g''$
combining all other children together with $g'$, with $g''$ replacing $g$ (as in the case of $g$ being a conjunction).
For universal variables $x$ we compute the set $C_x$ of all children containing $x$ or an existential variable $y$ with $x \in D_y$, see rule~\eqref{equiv:forall_or}.
Then we decompose $g$ with a new macrogate combining all children in $C_x$.
Similar to the case of conjunctions, we determine the next variable to be considered for
splitting by \textit{FindBestVariableDisj} (Line~\ref{algo:push:bestdisj}, Algorithm~\ref{algo:push}).
\textit{FindBestVariableDisj} selects the variable that has the fewest children in
$C_y$ resp.~$C_x$ to be split off. For universal variables $x$ we have to take additionally into account
that $x$ is only eligible by \textit{FindBestVariableDisj}, if $x$ is not in the dependency set
of any $y$ on an incoming edge of $g$ anymore, see rule~\eqref{equiv:forall_exists}.

\begin{algorithm}[tb] 
\SetKwInOut{Input}{Input}
\SetKwInOut{Output}{Output}

\Input{Existential variable $y$, disjunctive macrogate $g$}
\Output{true/false}
\BlankLine

$C_y\colonequals$ all children $c_i$ of $g$ with $y\in\var[c_i]$\;\label{algo:isvarpushable:collect}

\If{$|C_y| \leq 1$}{\label{algo:isvarpushable:trivial1}
	\Return true\;\label{algo:isvarpushable:trivial2}
}

\For{each child $c_i$}{\label{algo:isvarpushable:vocc}
	$V^{\forall,occ}_{c_i}\colonequals\Big[(\varall[\psinp]\cap V_{c_i})\cup\bigcup_{v\in\varex[\psinp]\cap V_{c_i}}(\varall[\psinp]\cap D_v)\Big]\setminus D_y$\;
}\label{algo:isvarpushable:voccend}

\For{each pair $c_i,c_j\in C_y,$ $i\neq j$}{\label{algo:isvarpushable:childrendisj}
	\If{$V^{\forall,occ}_{c_i}\cap V^{\forall,occ}_{c_j}\neq\emptyset$}{
		\Return false\;
	}
}

$V^{\forall,occ}_{\psinp\setminus g}\colonequals
\Big[
\varall[{\psinp[\sfrac{0}{g}]}]\cup\bigcup_{v\in\varex[{\psinp[\sfrac{0}{g}]}]}(\varall[\psinp]\cap D_v)
\Big]$\;\label{algo:isvarpushable:collectoutside}

\BlankLine
nonDisjoint $\colonequals$ 0\;

\For{each child $c_i\in C_y$}{\label{algo:isvarpushable:outsidedisj}
	\If{$V^{\forall,occ}_{c_i}\cap V^{\forall,occ}_{\psinp\setminus g}\neq\emptyset$}{
        nonDisjoint $\colonequals$ nonDisjoint + 1\;
        \uIf{nonDisjoint $> 1$}{
			\Return false\;		
 		}     
	}
}
\Return true\;

\caption{IsVarPushable} \label{algo:isvarpushable}
\end{algorithm}

The complete procedure is illustrated in Figure~\ref{ex:localize}.

\begin{figure}[tb]
\centering
\begin{subfigure}{.44\textwidth}
        \centering
	\scalebox{.9}{\begin{tikzpicture}[>=stealth]

\node[draw, shape=circle](1) at (2.75,4){$\vee$};
\node[draw, shape=circle](2) at (1.25,3){$\vee$};
\node[draw, shape=circle](3) at (4.25,3){$\vee$};
\node[draw, shape=circle](4) at (.5,2){$\wedge$};
\node[draw, shape=circle](5) at (2,2){$\wedge$};
\node[draw, shape=circle](6) at (3.5,2){$\wedge$};
\node[draw, shape=circle](7) at (5,2){$\wedge$};
\node[draw, shape=circle](8) at (4.5,1){$\vee$};
\node[draw, shape=circle](9) at (5.5,1){$\vee$};

\node[draw, shape=circle,white,scale=.5](10) at (0.25,1){\textcolor{black}{\Huge $y_1$}};
\node[draw, shape=circle,white,scale=.5](11) at (.75,1){\textcolor{black}{\Huge $x_1$}};
\node[draw, shape=circle,white,scale=.5](12) at (1.75,1){\textcolor{black}{\Huge $x_1$}};
\node[draw, shape=circle,white,scale=.5](13) at (2.25,1){\textcolor{black}{\Huge $x_2$}};
\node[draw, shape=circle,white,scale=.5](14) at (3.25,1){\textcolor{black}{\Huge $y_1$}};
\node[draw, shape=circle,white,scale=.5](15) at (3.75,1){\textcolor{black}{\Huge $x_1$}};
\node[draw, shape=circle,white,scale=.5](16) at (4.25,0){\textcolor{black}{\Huge $x_2$}};
\node[draw, shape=circle,white,scale=.5](17) at (4.75,0){\textcolor{black}{\Huge $y_2$}};
\node[draw, shape=circle,white,scale=.5](18) at (5.25,0){\textcolor{black}{\Huge $x_2$}};
\node[draw, shape=circle,white,scale=.5](19) at (5.75,0){\textcolor{black}{\Huge $y_2$}};

\draw [rounded corners=4mm,red] (2.75,4.6)--(4.75,3.4)--(4.75,2.6)--(0.75,2.6)--(0.75,3.4) -- cycle;

\draw[red,->](2.75,5) -- (1)node[pos=0.5,right,black]{ $\forall x_1 \forall x_2 \exists y_1 (x_1) \exists y_2 (x_2)$};
\draw[->](1) -- (2);
\draw[->](1) -- (3);
\draw[red,->](2) -- (4);
\draw[red,->](2) -- (5);
\draw[red,->](3) -- (6);
\draw[red,->](3) -- (7);
\draw[->](4) -- (10);
\draw[->](4) -- (11);
\draw[->](5) -- (12);
\draw[->](5) -- (13)node[circle,fill=black,pos=.5,scale=.4]{};
\draw[->](6) -- (14)node[circle,fill=black,pos=.5,scale=.4]{};
\draw[->](6) -- (15)node[circle,fill=black,pos=.5,scale=.4]{};
\draw[->](7) -- (8);
\draw[->](7) -- (9);
\draw[->](8) -- (16);
\draw[->](8) -- (17)node[circle,fill=black,pos=.5,scale=.4]{};
\draw[->](9) -- (18)node[circle,fill=black,pos=.5,scale=.4]{};
\draw[->](9) -- (19);

\end{tikzpicture}}
	\caption{A macrogate, marked in red.}
	\label{ex:macrogate}
\end{subfigure}\hfill
\begin{subfigure}{.46\textwidth}
	\centering
	\scalebox{.9}{\begin{tikzpicture}[>=stealth]

\node[draw, shape=circle](1) at (2.75,4){$\vee$};
\node[draw, shape=circle](2) at (1.25,3){$\vee$};
\node[draw, shape=circle](3) at (4.25,3){$\vee$};
\node[draw, shape=circle](4) at (.5,2){$\wedge$};
\node[draw, shape=circle](5) at (2,2){$\wedge$};
\node[draw, shape=circle](6) at (3.5,2){$\wedge$};
\node[draw, shape=circle](7) at (5,2){$\wedge$};
\node[draw, shape=circle](8) at (4.5,1){$\vee$};
\node[draw, shape=circle](9) at (5.5,1){$\vee$};

\node[draw, shape=circle,white,scale=.5](10) at (0.25,1){\textcolor{black}{\Huge $y_1$}};
\node[draw, shape=circle,white,scale=.5](11) at (.75,1){\textcolor{black}{\Huge $x_1$}};
\node[draw, shape=circle,white,scale=.5](12) at (1.75,1){\textcolor{black}{\Huge $x_1$}};
\node[draw, shape=circle,white,scale=.5](13) at (2.25,1){\textcolor{black}{\Huge $x_2$}};
\node[draw, shape=circle,white,scale=.5](14) at (3.25,1){\textcolor{black}{\Huge $y_1'$}};
\node[draw, shape=circle,white,scale=.5](15) at (3.75,1){\textcolor{black}{\Huge $x_1$}};
\node[draw, shape=circle,white,scale=.5](16) at (4.25,0){\textcolor{black}{\Huge $x_2$}};
\node[draw, shape=circle,white,scale=.5](17) at (4.75,0){\textcolor{black}{\Huge $y_2$}};
\node[draw, shape=circle,white,scale=.5](18) at (5.25,0){\textcolor{black}{\Huge $x_2$}};
\node[draw, shape=circle,white,scale=.5](19) at (5.75,0){\textcolor{black}{\Huge $y_2$}};

\draw [rounded corners=4mm,red] (2.75,4.6)--(4.75,3.4)--(4.75,2.6)--(0.75,2.6)--(0.75,3.4) -- cycle;

\draw[red,->](2.75,5) -- (1)node[pos=.5,right,black]{ $\forall x_1 \forall x_2$};
\draw[->](1) -- (2);
\draw[->](1) -- (3);
\draw[red,->](2) -- (4)node[pos=.5,left,black]{ $\exists y_1 (x_1)$};
\draw[red,->](2) -- (5);
\draw[red,->](3) -- (6)node[pos=.5,left,black]{ $\exists y_1' (x_1)$};
\draw[red,->](3) -- (7)node[pos=.5,right,black]{ $\exists y_2 (x_2)$};
\draw[->](4) -- (10);
\draw[->](4) -- (11);
\draw[->](5) -- (12);
\draw[->](5) -- (13)node[circle,fill=black,pos=.5,scale=.4]{};
\draw[->](6) -- (14)node[circle,fill=black,pos=.5,scale=.4]{};
\draw[->](6) -- (15)node[circle,fill=black,pos=.5,scale=.4]{};
\draw[->](7) -- (8);
\draw[->](7) -- (9);
\draw[->](8) -- (16);
\draw[->](8) -- (17)node[circle,fill=black,pos=.5,scale=.4]{};
\draw[->](9) -- (18)node[circle,fill=black,pos=.5,scale=.4]{};
\draw[->](9) -- (19);

\end{tikzpicture}}
	\caption{$G_{\psinp}$ after distributing $y_1$, $y_2$ according\\ to Theorem~\ref{th:equisat:exists_or} and \eqref{equiv:exists_and}.}
	\label{ex:localize_distr}
\end{subfigure}

\begin{subfigure}{.44\textwidth}
	\centering
	\scalebox{.9}{\begin{tikzpicture}[>=stealth]
\node[draw, shape=circle](1) at (2.75,4){$\vee$};
\node[draw, shape=circle](2) at (1.25,3){$\vee$};
\node[draw, shape=circle](3) at (4.25,3){$\vee$};
\node[draw, shape=circle](4) at (.5,2){$\wedge$};
\node[draw, shape=circle](5) at (2,2){$\wedge$};
\node[draw, shape=circle](6) at (3.5,2){$\wedge$};
\node[draw, shape=circle](7) at (5,2){$\wedge$};
\node[draw, shape=circle](8) at (4.5,1){$\vee$};
\node[draw, shape=circle](9) at (5.5,1){$\vee$};

\node[draw, shape=circle,white,scale=.5](10) at (0.25,1){\textcolor{black}{\Huge $y_1$}};
\node[draw, shape=circle,white,scale=.5](11) at (.75,1){\textcolor{black}{\Huge $x_1$}};
\node[draw, shape=circle,white,scale=.5](12) at (1.75,1){\textcolor{black}{\Huge $y_1'$}};
\node[draw, shape=circle,white,scale=.5](13) at (2.25,1){\textcolor{black}{\Huge $x_1$}};
\node[draw, shape=circle,white,scale=.5](14) at (3.25,1){\textcolor{black}{\Huge $x_1$}};
\node[draw, shape=circle,white,scale=.5](15) at (3.75,1){\textcolor{black}{\Huge $x_2$}};
\node[draw, shape=circle,white,scale=.5](16) at (4.25,0){\textcolor{black}{\Huge $x_2$}};
\node[draw, shape=circle,white,scale=.5](17) at (4.75,0){\textcolor{black}{\Huge $y_2$}};
\node[draw, shape=circle,white,scale=.5](18) at (5.25,0){\textcolor{black}{\Huge $x_2$}};
\node[draw, shape=circle,white,scale=.5](19) at (5.75,0){\textcolor{black}{\Huge $y_2$}};

\draw[red,->](2.75,5) -- (1)node[pos=.5,right,black]{ $\forall x_1$};
\draw[->](1) -- (2);
\draw[red,->](1) -- (3)node[pos=.5,right,black]{ $\forall x_2$};
\draw[red,->](2) -- (4)node[pos=.5,left,black]{ $\exists y_1 (x_1)$};
\draw[green,->](2) -- (5)node[pos=.5,right,black]{ $\exists y_1' (x_1)$};
\draw[green,->](3) -- (6);
\draw[->](3) -- (7)node[pos=.5,right,black]{ $\exists y_2 (x_2)$};
\draw[->](4) -- (10);
\draw[->](4) -- (11);
\draw[->](5) -- (12)node[circle,fill=black,pos=.5,scale=.4]{};
\draw[->](5) -- (13)node[circle,fill=black,pos=.5,scale=.4]{};
\draw[->](6) -- (14);
\draw[->](6) -- (15)node[circle,fill=black,pos=.5,scale=.4]{};
\draw[->](7) -- (8);
\draw[->](7) -- (9);
\draw[->](8) -- (16);
\draw[->](8) -- (17)node[circle,fill=black,pos=.5,scale=.4]{};
\draw[->](9) -- (18)node[circle,fill=black,pos=.5,scale=.4]{};
\draw[->](9) -- (19);

\draw [rounded corners=4mm,red] (2.75,4.65)--(3.4,3.75)--(1.25,2.35)--(0.6,3.25)--cycle;

\node(white)at(2.75,5){\textcolor{white}{white}};
\end{tikzpicture}}
	\caption{$G_{\psinp}$ after splitting the macrogate to \\ enable pushing $x_2$.}
	\label{ex:localize_restruct}
\end{subfigure}\hfill
\begin{subfigure}{.46\textwidth}
	\centering
	\scalebox{.9}{\begin{tikzpicture}[>=stealth]
\node[draw, shape=circle](1) at (2.75,4){$\vee$};
\node[draw, shape=circle](2) at (1.25,3){$\vee$};
\node[draw, shape=circle](3) at (4.25,3){$\vee$};
\node[draw, shape=circle](4) at (.5,2){$\wedge$};
\node[draw, shape=circle](5) at (2,2){$\wedge$};
\node[draw, shape=circle](6) at (3.5,2){$\wedge$};
\node[draw, shape=circle](7) at (5,2){$\wedge$};
\node[draw, shape=circle](8) at (4.5,1){$\vee$};
\node[draw, shape=circle](9) at (5.5,1){$\vee$};

\node[draw, shape=circle,white,scale=.5](10) at (0.25,1){\textcolor{black}{\Huge $y_1$}};
\node[draw, shape=circle,white,scale=.5](11) at (.75,1){\textcolor{black}{\Huge $x_1$}};
\node[draw, shape=circle,white,scale=.5](12) at (1.75,1){\textcolor{black}{\Huge $y_1'$}};
\node[draw, shape=circle,white,scale=.5](13) at (2.25,1){\textcolor{black}{\Huge $x_1$}};
\node[draw, shape=circle,white,scale=.5](14) at (3.25,1){\textcolor{black}{\Huge $x_1$}};
\node[draw, shape=circle,white,scale=.5](15) at (3.75,1){\textcolor{black}{\Huge $x_2$}};
\node[draw, shape=circle,white,scale=.5](16) at (4.25,0){\textcolor{black}{\Huge $x_2$}};
\node[draw, shape=circle,white,scale=.5](17) at (4.75,0){\textcolor{black}{\Huge $y_2$}};
\node[draw, shape=circle,white,scale=.5](18) at (5.25,0){\textcolor{black}{\Huge $x_2$}};
\node[draw, shape=circle,white,scale=.5](19) at (5.75,0){\textcolor{black}{\Huge $y_2$}};

\draw[->](2.75,5) -- (1)node[pos=.5,right]{ $\forall x_1$};
\draw[->](1) -- (2);
\draw[->](1) -- (3)node[pos=.5,right]{ $\forall x_2$};
\draw[->](2) -- (4);
\draw[->](2) -- (5);
\draw[->](3) -- (6);
\draw[->](3) -- (7)node[pos=.5,right]{ $\exists y_2 (x_2)$};
\draw[->](4) -- (10)node[pos=.5,left]{ $\exists y_1 (x_1)$};
\draw[->](4) -- (11);
\draw[->](5) -- (12)node[circle,fill=black,pos=.5,scale=.4]{}node[pos=.5,left]{$\exists y_1' (x_1)$};
\draw[->](5) -- (13)node[circle,fill=black,pos=.5,scale=.4]{};
\draw[->](6) -- (14);
\draw[->](6) -- (15)node[circle,fill=black,pos=.5,scale=.4]{};
\draw[->](7) -- (8);
\draw[->](7) -- (9);
\draw[->](8) -- (16);
\draw[->](8) -- (17)node[circle,fill=black,pos=.5,scale=.4]{};
\draw[->](9) -- (18)node[circle,fill=black,pos=.5,scale=.4]{};
\draw[->](9) -- (19);

\node(white)at(2.75,5){\textcolor{white}{white}};

\end{tikzpicture}}
	\caption{$G_{\psinp}$ after processing all macrogates.}
	\label{ex:localize_final}
\end{subfigure}
\caption{\textit{BuildMacroGates} and \textit{Localize}.} \label{ex:localize}
\end{figure}
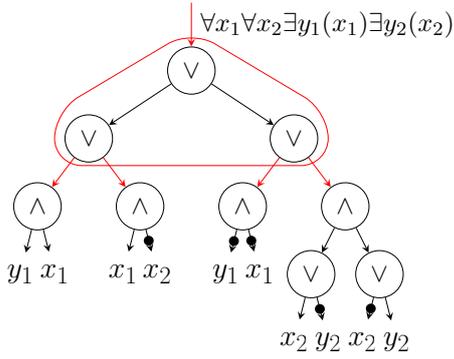
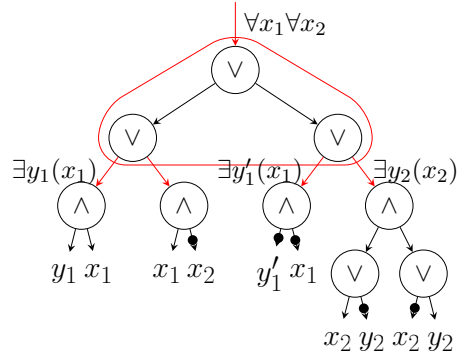
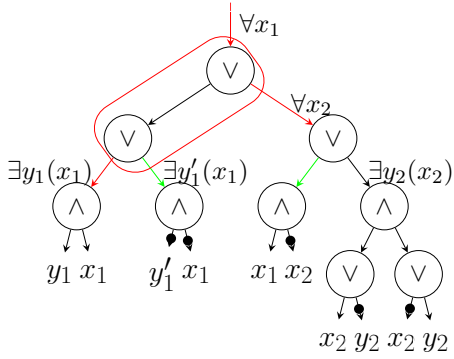
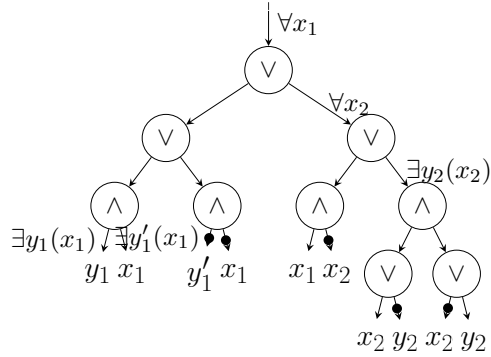

\subsection{Eliminating Variables}
\label{ssec:elim}

\noindent Finally, in Line~\ref{algo:qloc:elim} of Algorithm~\ref{algo:DQBFQuLo}, we try to eliminate those variables which can be
symbolically quantified after quantifier localization.
The conditions are given by Theorem~\ref{th:rules2} and rule~\eqref{th:rules3}. We proceed from the terminals to the root and check each edge with at least one quantified variable written to it.
If a variable could not be eliminated, we pull it back to the incoming edges of this edges' source node. If a variable has been duplicated according to Theorems~\ref{th:forall_and} or \ref{th:equisat:exists_or} and some duplications are brought back to one edge, then we merge them into a single variable again.

As Figure~\ref{ex:localize_final} shows, we can eliminate $y_1$ and $y_1'$,
since there are no other variables in the support of the target nodes.
The same holds for $y_2$ because $x_2$ is the only variable different from $y_2$ in the support of the target node and $x_2$ is in the dependency set of $y_2$, see Theorem~\ref{th:rules2}.
Subsequently, $x_2$ and $x_1$ can be eliminated according to rule~\eqref{th:rules3},
such that we obtain a constant function.

Note that in our implementation we avoid copying and renaming
variables when we apply Theorems~\ref{th:forall_and} or \ref{th:equisat:exists_or}.
This saves additional effort when pulling back variables which could not be eliminated and
avoids to copy shared subgraphs which become different by renaming.
On the other hand, we have to consider this implementation detail in the check of conditions of Theorem~\ref{th:equisat:exists_or}, of course.
Sets $\varocc[c_i]$ and $\varocc[c_j]$ which contain only common
universal variables $x$ which are ``virtually different'' (\ie different, if we would perform renaming)
are then considered to be disjoint. Fortunately, it is easy to decide this by checking whether
$c_i$ and $c_j$ are in the scope of $\forall x$ or not.

Finally, having all remaining variables pulled back to the root edge, we return to a closed prenex DQBF with potentially fewer variables, fewer dependencies and a modified matrix, which we can pass back to
a solver for prenex DQBFs.

\section{Experimental Results}
\label{sec:experiments}

\noindent We embedded our algorithm into the
DQBF solver HQS, which was the winner of the DQBF track at the QBF\-EVAL'18 and '19
competitions \cite{qbfeval18,qbfeval19}. HQS
includes the powerful DQBF-preprocessor HQSpre~\cite{wimmer-et-al-tacas-2017,wimmer-et-al-jsat-2019}.
After preprocessing has finished, we call the algorithm \textit{DQBFQuantLocalization} to simplify the formula.
HQS augmented with the localization of quantifiers is denoted as \textit{\HQSnp}.\footnotemark
\footnotetext{A recent binary of \HQSnp and all DQBF benchmarks we used are provided at \url{https://abs.informatik.uni-freiburg.de/src/projects_view.php?projectID=21}}

The experiments were run on one core of an Intel Xeon CPU E5-2650v2 with 2.6~GHz.
The runtime per benchmark was limited to 30~min and the memory consumption to 4~GB. We tested our theory with the same 4811 instances as in
\cite{WimmerKBS017}
\cite{gitina-et-al-sat-skolem-2016}
\cite{wimmer-et-al-sat_tr-2015}
\cite{gitina-et-al-date-2015}.
They encompass equivalence checking problems for incomplete circuits
\cite{SchollB01,gitina-et-al-iccd-2013,FinkbeinerT14,FrohlichKBV14},
controller synthesis problems \cite{BloemKS14}
and instances from \cite{BalabanovJ15} where a DQBF has been obtained by succinctly encoding a SAT problem.

Out of 4811 DQBF instances we focus here on those 991 which actually reach our algorithm.
The remaining ones are solved by the preprocessor HQSpre or already exceed the time / memory limit either during preprocessing or while translating the formula into an AIG, \ie
in those cases the results for HQS and \HQSnp do not differ.

When we reach the function \textit{DQBFQuantLocalization} from Alg.~\ref{algo:DQBFQuLo}, for 989 out of 991 instances we can perform the localization of quantifiers.
Quantifier localization enables the elimination of variables in subformulas of 
848 instances.
For 57971 times local quantifier elimination takes place and reduces the 
overall
number of variables in all 
848 benchmarks
which allow variable elimination in subformulas.
49107 variables have been eliminated in this manner.
Note that if a variable has been doubled according to Theorems~\ref{th:forall_and} or \ref{th:equisat:exists_or} and not all of the duplicates can be eliminated locally,
this variable is not counted among the eliminated variables as some duplicates will be pulled back to the root.

\begin{figure}[tb]
  \centering
  \begin{subfigure}{0.45\textwidth}
    \centering
    \includegraphics[scale=0.9]{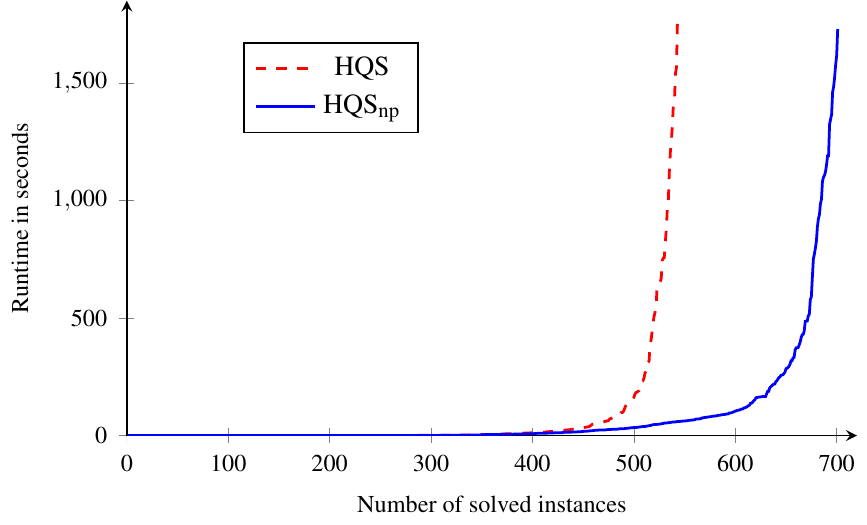}
    \caption{HQS vs.\ \HQSnp\ -- solved instances} \label{fig:hqsvshqsEQ}
  \end{subfigure}
  \hfill
  \begin{subfigure}{0.4\textwidth}
    \centering
    \includegraphics[scale=0.9]{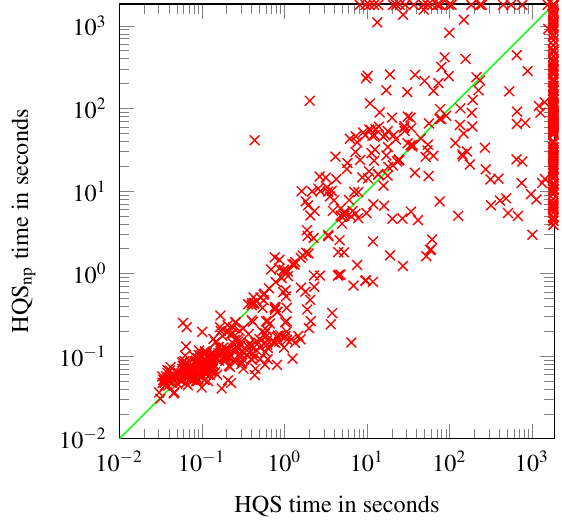}
    \caption{HQS vs.\ \HQSnp\ -- computation time} \label{fig:hqsvshqsEQScatter}
  \end{subfigure}
  \caption{Impact of Quantifier Localization}
\end{figure}

Altogether 701 instances out of 991 were solved by \HQSnp in the end, whereas HQS could only solve 542.
This increases the number of solved instances by more than 29\% (for a cactus plot comparing HQS with \HQSnp see Figure~\ref{fig:hqsvshqsEQ}).
The largest impact of quantifier localization has been observed on equivalence checking benchmarks for incomplete circuits
from \cite{FinkbeinerT14}.

Figure~\ref{fig:hqsvshqsEQScatter} shows the computation times of HQS resp.~\HQSnp for all individual benchmark instances.
The figure reveals that quantifier localization, in its current implementation, does not lead to a better result in every case. 186 instances have been solely solved by \HQSnp, but the opposite is true for 27 benchmarks.
In all of these 27 instances the AIG sizes have grown during local quantifier elimination,
and processing larger AIGs resulted in larger run times.
Altogether, the size of the AIG after \textit{DQBFQuantLocalization} has been decreased in 545 cases and increased in 300 cases (in 3 modified instances the number of AIG nodes did not change),
although in general it is not unusual that the symbolic elimination of quantifiers
must be paid by increasing the sizes of AIGs.
Nevertheless, Figure~\ref{fig:hqsvshqsEQScatter} shows that in most cases the run times of
\HQSnp are faster than those of HQS (and, as already mentioned, the number of solved instances is increased 
by more than 29\%).

We also tested our algorithm on the competition benchmarks from QBFEVAL'18 to QBFEVAL'20~\cite{qbfeval18,qbfeval19,qbfeval20}.
Here the situation is pretty similar.
In those competitions 660 different benchmark instances have been used.
141 out of 660 benchmark instances reach our core algorithm and 65 of them could be solved by the original HQS algorithm.
In all those 141 instances variables are pushed into the formula,
and in 39 instances pushing enabled 3641 local eliminations of variables in total. This made it possible to newly solve 13 benchmark instances and to decrease the runtime for further 10 instances.

\section{Conclusions}
\label{sec:conclusion}
\noindent In this paper, we presented syntax and semantics of non-prenex DQBFs
and proved rules to transform prenex DQBFs into non-prenex DQBFs.
We could demonstrate that we can achieve significant improvements by extending the DQBF solver HQS based on this theory.
Simplifications of DQBFs were due to symbolic quantifier
eliminations that were enabled by pushing quantifiers into the formula based on our rules for non-prenex DQBFs.

In the future, we aim at improving the results of quantifier localization, \eg by introducing
estimates on costs and benefits of quantifier localization operations as well as local quantifier elimination and
by using limits on the growth of AIG sizes caused by local quantifier elimination.

\section*{Acknowledgements}
\label{sec:acknowledgements}
\noindent 
This work was partly supported by the German Research Council (DFG)
as part of the project ``Solving Dependency Quantified Boolean Formulas''
(WI 4490/1-1, SCHO 894/4-1) 
and by the Ministry of Education, Youth and Sports of Czech Republic project ERC.CZ no. LL1908.

\appendix
\section{Proof of Theorem~\ref{th:semantics}}
\label{app:semproof}

\semantics*

\begin{proof}
  We show that $\semdef{\psi}=\semth{\psi}$ holds by induction on the structure of $\psi$.
\begin{description}
  \item[\eqref{th:semantics:var}:]
    $v$ is a free variable in $\psi$. Therefore $\sfunc{\psi} = \{v\mapsto \fzero,v\mapsto\fone\}$.
    Only replacing $v$ by $\fone$ turns
    $\psi$ into a tautology, \ie $\semdef{\psi}=\{v\mapsto\fone\} = \semth{\psi}$.

\bigskip
  \item[\eqref{th:semantics:notvar}:]
    Like in the first case, $v$ is a free variable in $\psi$. Therefore $\sfunc{\psi} = \{v\mapsto\fzero,v\mapsto \fone\}$.
    Only replacing $v$ by $\fzero$ turns $\neg v$
    into a tautology, \ie $\semdef{\psi}=\{v\mapsto\fzero\} = \semth{\psi}$.

\bigskip
  \item[\eqref{th:semantics:and} $\psi = (\varphi_1 \land \varphi_2)$:] ~ \\
  $ \semdef{\psi} = \bigl\{ s\in\sfunc{\psi}\,\big|\,\vDash s(\psi)\bigr\} = \bigl\{ s\in\sfunc{\psi}\,\big|\,\vDash s(\varphi_1)\land s(\varphi_2)\bigr\}$. \\
  The conjunction $s(\varphi_1)\land s(\varphi_2)$ is a tautology iff both $s(\varphi_1)$ and $s(\varphi_2)$ are tautologies, \ie
  $\semdef{\psi} = \bigl\{ s\in\sfunc{\psi}\,\big|\,{\vDash s(\varphi_1)} \land {\vDash s(\varphi_2)}\bigr\}$.
  We can restrict $s$ to the variables that actually occur in the sub-formulas, \ie
  $\semdef{\psi} = \bigl\{ s\in\sfunc{\psi}\,\big|\,{\vDash s_{|\varfree[\varphi_1]\dcup\varex[\varphi_1]}(\varphi_1)} \land
            {\vDash s_{|\varfree[\varphi_2]\dcup\varex[\varphi_2]}(\varphi_2)} \bigr\}$. 
  By using Definition~\ref{def:sem} of $\semdef{\cdot}$:
  $\semdef{\psi} = \bigl\{s\in\sfunc{\psi}\,\big|\,s_{|\varfree[\varphi_1]\dcup\varex[\varphi_1]}\in\semdef{\varphi_1} \land
            s_{|\varfree[\varphi_2]\dcup\varex[\varphi_2]}\in\semdef{\varphi_2}\bigr\}$.
  Due to the induction assumption we have $\semdef{\varphi_1}=\semth{\varphi_1}$ and $\semdef{\varphi_2}=\semth{\varphi_2}$ and thus:
  $\semdef{\psi} = \bigl\{s\in\sfunc{\psi}\,\big|\,s_{|\varfree[\varphi_1]\dcup\varex[\varphi_1]}\in\semth{\varphi_1} \land
            s_{|\varfree[\varphi_2]\dcup\varex[\varphi_2]}\in\semth{\varphi_2}\bigr\}$.
  With the definition of $\semth{\cdot}$ in \eqref{th:semantics:and} we finally obtain: \\
  $\semdef{\psi} = \bigl\{s\in\sfunc{\psi}\,\big|\,s\in\semth{\psi}\bigr\} = \semth{\psi}.$

\bigskip
  \item[\eqref{th:semantics:or} $\psi = (\varphi_1 \lor \varphi_2)$:] ~\\
  This case is analogous to the previous case, however it needs an additional argument.
  Here we need the statement `The disjunction $s(\varphi_1)\lor s(\varphi_2)$ is a tautology
  iff $s(\varphi_1)$ or $s(\varphi_2)$ are tautologies' which is not true in general.
  Nevertheless, we can prove it here with the following argument:
  $s(\varphi_1)$ only contains variables from $\varall[\varphi_1]$,
  and similarly $s(\varphi_2)$ only variables from $\varall[\varphi_2]$.
  According to our assumption from Definition~\ref{def:syntax}
  $\varall[\varphi_1]\cap\varall[\varphi_2]=\emptyset$ holds.
  Therefore $s(\varphi_1)\lor s(\varphi_2)$ is a tautology iff at least one of its parts is a tautology.

\bigskip
  \item[\eqref{th:semantics:exists} $\psi =  \exists v(D_v):\varphi^{-v}$:]~\\
    $\semdef{\psi} = \bigl\{s\in\sfunc{\psi}\,\big|\,\vDash s(\exists v(D_v):\varphi^{-v})\bigr\}$.
    The first observation is that $\sfunc{\psi} = \sfunc{\varphi^{-v}}$, since
    $D_v \subseteq V\setminus(\varex[\varphi]\dcup\varall[\varphi]\dcup\{v\})$, \ie $D_v \cap \varall[\varphi] = \emptyset$, and thus
    Skolem function candidates for $v$ are restricted to constant functions,
    no matter whether $v$ is a free variable as in $\varphi^{-v}$ or an existential variable without universal variables in its dependency set as in $\psi$.
    For all other existential variables in $\varphi$, $\varphi^{-v}$ removes $v$ from the dependency sets of all existential variables, but this does not have any effect on the corresponding
    Skolem function candidates, since $v \notin \varall[\varphi]$.
    Second, for each $s\in\sfunc{\psi}$ we have $s\bigl(\exists v(D_v):\varphi\bigr)=s\bigl(\varphi^{-v}\bigr)$. Therefore we get:
    $\semdef{\psi} = \bigl\{s\in\sfunc{\varphi^{-v}}\,\big|\,\vDash s(\varphi^{-v})\bigr\} = \semdef{\varphi^{-v}}$.
    By applying the induction assumption we get
    $\semdef{\varphi^{-v}} = \semth{\varphi^{-v}}$
    and finally, because of the definition of $\semth{\cdot}$:
    $\semdef{\psi}  = \semth{\psi}$.

\bigskip
  \item[\eqref{th:semantics:forall} $\psi = \forall v:\varphi$:]~\\
    $\semdef{\psi} = \bigl\{t\in\sfunc{\psi}\,\big|\,\vDash t(\forall v:\varphi)\bigr\}
                   = \bigl\{t\in\sfunc{\psi}\,\big|\,\vDash t(\varphi)\bigr\}
                   = \bigl\{t\in\sfunc{\psi}\,\big|\,\vDash t(\varphi)[\sfrac{0}{v}]
                     \ \land \vDash t(\varphi)[\sfrac{1}{v}]\bigr\}$.
    For a function $t\in\sfunc{\psi}$, we define two functions $s_0^t,s_1^t\in\sfunc{\varphi}$ by:
        $s_0^t(v)=\fzero$, $s_1^t(v)=\fone$,
        $s_0^t(w)=s_1^t(w)=t(w)$ if $w\in\varex[\varphi]$ with $v\not\in D_v$ or $w\in\varfree[\varphi]\setminus\{v\}$, and
        $s_0^t(w)= t(w)[\sfrac{0}{v}]$, $s_1^t(w)=t(w)[\sfrac{1}{v}]$ for $w\in\varex[\varphi]$ with $v\in D_w$. Then we have:
        $t(\varphi)[\sfrac{0}{v}] = s_0^t(\varphi)$ and $t(\varphi)[\sfrac{1}{v}] = s_1^t(\varphi)$.
        $\semdef{\psi} = \bigl\{t\in\sfunc{\psi}\,\big|\,{\vDash s_0^t(\varphi)} \land {\vDash s_1^t(\varphi)}\bigr\}
                  = \bigl\{t\in\sfunc{\psi}\,\big|\,s_0^t\in\semdef{\varphi} \land s_1^t\in\semdef{\varphi}\bigr\}$.
    The induction assumption gives us: $\semdef{\varphi}=\semth{\varphi}$ and therefore:
     $\semdef{\psi} = \bigl\{t\in\sfunc{\psi}\,\big|\,s_0^t\in\semth{\varphi} \land s_1^t\in\semth{\varphi}\bigr\}$.
    With the equality $t(w) = \mathrm{ITE}\bigl(v,t(w)[\sfrac{1}{v}], t(w)[\sfrac{0}{v}]\bigr) = \mathrm{ITE}\bigl(v,s_1^t(w), s_0^t(w)\bigr)$ for $w\in\varex[\varphi]=\varex[\psi]$ with $v\in D_w$
              and $t(w)=s_0^t(w)=s_1^t(w)$ for the remaining existential or free variables, we obtain:
     \begin{align*}
     \semdef{\psi} &= \bigl\{t\in\sfunc{\psi}\,\big|\,\exists s_0,s_1\in\semth{\varphi}: s_0(v)=\fzero\land s_1(v)=\fone \\
         &\quad \land \; t(w) = s_0(w)=s_1(w)\text{ for } w\in\varfree[\psi] = \varfree[\varphi]\setminus\{v\} \\
         &\quad \land \; t(w) = s_0(w)=s_1(w)\text{ for } w\in\varex[\psi]\text{ with }v\notin D_w \\
         &\quad \land \; t(w) = \mathrm{ITE}\bigl(v,s_1(w), s_0(w)\bigr) \text{ for } w\in\varex[\psi]\text{ with }v\in D_w\bigr\} \\
         &= \semth{\psi}.
     \end{align*}
  \end{description}
\end{proof}

\section{Proof of Theorem~\ref{th:rules}}
\label{app:ruleproof}

\rules*

\begin{proof}
  \begin{description}
  \item[\eqref{equiv:indep_exists}:]
  Since the left-hand side $\exists y(D_y): \varphi$ of rule \eqref{equiv:indep_exists} is a well-defined DQBF, we
  have $\varphi = \varphi^{-y}$ according to Definition~\ref{def:syntax}.
  The rule simply follows from $\sem{\exists y(D_y):\varphi^{-y}} = \sem{\varphi^{-y}}$ shown in
  Theorem~\ref{th:semantics}.

  \item[\eqref{equiv:indep}:]
  We omit the proof here, as it immediately follows from the more general Theorem~\ref{th:equiv:indep}.

  \item[\eqref{th:rules3}:]
  The statement easily follows from Theorem~\ref{th:semantics}, \eqref{th:semantics:forall} considering
  that $\varphi$ contains only free variables.
  Let $\psi_1 \colonequals \forall x : \varphi$ and
  $\psi_2 \colonequals \varphi[\sfrac{0}{x}] \land \varphi[\sfrac{1}{x}]$.
  We have $\sfunc{\psi_1} = \sfunc{\psi_2}$ and
  $\sem{\psi_1}
  = \bigl\{ t\in\sfunc{\psi_1}\,\big|\,
          \exists s_0, s_1\in\sem{\varphi}:  s_0(x)=\fzero\land s_1(x)=\fone \; \land
          \forall w\in\varfree[\psi_1]: t(w)\colonequals s_0(w) = s_1(w) \bigr\}
  = \sem{\psi_2}$.

  \item[\eqref{equiv:exists}:]
  Let $\psi_1 \colonequals \exists y(D_y) : \varphi$ and
  $\psi_2 \colonequals \varphi[\sfrac{0}{y}] \lor \varphi[\sfrac{1}{y}]$.

  Assume that $\sem{\psi_1}\neq\emptyset$ and let $t\in\sem{\psi_1}$.
  $t(y)$ has to be a constant function, \ie $t(y) = \fzero$ or $t(y) = \fone$.
  We choose a Skolem function $t'$ for $\psi_2$ by
  $t'(v) = t(v)$ for all $v\in \varfree[\psi_2]$.
  It is clear that $t'$ is a Skolem function candidate for $\psi_2$.
  Assume \wlogen that $t(y) = \fzero$.
  Since $t\bigl(\exists y(D_y) : \varphi\bigr) = t\bigl(\varphi\bigr)$ is a tautology,
  $t'\bigl(\varphi[\sfrac{0}{y}]\bigr) = t\bigl(\varphi\bigr)$ is a tautology, too.
  Thus, $t'\bigl(\varphi[\sfrac{0}{y}] \lor \varphi[\sfrac{1}{y}]\bigr)
     = t'\bigl(\varphi[\sfrac{0}{y}]\bigr) \lor t'\bigl(\varphi[\sfrac{1}{y}]\bigr)$
  is a tautology. This shows that $t'\in\sem{\psi_2}$ and therefore $\sem{\psi_2}\neq\emptyset$.

  For the opposite direction assume that $\sem{\psi_2}\neq\emptyset$ and let $t'\in\sem{\psi_2}$.
  $t'\bigl(\varphi[\sfrac{0}{y}] \lor \varphi[\sfrac{1}{y}]\bigr)$ is a tautology.
  Since $\varphi$ contains only free variables, $t'\bigl(\varphi[\sfrac{0}{y}]\bigr)$ and
  $t'\bigl(\varphi[\sfrac{1}{y}]\bigr)$ are (equivalent to) constants. At least one of them is $\fone$,
  assume \wlogen $t'\bigl(\varphi[\sfrac{0}{y}]\bigr)$. Now we choose
  $t(v) = t'(v)$ for all $v \in \varfree[\psi_2] \setminus \{y\}$ and $t(y) = \fzero$.
  Since $t\bigl(\exists y(D_y) : \varphi\bigr) = t\bigl(\varphi\bigr) = t'\bigl(\varphi[\sfrac{0}{y}]\bigr) = \fone$,
  we have $t\in\sem{\psi_1}$ and therefore $\sem{\psi_1}\neq\emptyset$.

 \item[\eqref{equiv:forall_and}:]
 We omit the proof here, since it immediately follows from the more general Theorem~\ref{th:forall_and}.

 \item[\eqref{equiv:forall_and2}:]
 We omit the proof here, since it immediately follows from the more general Theorem~\ref{th:equisat:forall_and}.

 \item[\eqref{equiv:forall_or}:]
   Let $\psi_1\colonequals\forall x:(\varphi_1\op \varphi_2)$ and $\psi_2\colonequals\bigl(\varphi_1\op (\forall x:\varphi_2)\bigr)$
   and assume that $x\notin  V_{\varphi_1}$ and $x\notin D_y$ for any $y\in\varex[\varphi_1]$.
   Note that we need $x\notin  V_{\varphi_1}$, since otherwise $\psi_2$ would not be
   well-formed according to Definition~\ref{def:syntax}.
   From $x\notin D_y$ for any $y\in\varex[\varphi_1]$ we conclude that $\sfunc{\psi_1}=\sfunc{\psi_2}$.
   Then we have:
   $\sem{\psi_1}
    = \bigl\{s\in\sfunc{\psi_1}\,\big|\,\vDash s(\forall x:(\varphi_1\op \varphi_2))\bigr\}
    = \bigl\{s\in\sfunc{\psi_1}\,\big|\,\vDash s(\varphi_1\op \varphi_2)\bigr\}
    = \bigl\{s\in\sfunc{\psi_1}\,\big|\,\vDash s(\varphi_1 \op (\forall x: \varphi_2))\bigr\}
    = \bigl\{s\in\sfunc{\psi_2}\,\big|\,\vDash s(\varphi_1 \op (\forall x: \varphi_2))\bigr\}$,
    since $\sfunc{\psi_1}=\sfunc{\psi_2}$, and finally $\sem{\psi_1} = \sem{\psi_2}$.

 \item[\eqref{equiv:exists_or}:]
 We set $\psi_1\colonequals \exists y(D_y):(\varphi_1\lor\varphi_2)$
 and  $\psi_2\colonequals \bigl(\exists y(D_y):\varphi_1\bigr)\lor\bigl(\exists y'(D_y):\varphi_2[\sfrac{y'}{y}]\bigr)$.
 \begin{align*}
 \sem{\psi_1} &= \bigl\{t\in\sfunc{\psi_1}\,\big|\,{\vDash t(\exists y(D_y):(\varphi_1\lor\varphi_2))} \bigr\} \\
 &= \bigl\{t\in\sfunc{\psi_1}\,\big|\,{\vDash t(\varphi_1\lor\varphi_2)}\bigr\} \\
 &= \bigl\{t\in\sfunc{\psi_1}\,\big|\,{\vDash t(\varphi_1)} \lor {\vDash t(\varphi_2)}\bigr\}
 \intertext{The last equality holds, because the variables occurring in $t(\varphi_1)$ and $t(\varphi_2)$ are disjoint.
 On the other hand we have}
 \sem{\psi_2} &= \bigl\{t'\in\sfunc{\psi_2}\,\big|\,{\vDash t'((\exists y(D_y):\varphi_1)\lor(\exists y'(D_y):\varphi_2[\sfrac{y'}{y}]))} \bigr\} \\
 &= \bigl\{t'\in\sfunc{\psi_2}\,\big|\,{\vDash t'(\varphi_1 \lor \varphi_2[\sfrac{y'}{y}])}\bigr\} \\
 &= \bigl\{t'\in\sfunc{\psi_2}\,\big|\,{\vDash t'(\varphi_1)} \lor {\vDash t'(\varphi_2[\sfrac{y'}{y}])}\bigr\}.
 \end{align*}
 Again, the last equality holds, because the variables occurring in $t'(\varphi_1)$ and $t'\bigl(\varphi_2[\sfrac{y'}{y}]\bigr)$ are disjoint.

 Assume that $\sem{\psi_1}\neq\emptyset$ and let $t\in\sem{\psi_1}$.
 Then $t(\varphi_1)$ or $t(\varphi_2)$ is a tautology.
 We choose a Skolem function $t'$ for $\psi_2$ by
 $t'(v) = t(v)$ for all
 $v\in\varfree[\psi_1]\dcup\varex[\psi_1]$
 and
 $t'(y') = t(y)$. (Note that $t(y)$ as well as $t'(y')$ have to be constant functions according
 to Def.~\ref{def:skolem_function_candidates}.)
 If $t(\varphi_1)$ is a tautology, then $t'(\varphi_1) = t(\varphi_1)$ is a
 tautology as well, $t'\in\sem{\psi_2}$ and therefore $\sem{\psi_2}\neq\emptyset$.
 If $t(\varphi_1)$ is not a tautology, then $t(\varphi_2)$ has to be a tautology and
 $t'\bigl(\varphi_2[\sfrac{y'}{y}]\bigr) = t(\varphi_2)$ is a tautology as well,
 $t'\in\sem{\psi_2}$ and therefore $\sem{\psi_2}\neq\emptyset$.

 For the opposite direction assume $\sem{\psi_2}\neq\emptyset$ and let $t'\in\sem{\psi_2}$.
 Then $t'(\varphi_1)$ or $t'(\varphi_2[\sfrac{y'}{y}])$ is a tautology.
 If $t'(\varphi_1)$ is a tautology, we choose
 the Skolem function for $\psi_1$ by
 $t(v) = t'(v)$ for all $v\in\varfree[\psi_1]\dcup\varex[\psi_1]$.
 It immediately follows that $t(\varphi_1) = t'(\varphi_1)$ is a tautology as well,
 $t\in\sem{\psi_1}$ and therefore $\sem{\psi_1}\neq\emptyset$.
 If $t'(\varphi_1)$ is not a tautology, then $t'(\varphi_2[\sfrac{y'}{y}])$ has to be a tautology
 and we choose  the Skolem function for $\psi_1$ by
 $t(v) = t'(v)$ for all $v\in (\varfree[\psi_1]\dcup\varex[\psi_1]) \setminus \{y\}$
 and $t(y) = t'(y')$.
 Then $t(\varphi_2) = t'\bigl(\varphi_2[\sfrac{y'}{y}]\bigr)$ is also a tautology
 and again $t\in\sem{\psi_1}$ and therefore $\sem{\psi_1}\neq\emptyset$.

 \item[\eqref{equiv:exists_and}:]
   Let $\psi_1 \colonequals\exists y (D_y):(\varphi_1 \op \varphi_2)$ and
   $\psi_2\colonequals\varphi_1\op \bigl(\exists y(D_y):\varphi_2\bigr)$.
   Note that we need $y\not\in\var[\varphi_1]$, since otherwise $\psi_2$ would not be
   well-formed according to Definition~\ref{def:syntax}.
   The following equalities hold:
   \begin{align*}
      \sem{\psi_1} &= \bigl\{s\in\sfunc{\psi_1}\,\big|\,{\vDash s(\exists y(D_y):(\varphi_1\op\varphi_2))}\bigr\} \\
      &= \bigl\{s\in\sfunc{\psi_1}\,\big|\,{\vDash s(\varphi_1\op\varphi_2)}\bigr\} \\
      &= \bigl\{s\in\sfunc{\psi_1}\,\big|\,\vDash s(\varphi_1\op (\exists y(D_y):\varphi_2)) \bigr\} \\
      &= \bigl\{s\in\sfunc{\psi_2}\,\big|\,\vDash s(\varphi_1\op (\exists y(D_y):\varphi_2)) \bigr\}
         \text{ since $\sfunc{\psi_1}=\sfunc{\psi_2}$} \\
      &= \sem{\psi_2}\,.
   \end{align*}

 \item[\eqref{equiv:exists_exists}:]
   By applying Theorem~\ref{th:semantics}, Equation~\eqref{th:semantics:exists} multiple times, we get:
   $\sem{\exists y_1(D_{y_1})\exists y_2(D_{y_2}):\varphi}
            = \sem{\exists y_2(D_{y_2}):\varphi}
            = \sem{\varphi}
            = \sem{\exists y_1(D_{y_1}):\varphi}
            = \sem{\exists y_2(D_{y_2})\exists y_1(D_{y_1}):\varphi}$.

 \item[\eqref{equiv:forall_forall}:]
   We set
   $\psi_1\colonequals\forall x_1 \forall x_2:\varphi$ and $\psi_2\colonequals\forall x_2\forall x_1:\varphi$. Then we have:
   $\sem{\psi_1} = \sem{\forall x_1 \forall x_2:\varphi}
      = \bigl\{ s\in\sfunc{\psi_1}\,\big|\, {\vDash s(\forall x_1\forall x_2:\varphi)} \bigr\}
      = \bigl\{ s\in\sfunc{\psi_1}\,\big|\, {\vDash s(\varphi)} \bigr\}
      = \bigl\{ s\in\sfunc{\psi_2}\,\big|\, {\vDash s(\varphi)} \bigr\}$,
      since $\sfunc{\psi_1}=\sfunc{\psi_2}$, and then
   $\sem{\psi_1} = \bigl\{ s\in\sfunc{\psi_2}\,\big|\, {\vDash s(\forall x_2\forall x_1:\varphi)} \bigr\} = \sem{\psi_2}$.

 \item[\eqref{equiv:forall_exists}:]
   We set $\psi_1\colonequals \forall x\exists y(D_y):\varphi$ and $\psi_2\colonequals \exists y(D_y)\forall x:\varphi$.

   First note that $\exists y(D_y)\forall x:\varphi$ is not well-formed according to
   Definition~\ref{def:syntax} if $x\in D_y$, because $x$ is universal in $\forall x:\varphi$.
   With $x\notin D_y$ we show that $\sem{\psi_1}=\sem{\psi_2}$.
   We have:
   $\sem{\psi_1} = \bigl\{ s\in\sfunc{\psi_1}\,\big|\,{\vDash s(\forall x\exists y(D_y):\varphi)} \bigr\}
   = \bigl\{ s\in\sfunc{\psi_1}\,\big|\,{\vDash s(\varphi)} \bigr\}$.
   Because $x\notin D_y$, the Skolem function candidates for $y$ in $\psi_1$ are restricted to constant functions.
   The same holds for $y$ in $\psi_2$. Therefore $\sfunc{\psi_1}=\sfunc{\psi_2}$ is true. So we can write:
   $\sem{\psi_1} = \bigl\{ s\in\sfunc{\psi_2}\,\big|\,{\vDash s(\varphi)} \bigr\}
   = \bigl\{ s\in\sfunc{\psi_2}\,\big|\,{\vDash s(\exists y(D_y)\forall x:\varphi)}\bigr\}
   = \sem{\psi_2}$.
\end{description}
\end{proof}

\section{Proof of Theorem~\ref{th:rules2}}
\label{app:rules2}

\rulesTwo*

\begin{proof}
We show equisatisfiability by proving that $\sem{\psi'} \neq \emptyset$ implies
$\sem{\psi} \neq \emptyset$ and vice versa.
First assume that there is a Skolem function $s'\in\sem{\psi'}$ with $\vDash s'(\psi')$.
We define $s \in \sfunc{\psi}$ by $s(v) \colonequals s'(v)$ for all $v \in \varex[\psi'] \cup \varfree[\psi'] \setminus \{y\}$
and $s(y) \colonequals s'(\varphi[\sfrac{1}{y}])$.
Since $\varphi$ contains only variables from $D_y \cup \varfree[\psi] \cup \{v \in \varex[\psi] \; | \; D_v \subseteq D_y\}$,
$\support\bigl(s(y)\bigr) = \support\bigl(s'(\varphi[\sfrac{1}{y}])\bigr) \subseteq D_y$, \ie $s \in \sfunc{\psi}$.
By definition of $s(y)$, $s(\psi)$ is the same as $s'(\psi'')$ where
$\psi''$ results from $\psi$ by replacing the subformula
$\varphi$ by
$\varphi[\sfrac{\varphi[\sfrac{1}{y}]}{y}]$. According to \cite{Jiang09}, quantifier elimination can be done by composition as well
and $\varphi[\sfrac{\varphi[\sfrac{1}{y}]}{y}]$ is equivalent to $\varphi[\sfrac{0}{y}] \lor \varphi[\sfrac{1}{y}]$,
\ie $s(\psi) = s'(\psi')$ and thus $\vDash s(\psi)$.

Now assume $s \in  \sem{\psi}$ with $\vDash s(\psi)$.
Consider $s'$ which results from $s$ by removing $y$ from the domain of $s$.
Then $s'(\varphi)$ can be regarded as a Boolean function depending on $D_y \cup \{y\}$.
$s(\varphi) = s'(\varphi)[\sfrac{s(y)}{y}]$ is a function which
(1) does not depend on $y$ and which (2) has the property that
for each assignment $\mu$ to the variables from $D_y \cup \{y\}$
$\mu\bigl(s'(\varphi)[\sfrac{s(y)}{y}]\bigr) = \mu\bigl(s'(\varphi)\bigr)$
or
$\mu'\bigl(s'(\varphi)[\sfrac{s(y)}{y}]\bigr) = \mu'\bigl(s'(\varphi)\bigr)$
with $\mu'$ resulting from $\mu$ by flipping the assignment to $y$.
$s'(\varphi)[\sfrac{s'(\varphi)[\sfrac{1}{y}]}{y}]$ which corresponds
to the existential quantification of $y$ in $s'(\varphi)$
is \emph{the largest function} fulfilling (1) and (2), \ie
$s(\varphi) \leq s'(\varphi)[\sfrac{s'(\varphi)[\sfrac{1}{y}]}{y}]$.
We derive $s''$ from $s$ by replacing $s(y)$ by $s'(\varphi)[\sfrac{1}{y}]$
and obtain also $s(\psi) \leq s''(\psi)$,
since $\psi$ is in NNF, \ie contains negations only at the inputs, thus
$\varphi$ is not in the scope of any negation in $\psi$, but only in the scope of
conjunctions and disjunctions which are monotonic functions.
Thus $\vDash s(\psi)$ implies $\vDash s''(\psi)$.
Again, due to the equivalence of
$\varphi[\sfrac{\varphi[\sfrac{1}{y}]}{y}]$ and $\varphi[\sfrac{0}{y}] \lor \varphi[\sfrac{1}{y}]$,
we conclude $s''(\psi) = s'(\psi')$ and thus $\vDash s'(\psi')$.
\end{proof}

\section{Proof of Theorem~\ref{th:forall_and}}
\label{app:forall_and}

\forallAnd*

\begin{proof}
First, we assume that $\sem{\psi}\neq\emptyset$ and let $t\in\sem{\psi}$, \ie
$t(\psi)$ is a tautology. For $\psi'$ we construct a Skolem function $t'$ by
$t'(v) \colonequals t(v)[\sfrac{x'}{x}]$ for all $v\in\varex[\varphi_2]$,
$t'(v) \colonequals t(v)$ otherwise. It is easy to see that $t'$ is a Skolem function candidate
for $\psi'$.

Now assume that $t'(\psi')$ is not a tautoloy, \ie there is an assignment
$\mu' \in \assign(\varall[\psi'])$ with $\mu'\bigl(t'(\psi')\bigr) = 0$.
Since $\mu'\bigl(t(\psi)\bigr) = 1$, we have
$\mu'\bigl(t(\psi_1)\bigr) = 1$ and $\mu'\bigl(t'(\psi_2)\bigr) = 0$
according to Lemma~\ref{lemma:monotonic}.
With $\psi_2 = \bigl(\forall x:\varphi_1\bigr)\land\bigl(\forall x':\varphi_2[\sfrac{x'}{x}]\bigr)$
we obtain $\mu'\bigl(t'(\forall x:\varphi_1)\bigr) = \mu'\bigl(t'(\varphi_1)\bigr) = 0$ or
$\mu'\bigl(t'(\forall x':\varphi_2[\sfrac{x'}{x}])\bigr) = \mu'\bigl(t'(\varphi_2[\sfrac{x'}{x}])\bigr) = 0$.
In the first case we
have $\mu'\bigl(t(\varphi_1)\bigr) = \mu'\bigl(t'(\varphi_1)\bigr) = 0$ which contradicts
$\mu'\bigl(t(\psi_1)\bigr) = \mu'\bigl(t(\varphi_1\land\varphi_2)\bigr) = \mu'\bigl(t(\varphi_1)\bigr) \land \mu'\bigl(t(\varphi_2)\bigr) = 1$.
In the second case we define
$\mu \in \assign(\varall[\psi])$ by $\mu(v) = \mu'(v)$ for all $v \in \varall[\psi] \setminus \{x\}$
and $\mu(x) = \mu'(x')$.
In this case we obtain
$\mu\bigl(t(\varphi_2)\bigr) = \mu'\bigl(t'(\varphi_2[\sfrac{x'}{x}]))\bigr) = 0$.
Thus we obtain
$\mu\bigl(t(\forall x:(\varphi_1\land\varphi_2))\bigr) =  \mu\bigl(t(\varphi_1) \land t(\varphi_2)\bigr)
= \mu\bigl(t(\varphi_1)\bigr) \land \mu\bigl(t(\varphi_2)\bigr) = 0$.
Since $\psi$ and $\psi'$ only
differ in the $\psi_1$-/$\psi_2$-part and the only occurrences of $x$ in $t(\psi)$ are in $t(\psi_1)$,
this leads to
$\mu\bigl(t(\psi)\bigr) = \mu'\bigl(t'(\psi')\bigr) = 0$, which is a contradiction to the fact
that $t(\psi)$ is a tautology. Thus, $t'(\psi')$ has to be a tautology as well
and $\sem{\psi'}\neq\emptyset$.

For the opposite direction we assume that
$\sem{\psi'}\neq\emptyset$ and let $t'\in\sem{\psi'}$.
We obtain $\sem{\psi}\neq\emptyset$ with similar arguments:
We construct a Skolem function $t$ for $\psi$ as follows:
$t(v) = t'(v)[\sfrac{x}{x'}]$ for $v\in\varex[{\varphi_2}]$
and $t(v) = t'(v)$ otherwise.
Assume that $t(\psi)$ is not a tautology, \ie there is an assignment
$\mu \in \assign(\varall[\psi])$ with $\mu\bigl(t(\psi)\bigr) = 0$ and
define $\mu' \in \assign(\varall[\psi'])$ by
$\mu'(v) \colonequals \mu(v)$ for all $v \in \varall[\psi]$ as well as $\mu'(x') \colonequals \mu(x)$.
$\mu'\bigl(t'(\psi')\bigr) = 1$ because $t'(\psi')$ is a tautology.
We obtain $\mu'\bigl(t'(\psi_2)\bigr) = 1$ and $\mu'\bigl(t(\psi_1)\bigr) = \mu\bigl(t(\psi_1)\bigr) = 0$
as above by Lemma~\ref{lemma:monotonic}.
From $\mu\bigl(t(\psi_1)\bigr) = 0$ we conclude $\mu\bigl(t(\varphi_1)\bigr) = 0$ or
$\mu\bigl(t(\varphi_2)\bigr) = 0$.
Since $\mu'\bigl(t'(\varphi_1)\bigr) = \mu\bigl(t(\varphi_1)\bigr)$ and
$\mu'\bigl(t'(\varphi_2[\sfrac{x'}{x}])\bigr) = \mu\bigl(t(\varphi_2)\bigr)$,
this implies
$\mu'\bigl(t'((\forall x:\varphi_1)\land(\forall x':\varphi_2[\sfrac{x'}{x}]))\bigr) = 0$
which contradicts $\mu'\bigl(t'(\psi_2)\bigr) = 1$ derived above.
Thus, $t(\psi)$ is a tautology and $\sem{\psi}\neq\emptyset$.
\end{proof}

\bibliographystyle{elsarticle-num}
\bibliography{abbrev,literature}
\end{document}